\newtheorem{theorem}{Theorem}
\newtheorem{lemma}{Lemma}
\newtheorem{definition}{Definition}
\newcommand{\stkout}[1]{\ifmmode\text{\sout{\ensuremath{#1}}}\else\sout{#1}\fi}
\newcommand{\eq}[1]{\begin{align}#1\end{align}}
\begin{abstract}
Universality in physics describes how disparate systems can exhibit identical low-energy behavior. Here, we reveal a rich landscape of new universal scattering phenomena governed by the interplay between an interaction and a system's density of states. We investigate one-dimensional scattering with general dispersion relations of the form \(\epsilon(k) = |k|^m\) and \(\epsilon(k) = \text{sign}(k)|k|^m\) for any real \(m \geq 1\). For key models such as emitter scattering and separable potentials, we prove that the low-energy S-matrix converges to universal forms determined solely by the dispersion exponent \(m\) and a few integers defining the interaction. This establishes a broad classification of new universality classes, extending far beyond the standard quadratic dispersion paradigm. Furthermore, we derive a generalized Levinson's theorem relating the total winding of the scattering phase to the number of bound states. Our findings are directly relevant to synthetic quantum systems, where engineered dispersion relations in atomic arrays and photonic crystals offer a platform to explore these universal behaviors.

\end{abstract}
\begin{document}

\title{Landscape of scattering universality with general dispersion relations 
}
\author{Yidan Wang}
\affiliation{Department of Physics, Harvard University, Cambridge, Massachusetts 02138, USA}
\author{Xuesen Na}
\affiliation{Department of Mathematics, University of Illinois Urbana-Champaign, Urbana, IL 61801}
\author{Michael J. Gullans}
\affiliation{Joint Center for Quantum Information and Computer Science, University of Maryland and NIST, College Park, Maryland 20742, USA}
\author{Susanne F. Yelin}
\affiliation{Department of Physics, Harvard University, Cambridge, Massachusetts 02138, USA}
\author{Alexey V. Gorshkov}
\affiliation{Joint Center for Quantum Information and Computer Science, University of Maryland and NIST, College Park, Maryland 20742, USA}
\affiliation{Joint Quantum Institute, University of Maryland and NIST, College Park, Maryland 20742, USA}

\maketitle
A central pillar of modern physics is universality, the principle that systems with vastly different microscopic details can exhibit identical long-wavelength behavior. Identifying and categorizing such universal behaviors is essential across many fields, from the scaling laws of critical phenomena in statistical mechanics \cite{WilsonRMP, Kadanoff1966} to the low-energy dynamics of quantum field theories \cite{WeinbergEFT, PeskinSchroederQFT}.

\begin{figure}
\includegraphics[width=1\linewidth]{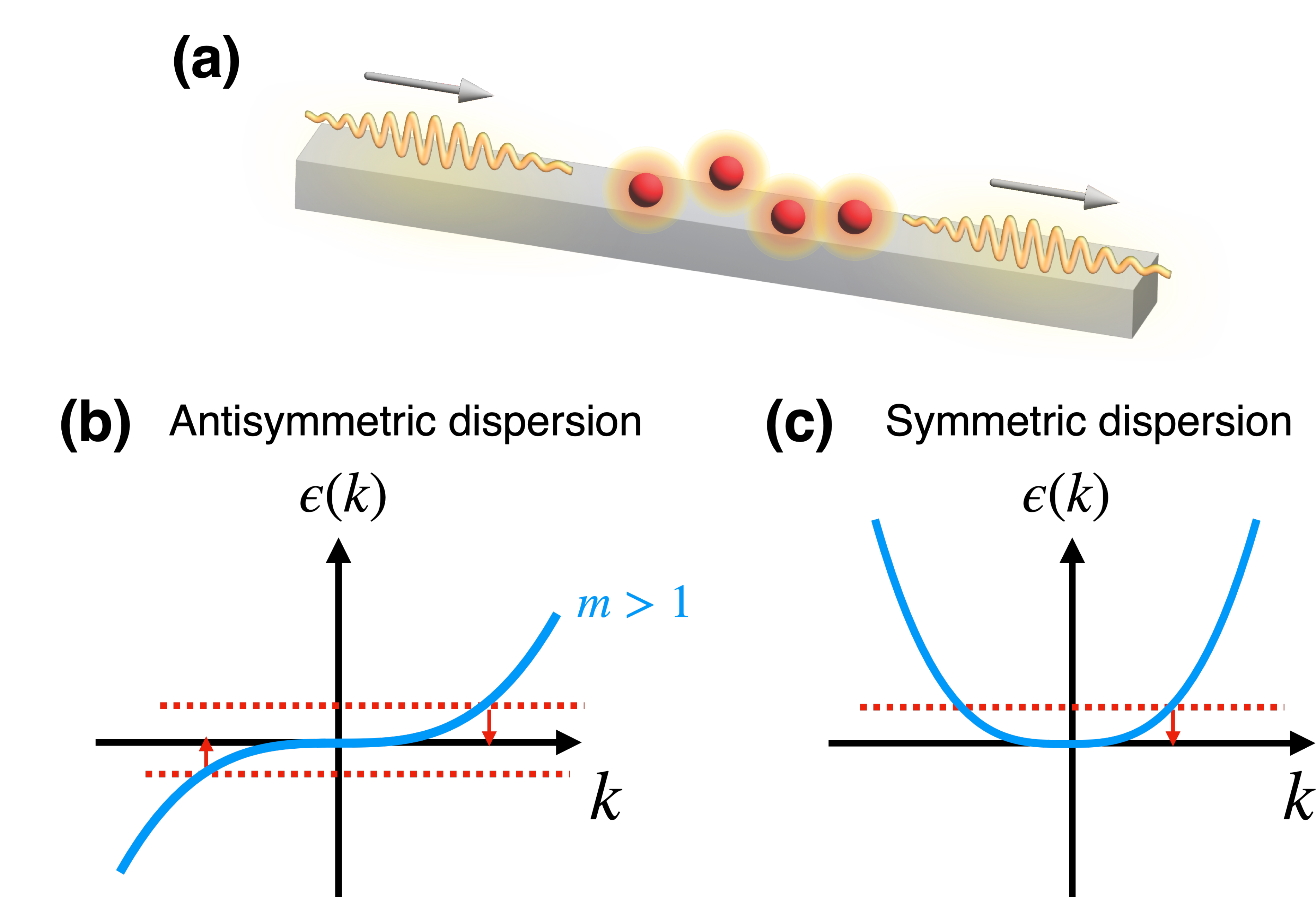} \label{fig:disp_scatt}
\caption{ 
Universal scattering in a 1D waveguide with a diverging density of states (DOS). 
\textbf{(a)}~Schematic of a single photon scattering from quantum emitters. 
\textbf{(b), (c)}~The general dispersion relations studied: \textbf{(b)}~antisymmetric, $\epsilon(k)=\mathrm{sign}(k)|k|^m$, and \textbf{(c)}~symmetric, $\epsilon(k)=|k|^m$. For $m>1$, the group velocity vanishes at the band edge ($k=0$), causing the DOS to diverge. The red arrows indicate the low-energy limit ($E \to 0$) where this universal behavior emerges.
}

\end{figure}

This concept plays an equally crucial role in few-body quantum scattering. At low energies, massive particles are governed by a quadratic dispersion, \(\epsilon(k) \propto k^2\). As the scattering energy approaches zero, the outcome becomes independent of the intricate details of the interaction potential \cite{LandauLifshitzQM, TaylorScattering}. This allows low-energy scattering to be universally described by just a few parameters, such as the scattering length, which dictates most low-energy properties \cite{Bedaque2002, Bethe1949}. This universal behavior, however, depends critically on the spatial dimension. In one dimension, particles exhibit perfect reflection, whereas in three dimensions, they undergo perfect transmission  \cite{Griffiths2018}. These distinct outcomes are fundamentally linked to the density of states, \(\rho(E)\), which diverges at the scattering threshold in one dimension but vanishes in three.

This motivates a central question: can novel universal scattering phenomena be engineered by controlling the divergence of the density of states, even while holding the spatial dimension fixed? Our recent work \cite{wang2022universal} affirmed this, demonstrating that a rich variety of universal S-matrix values emerge from general dispersion relations. However, that study focused on a constrained class of interactions, leaving the robustness of these new universality classes an open question. In this Letter, we investigate scattering
due to generic emitter interactions [see Fig.\ \ref{fig:disp_scatt}(a)] and from separable potentials. We not only confirm the foundational principles of Ref.~\cite{wang2022universal} but also reveal a richer picture where the structure of the interaction is as crucial as the density of states. We find that the competition between kinetic and interaction energies gives rise to a vast landscape of universal behaviors, of which the findings in Ref.~\cite{wang2022universal} represent a special subset.

The exploration of such general dispersions is not merely a theoretical exercise. It is directly motivated by recent experimental advances in synthetic quantum matter. Dispersion relations can now be precisely engineered in systems of tunable periodic structures, including photonic crystal waveguides \cite{Joannopoulos:08:Book, Goban2014, Hood2016}, atomic arrays coupled to light \cite{endres2016atom, barredo2016atom, Rui2020}, and quantum circuits such as superconducting qubit arrays \cite{VanLoo2013, sundaresan2019interacting} and trapped-ion chains \cite{porras2004effective, debnath2018observation}. These platforms provide an unprecedented opportunity to experimentally realize and probe the new universality classes we report.

The Hamiltonian for one-dimensional scattering is
\begin{equation}
H = \int_{-\infty}^{+\infty} dk\, \epsilon(k) C^\dagger(k) C(k) + V. \label{eqHamil}
\end{equation}
Here, $\epsilon(k)$ is the dispersion relation and $V$ describes interaction with localized scatterers or potentials. $C^\dagger(k)$ creates a propagating particle (referred to as a ``photon'') with momentum $k$ \footnote{The statistics are irrelevant as we consider only single-particle scattering. If Eq.~\eqref{eqHamil} models two indistinguishable particles in the center-of-mass frame, $H$ is symmetric under spatial inversion.}. Studying this Hamiltonian for general $\epsilon(k)$ and $V$ reveals a new landscape of universal behaviors.

First, we consider the dispersion relation. We investigate two general forms: $\epsilon(k) = |k|^m$ (symmetric) and $\epsilon(k) = \text{sign}(k)|k|^m$ (antisymmetric), where $m > 1$ is any real exponent [see Figs.~\ref{fig:disp_scatt}(c),(b)]. These serve as local (in $k$-space) approximations to realistic band structures: near a band edge, an analytic band structure is captured by the symmetric form with even $m$, while near a saddle point, an analytic band structure  is captured by the antisymmetric form with odd $m$. The universal phenomena reported here hold for all real $m > 1$, and are insensitive to the dispersion details away from threshold. For both cases, the density of states, $\rho(E) \propto |E|^{1/m-1}$, diverges as $E \to 0$.

Second, we consider the interaction term $V$, focusing on two broad, analytically solvable classes: emitter scattering and separable potentials (see Appendix). Emitter scattering is common in synthetic quantum systems [see Fig.~\ref{fig:disp_scatt}(a)], describing, for example, photon absorption and re-emission by a two-level atom. Separable potentials are widely used as effective models for local potential scattering, capturing complex two-body interactions such as atomic collisions \cite{haidenbauer1983separable}. While local potential scattering will be discussed in a subsequent paper \cite{wang2025potential}, the analysis of separable potentials here provides essential groundwork.

For emitter scattering, the interaction term for a system with \(N\) emitters is explicitly given by
\begin{equation}
V = \sum_{i,j=1}^N K^R_{ij} b_{i}^{\dagger} b_{j} + \int_{-\infty}^{+\infty} dk \left[\sum_{i=1}^N V_i(k) C(k) b^{\dagger}_{i} + \text{h.c.}\right].
\end{equation}
Here, the complex functions \(V_i(k)\) are the  emitter coupling coefficients, assumed to be analytic at \(k=0\). The \(N \times N\)  matrix \(\boldsymbol{K}^R\) (with complex matrix elements $K^R_{ij}$)  describes  coherent and incoherent interactions among the emitters induced by their environment.

Our analysis begins by examining antisymmetric dispersion in detail, first for a single emitter and then for the general multi-emitter case. We then present final results for symmetric dispersion, with their derivation deferred to the Supplemental Material (SM).

\emph{Single Emitter.}---We first consider   the foundational case of a single emitter, defined by a coupling coefficient $V(k)$ and a complex number $K^R$. This example introduces the key concepts and calculational tools in their simplest form. For antisymmetric dispersion, scattering of a particle with energy $E$ is described by a single transmission coefficient, $S(E)$, given by the Jost function, $J(\omega)$~\cite{wang2018single, wang2022universal}:
\begin{equation}
S(E) = \frac{J(E-i0)}{J(E+i0)}. \label{eqdetSSep0}
\end{equation}
The Jost function, $J(\omega) = K(\omega) + K^R - \omega$, is central to the theory; its zeros correspond to bound state energies. It is constructed from the self-energy integral
\begin{equation}
K(\omega) = \int_{-\infty}^{+\infty} dk\,\frac{|V(k)|^2}{\omega - \epsilon(k)}, \label{eqKdefiN1}
\end{equation}
which is analytic in the complex $\omega$-plane except for a branch cut along the continuum. The complex $K^R$ represents the emitter's self-energy from environmental couplings: its real part gives an energy shift and its imaginary part gives incoherent decay or dissipation. The transmission probability is either conserved or lost, $|S(E)|\leq 1$, because the imaginary parts of both waveguide-induced self-energy ($\mathrm{Im}[K(E+i0)]$) and environment-induced self-energy ($\mathrm{Im}[K^R]$) are non-positive.

Suppose the low-energy interaction admits the Taylor expansion
\begin{equation}
V(k) = c k^n + o(k^n),
\end{equation}
with \(c \neq 0\) and non-negative integer \(n\). When \(n=0\), \(V(0)\) is a nonzero constant, a case explored in Ref.~\cite{wang2022universal}. For \(n \ge 1\), the exponent indicates how quickly the interaction vanishes at zero momentum. We define a critical order \(n_c = \frac{m-1}{2}\), which sets the threshold for infrared divergence in Eq.~\eqref{eqKdefiN1}. The physics is then governed by the competition between the interaction (via \(n\)) and the kinetic energy (via \(n_c\)): the interaction is classified as \emph{interior} (\(n < n_c\)), \emph{borderline} (\(n = n_c\)), or \emph{trivial} (\(n > n_c\)).

\begin{lemma} \label{lemmaSingleTypeAsym}
For a single emitter with antisymmetric dispersion, the zero-energy limits of the S-matrix are determined as follows:
\begin{itemize}
    \item For interior interactions (\(n < n_c\)), the limit is universal:
    \begin{equation}
        S(0^\pm) = \exp\left(\pm \pi i \frac{2n+1}{m}\right). \label{eqS0pmN1}
    \end{equation}
    \item For borderline interactions (\(n = n_c\)), \(S(0^\pm)\) are equal and non-universal.
    \item For trivial interactions (\(n > n_c\)), the limit is trivially universal: \(S(0^\pm)= 1\).
\end{itemize}
\end{lemma}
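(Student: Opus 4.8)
The plan is to reduce everything to the infrared (small-$k$) behavior of the self-energy $K(\omega)$, since by \eqref{eqdetSSep0} and the definition $J(\omega)=K(\omega)+K^R-\omega$, the limits $S(0^\pm)$ are controlled entirely by how $K(E\pm i0)$ behaves as $E\to 0^\pm$. Writing $\epsilon(k)=\mathrm{sign}(k)|k|^m$ and splitting the integral in \eqref{eqKdefiN1} at a fixed cutoff $|k|=\delta$, the region $|k|>\delta$ contributes a function that is analytic and bounded at $\omega=0$, so only the neighborhood of the band edge matters. There I would substitute the leading behavior $|V(k)|^2\to|c|^2k^{2n}$ and rescale $k=|\omega|^{1/m}u$, which exhibits the scaling
\[
K(\omega)\simeq 2|c|^2\,|\omega|^{(2n+1-m)/m}\,e^{i\theta}\!\int_0^\infty\!\frac{u^{2n}\,du}{e^{2i\theta}-u^{2m}},\qquad \omega=|\omega|e^{i\theta}.
\]
The exponent $(2n+1-m)/m$ vanishes exactly at $n_c=(m-1)/2$, which already organizes the three regimes: the prefactor diverges for $n<n_c$, is $O(1)$ for $n=n_c$, and vanishes for $n>n_c$. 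The subleading $o(k^n)$ corrections scale with a strictly larger power of $|\omega|$ and are thus negligible in the limit.

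The second step is to evaluate the scaling integral in closed form. Setting $t=u^{2m}$ reduces it to the standard Mellin integral $\int_0^\infty t^{\mu-1}/(a-t)\,dt$ with $\mu=(2n+1)/(2m)\in(0,1)$ and $a=e^{2i\theta}$, which evaluates to a constant times $(-a)^{\mu-1}/\sin(\pi\mu)$. The only delicate point, and the step I expect to be the main obstacle, is the branch bookkeeping: I must track $\arg(-e^{2i\theta})$ together with the prefactor $e^{i\theta}$ as $\omega$ approaches the origin along each of the four relevant rays, namely just above and below the positive real axis (for $0^+$) and just above and below the negative real axis (for $0^-$). Because the continuum fills the entire real axis for antisymmetric dispersion, all four limits sit on the branch cut, and getting the signs of the limiting phases right is precisely what makes the universal answer emerge.

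For the interior case ($n<n_c$) I would then argue that $|\omega|^{(2n+1-m)/m}\to\infty$ while $K^R-\omega$ stays bounded, so $J\simeq K$; the divergent magnitude $|\omega|^{(2n+1-m)/m}$ and the common factor $\sin(\pi\mu)$ cancel in the ratio $J(E-i0)/J(E+i0)$, leaving only the ratio of phase factors. I expect this to collapse to $e^{2\pi i(\mu-1)}=e^{i\pi(2n+1)/m}$ as $E\to0^+$ and to its conjugate as $E\to0^-$, reproducing \eqref{eqS0pmN1}.

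Finally, the borderline and trivial cases follow from convergence and density-of-states estimates rather than from the phase structure. For $n>n_c$ the band-edge integral in \eqref{eqKdefiN1} is absolutely convergent at $\omega=0$, so $K(0)$ is finite; its imaginary part $\mathrm{Im}\,K(E+i0)\propto-\pi|V|^2\rho(E)\propto-|E|^{(2n+1-m)/m}$ vanishes at threshold, making $K(0)$ real and continuous across the cut, whence $J(0+i0)=J(0-i0)$ and $S(0^\pm)=1$. For $n=n_c$ the same estimate gives a nonzero threshold value $\mathrm{Im}\,K(0^\pm)=\mp\pi|c|^2/m$, while $\mathrm{Re}\,K(0)$ and $K^R$ remain finite but depend on the full interaction; the resulting $S(0^+)=S(0^-)=[\,\mathrm{Re}\,K(0)+K^R+i\pi|c|^2/m\,]/[\,\mathrm{Re}\,K(0)+K^R-i\pi|c|^2/m\,]$ is a single non-universal phase, as claimed. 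Throughout I would use $K(\bar\omega)=\overline{K(\omega)}$ to relate the two sides of the cut and assume the absence of a zero-energy bound state so that $J(0\pm i0)\neq0$.
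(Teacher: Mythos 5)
Your proposal is correct and takes essentially the same route as the paper: reduce to the infrared asymptotics of $K(\omega)$, use $J(\omega)\sim K(\omega)$ with an exact evaluation of the leading integral to read off the phase jump across the cut for interior interactions, and handle the borderline and trivial cases by the pole-plus-principal-value decomposition and by continuity of $K$, respectively. The only difference is computational rather than conceptual: you evaluate the band-edge scaling integral via the Mellin identity $\int_0^\infty t^{\mu-1}(a-t)^{-1}\,dt = -\pi(-a)^{\mu-1}/\sin(\pi\mu)$ with principal-branch bookkeeping, whereas the paper quotes the exact closed form $L_{m,2n}(\omega) = -\tfrac{2\pi i}{m}\,\kappa_{m,2n}(\pm)\,q_\omega^{-m+2n+1}$; both yield $S(0^\pm)=\exp\left(\pm i\pi\tfrac{2n+1}{m}\right)$, equal non-universal limits at borderline, and $S(0^\pm)=1$ in the trivial case.
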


As illustrated in Fig.~\ref{fig:scattering_illu}(a), when \(E \to 0^\pm\), the incident wave approaches from the left (\(E \to 0^+\)) or from the right (\(E \to 0^-\)). In the case of interior interactions, the corresponding transmission amplitudes, \(S(0^+)\) and \(S(0^-)\), are unitary (even for a dissipative system) and assume the universal values that are complex conjugates of one another (with \(n_A = 2n+1\) in the figure). We say universal because it is independent of $K^R$ and the properties of $V(k)$ away from $k=0$. 

The zero-energy limits of $S(E)$ are set by the behavior of $K(\omega)$ as $\omega \to 0$. For trivial interactions, $K(\omega)$ approaches a constant, so $J(\omega\to 0)$ does as well, yielding $S(0^\pm)=1$.

\begin{proof}[Proof sketch] For interior and borderline interactions, the low-energy behavior is captured by the class of integrals
\begin{equation}
L_{m,l}(\omega) \equiv \int_{-\infty}^{+\infty} dk \frac{k^{l}}{\omega - \epsilon(k)}, \label{eqSLmndefinition0}
\end{equation}
for integers $0 \le l \le m-1$. This is the Stieltjes transform of $k^l$ and is exactly evaluable. For $\omega = |\omega|e^{i\theta}$ with $\theta \in (0,2\pi)$, set complex momentum $q_\omega = |\omega|^{1/m}e^{i\theta/m}$. Then
\begin{subequations}
\begin{equation}
L_{m,l}(\omega) = -\frac{2\pi i}{m} \kappa_{m,l}(\text{sign}[\text{Im}(\omega)]) q_\omega^{-m+l+1},
\end{equation}
where \(\kappa_{m,l}\) depends on the sign of \(\mathrm{Im}(\omega)\):
\begin{align}
\kappa_{m,l}(+) &=-\frac{1}{(-1)^l\mu^{l+1}-1}, \\
\kappa_{m,l}(-)&=-\frac{1}{\mu^{l+1}(\mu^{l+1}-(-1)^l)}, 
\end{align} 
\label{eqkappa}
\end{subequations}
with $\mu = \exp(i\pi/m)$. For interior interactions ($n < n_c$), the self-energy integral $K(\omega)$ diverges and the Jost function
\begin{equation}
J(\omega) \sim K(\omega) \sim |c|^2 L_{m,2n}(\omega),
\label{eqLinvK1}
\end{equation}
where \(\sim\) denotes asymptotic equivalence, meaning the ratio of the two sides approaches 1 as \(\omega \to 0\).

The universal S-matrix values in Eq.~\eqref{eqS0pmN1} arise from the phase jump of $L_{m,2n}(\omega)$ across the real axis: $S(0^+)$ and $S(0^-)$ follow from the ratios at $\theta=2\pi^-$ vs.\ $0^+$, and $\theta=\pi^+$ vs.\ $\pi^-$, respectively.

\begin{figure}
    \centering
    \includegraphics[width=1\linewidth]{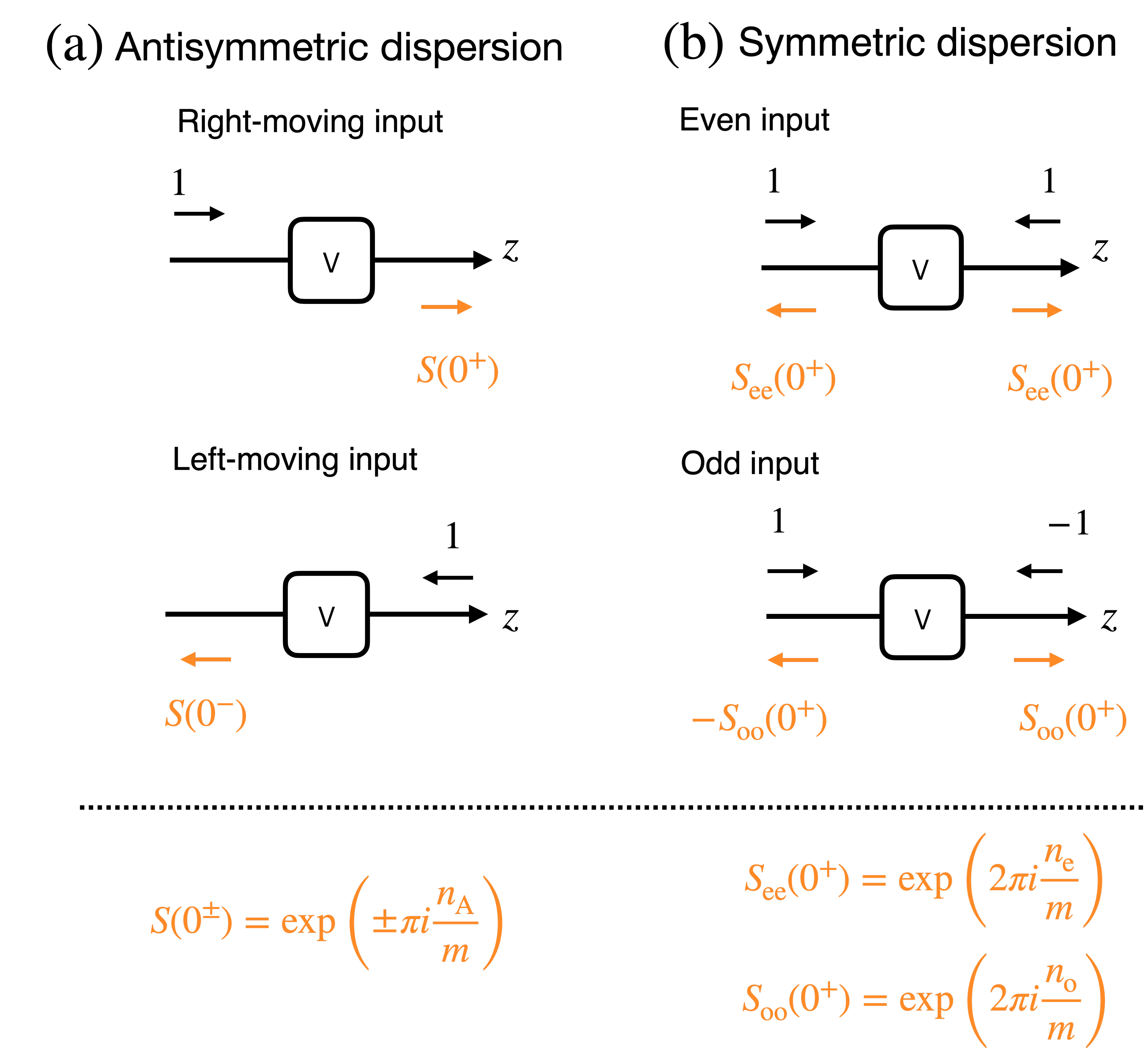}
    \caption{Universal zero-energy scattering. Schematics for \textbf{(a)} antisymmetric and \textbf{(b)} symmetric dispersion, where arrows indicate incident waves and the numbers \(1\) and \(-1\) denote their amplitudes. \textbf{(a)} For antisymmetric dispersion, the transmission amplitudes are given by $S(0^+)$ for  incidence from the left and $S(0^-)$ for incidence from the right; the bottom formula shown applies only to interior interactions. \textbf{(b)} For symmetric dispersion, scattering decouples into even and odd parity channels. The universal S-matrix formulas for both cases are parameterized by integers ($n_A, n_{\mathrm{e}}, n_{\mathrm{o}}$) and given in Theorems~\ref{theoremUniScatteringAnti} and~\ref{theoremUniScatteringSym}.}
        \label{fig:scattering_illu}
\end{figure}

For borderline interactions ($n = n_c$), $K(\omega)$ approaches finite but distinct limits as $\omega \to i0^\pm$ due to the standard $i0$ prescription \cite{arfken2013}. The imaginary part, set by the pole contribution in the low-energy limit, is universal: $\mathrm{Im}[K(i0^\pm)] = \mp\pi |c|^2/m$. The real part is given by a principal value integral, $\mathrm{Re}[K(i0^\pm)] = \mathcal{P}\int dk\, |V(k)|^2/(-\epsilon(k))$, which depends on the global (in $k$-space), non-universal details of $V(k)$. Since the real part is non-universal, so are $J(i0^\pm)$ and $S(0^+)=S(0^-)$.

\end{proof}

\emph{Multiple Emitters.}---The formalism for multiple emitters generalizes from the single-emitter case. The S-matrix is still given by Eq.~\eqref{eqdetSSep0}, where the Jost function \(J(\omega)\) is now the determinant of the \(N\times N\) matrix \(\bm{J}(\omega)\):
\begin{equation}
J(\omega) = \det[\bm{J}(\omega)] = \det[\bm{K}(\omega) + \bm{K}^R - \omega\mathbb{1}_N]. \label{eqSmatrixN}
\end{equation}
The matrix \(\bm{K}(\omega)\) represents the effective interaction between the emitters, mediated by the continuum of propagating particles. It is constructed by grouping the coupling coefficients \(V_i(k)\) into a vector \(|v(k)\rangle = [V_1(k), \dots, V_N(k)]^T\):
\begin{equation}
\boldsymbol{K}(\omega) \equiv \int_{-\infty}^{+\infty} dk \frac{|v(k)\rangle \langle v(k)|}{\omega-\epsilon(k)}. \label{eqKdefi}
\end{equation}

The key to understanding the multi-emitter case is characterizing the low-energy behavior of the interaction vector \(|v(k)\rangle\). A crucial result, based on Gram-Schmidt orthogonalization of the Taylor series coefficients of \(|v(k)\rangle\) (see Appendix), shows that this behavior decomposes along a special orthonormal basis.

\begin{lemma}\label{lemmaIntegerSet}
There exists a unique, strictly increasing sequence of non-negative integers \(\mathcal{N}=\{n_1, \dots, n_D\}\) with \(n_i \leq n_c\), and a corresponding orthonormal basis \(\{\ket{u_1}, \dots, \ket{u_D}\}\), such that
\[
\ket{v(k)} = \sum_{i=1}^{D} \ket{u_i}\left[ c_i k^{n_i} + O(k^{n_i+1})\right] + o(k^{n_c}),
\]
where the complex coefficients \(c_i\) are non-zero.
\end{lemma}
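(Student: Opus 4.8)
The plan is to read off the integers \(n_i\) and the directions \(\ket{u_i}\) directly from the Taylor data of \(\ket{v(k)}\) by a single Gram--Schmidt sweep. Write \(\ket{v(k)} = \sum_{j\ge 0}\ket{w_j}k^j\) with coefficient vectors \(\ket{w_j}\in\mathbb{C}^N\), which exist because each \(V_i\) is analytic at \(k=0\). I would process the \(\ket{w_j}\) in order of increasing \(j\), discarding any vector that already lies in the span of its predecessors, and record the index \(n_i\) of the \(i\)-th vector that enlarges the running span \(\mathrm{span}\{\ket{w_0},\dots,\ket{w_j}\}\), together with the normalized component \(\ket{u_i}\) of \(\ket{w_{n_i}}\) orthogonal to \(\mathrm{span}\{\ket{u_1},\dots,\ket{u_{i-1}}\}\). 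Equivalently, the \(n_i\) are exactly the indices at which \(\dim\mathrm{span}\{\ket{w_0},\dots,\ket{w_j}\}\) increases. Keeping only those \(n_i\le n_c\), and calling their number \(D\) (finite, with \(D\le N\)), immediately yields a strictly increasing sequence of non-negative integers, an orthonormal set, and nonzero coefficients \(c_i\equiv\langle u_i|w_{n_i}\rangle\), which I may take positive by fixing the phase of \(\ket{u_i}\).

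Next I would verify the two asymptotic claims. By construction every \(\ket{w_j}\) with \(j<n_i\) lies in \(\mathrm{span}\{\ket{u_1},\dots,\ket{u_{i-1}}\}\), which is orthogonal to \(\ket{u_i}\); hence \(\langle u_i|w_j\rangle=0\) for \(j<n_i\) while \(\langle u_i|w_{n_i}\rangle=c_i\neq 0\), so the scalar channel satisfies \(\langle u_i|v(k)\rangle = c_i k^{n_i}+O(k^{n_i+1})\), exactly the leading behavior required. For the remainder, let \(P\) project onto \(\mathrm{span}\{\ket{u_1},\dots,\ket{u_D}\}\). Since the next jump index (if any) obeys \(n_{D+1}>n_c\), every \(\ket{w_j}\) with integer \(j\le n_c\) already lies in this span, so \((\mathbb{1}_N-P)\ket{w_j}=0\) for all such \(j\) and \((\mathbb{1}_N-P)\ket{v(k)}=O(k^{\lfloor n_c\rfloor+1})=o(k^{n_c})\). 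Adding the \(D\) channels to this remainder reproduces the stated expansion.

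Finally, uniqueness follows from a basis-free reading of the same data. Any expansion of the claimed form forces \(\ket{w_j}\in\mathrm{span}\{\ket{u_i}:n_i\le j\}\) for integer \(j\le n_c\), with each order \(n_i\) contributing precisely one new direction, so the \(n_i\) must coincide with the intrinsic jump indices of \(\dim\mathrm{span}\{\ket{w_0},\dots,\ket{w_j}\}\) and each \(\ket{u_i}\) is pinned down up to an overall phase that is absorbed into \(c_i\). The main obstacle is the bookkeeping at the cutoff: one must confirm that directions first appearing at orders \(j>n_c\) are legitimately swept into the \(o(k^{n_c})\) remainder rather than contributing to the leading low-energy S-matrix, and that the comparison \(n_i\le n_c\) is between an integer order and the possibly non-integer threshold \(n_c=(m-1)/2\), so that \(O(k^{\lfloor n_c\rfloor+1})\) is genuinely \(o(k^{n_c})\) in both the generic and the borderline-integer cases.
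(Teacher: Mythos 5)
Your proof is correct and takes essentially the same approach as the paper: Gram--Schmidt orthogonalization of the Taylor data of \(\ket{v(k)}\), with the \(n_i\) identified as the orders at which the span of coefficient vectors jumps (the paper phrases this as an iterative expand-project-re-expand procedure on the residual vector function, which is equivalent to your single sweep over the \(\ket{w_j}\)). If anything, your explicit handling of the cutoff bookkeeping, \(O(k^{\lfloor n_c\rfloor+1}) = o(k^{n_c})\), and your uniqueness argument via the intrinsic span filtration supply details that the paper's appendix states but leaves implicit.
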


This lemma gives physical insight: the integer $D \le N$ counts the number of orthogonal directions in emitter Hilbert space where interaction with the continuum causes infrared divergence. The integers $n_i \in \mathcal{N}$ specify the distinct rates at which coupling vanishes along these directions, generalizing the exponent $n$ from the $N=1$ case.

If $\mathcal{N}$ is empty (trivial case), $\bm{K}(\omega)$ approaches a constant matrix as $\omega \to 0$, so $S(0)=S(0^\pm)=1$. If $\mathcal{N}$ is non-empty, we define \emph{interior} and \emph{borderline} interactions according to whether $n_c$ is excluded from $\mathcal{N}$ (interior) or included in \(\mathcal{N}\) (borderline).

We now present our main result for antisymmetric dispersions. Our theorem applies only when the system is not at a critical point, which occurs when a bound state is about to form or disappear (see Appendix) \footnote{For $N=1$, the system is never critical, so Lemma~\ref{lemmaSingleTypeAsym} always holds.} \footnote{For Hermitian Hamiltonians with antisymmetric dispersion, bound states generically do not exist~\cite{wang2018single}. Here, by ``a system at a critical point,'' we refer to situations where an infinitesimal dissipative perturbation can generate a dissipative bound state near zero energy.}.
\begin{theorem}\label{theoremUniScatteringAnti}
For a system with antisymmetric dispersion \(m>1\) that is not at a critical point:
\begin{itemize}
    \item In the interior case ($n_c \notin \mathcal{N}$), the zero-energy limits of the S-matrix are universal:
    \begin{equation}
    S(0^\pm)=\exp\left(\pm \pi i \frac{n_{\mathrm{A}}}{m} \right), \quad \text{where}\quad n_{\mathrm{A}}\equiv \sum_{n_i \in \mathcal{N}}(2n_i+1). \label{eqSasymIntro}
    \end{equation}
    \item In the borderline case ($n_c \in \mathcal{N}$), \(S(0^\pm)\) is non-universal, but exhibits a universal phase discontinuity:
    \begin{equation}
    \frac{S(0^+)}{S(0^-)}=\exp\left(2\pi i \frac{n_{\mathrm{A}}}{m} \right). \label{eqSeeNeqDsgnratio0}
    \end{equation}
     \item For trivial interactions (\(\mathcal{N} =\{\} \)), \(S(0^\pm)= 1\).
\end{itemize}
\end{theorem}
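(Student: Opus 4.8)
The plan is to work in the adapted orthonormal basis supplied by Lemma~\ref{lemmaIntegerSet}, completing $\{\ket{u_1},\dots,\ket{u_D}\}$ to a basis of the emitter space and splitting $\bm{J}(\omega)=\bm{K}(\omega)+\bm{K}^R-\omega\mathbb{1}_N$ into a ``divergent'' $D\times D$ block $\bm{A}(\omega)$ acting on $\mathrm{span}\{\ket{u_i}\}$ and a complementary finite block. In the divergent block the leading low-energy behavior of each matrix element is set by the Stieltjes transforms of Eq.~\eqref{eqSLmndefinition0}: since $\braket{u_i|v(k)}=c_ik^{n_i}+O(k^{n_i+1})$, one has $\bm{A}_{ij}(\omega)\sim c_i\bar{c}_j\,L_{m,n_i+n_j}(\omega)$, exactly as in Eq.~\eqref{eqLinvK1} for the single emitter. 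On the complement every coupling is $o(k^{n_c})$, hence $O(k^{n_c+1})$ by analyticity of the $V_i(k)$, so those entries stay finite as $\omega\to0$ and the pole (residue) part of $\bm{K}(E\pm i0)$ restricted there vanishes as $E\to0$. The finite block therefore tends to the \emph{same} limit from above and below the cut, and the non-criticality hypothesis guarantees its determinant tends to a nonzero constant. A Schur-complement reduction, using that the cross blocks are subleading relative to the diverging $\bm{A}$, then gives $J(\omega)\sim\det[\bm{A}(\omega)]\times(\text{finite, }i0\text{-independent constant})$, so this constant cancels in every S-matrix ratio.

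The core step is to evaluate $\det[\bm{A}]$. Writing $\bm{A}_{ij}=-\tfrac{2\pi i}{m}c_i\bar{c}_j\,\kappa_{m,n_i+n_j}(s)\,q_\omega^{\,n_i+n_j+1-m}$ with $s=\mathrm{sign}[\mathrm{Im}(\omega)]$ and pulling the factors $c_iq_\omega^{n_i}$ and $\bar{c}_jq_\omega^{n_j}$ out of the rows and columns, I obtain
\begin{equation}
J(\omega)\sim \mathrm{const}\times q_\omega^{\,\alpha}\,\det[\tilde{\bm{K}}(s)],\qquad \alpha=n_{\mathrm{A}}-Dm,
\end{equation}
where $\tilde{\bm{K}}(s)_{ij}=\kappa_{m,n_i+n_j}(s)$. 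With $\mu=e^{i\pi/m}$ and $t_i=(-\mu)^{n_i}$, the $s=+$ matrix is the Cauchy-type matrix $\tilde{\bm{K}}(+)_{ij}=1/(1-\mu t_it_j)$; because the integers $n_i$ are distinct and bounded by $n_c$ the nodes $t_i$ are pairwise distinct and $\mu t_it_j\neq1$ throughout the interior regime, so the Cauchy determinant formula makes $\det[\tilde{\bm{K}}(s)]$ nonzero. This is the step I expect to be the main obstacle: a priori every permutation in the determinant contributes at the identical order $q_\omega^{\alpha}$, so diagonal dominance fails, and the whole argument hinges on recognizing the coefficient matrix as a nonsingular Cauchy matrix while controlling the omitted cross-block couplings.

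For the interior case, the identity $\kappa_{m,l}(-)=(-1)^l\mu^{-(l+1)}\kappa_{m,l}(+)$ — immediate from Eq.~\eqref{eqkappa} — shows that $\tilde{\bm{K}}(-)=\mu^{-1}\bm{\Sigma}\,\tilde{\bm{K}}(+)\,\bm{\Sigma}$ with $\bm{\Sigma}=\mathrm{diag}[(-\mu^{-1})^{n_i}]$, whence
\begin{equation}
\frac{\det[\tilde{\bm{K}}(-)]}{\det[\tilde{\bm{K}}(+)]}=\mu^{-D}(\det\bm{\Sigma})^2=\mu^{-n_{\mathrm{A}}}=e^{-i\pi n_{\mathrm{A}}/m}.
\end{equation}
Combining this with the branch-cut jump of $q_\omega^{\alpha}$ — for $E\to0^+$ the arguments are $\theta=0^+$ ($s=+$) versus $\theta=2\pi^-$ ($s=-$), contributing the extra factor $e^{2\pi i\alpha/m}=e^{2\pi i n_{\mathrm{A}}/m}$ — yields $S(0^+)=e^{i\pi n_{\mathrm{A}}/m}$; the analogous limits at $\theta\to\pi^\mp$ give $S(0^-)=e^{-i\pi n_{\mathrm{A}}/m}$, which is Eq.~\eqref{eqSasymIntro}.

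For the borderline case ($n_D=n_c$, so $2n_D+1-m=0$), the diagonal entry $\bm{A}_{DD}$ no longer diverges: its universal imaginary part $\mp i\pi|c_D|^2/m$ is fixed by $\kappa_{m,m-1}(\pm)=\pm\tfrac12$, but its real part $\gamma$ is a non-universal principal-value integral independent of $s$. Treating this real shift $\gamma$ in the $(D,D)$ entry as a rank-one update and applying the matrix-determinant lemma, together with the crucial coincidence $\alpha_D=\alpha_{D-1}$ (again from $2n_c+1-m=0$), I find $J(\omega)\sim\mathrm{const}\times q_\omega^{\alpha_{D-1}}B(s)$ with $\alpha_{D-1}=\sum_{i<D}(2n_i+1)-(D-1)m$ and a non-universal constant $B(s)$. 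Hence $S(0^\pm)$ are individually non-universal, but in $S(0^+)/S(0^-)=[J|_{\theta=2\pi^-}J|_{\theta=\pi^-}]/[J|_{\theta=0^+}J|_{\theta=\pi^+}]$ the four factors $B(\pm)$ cancel in pairs, leaving only the phase jump $e^{2\pi i\alpha_{D-1}/m}=e^{2\pi i n_{\mathrm{A}}/m}$ (using $\sum_{i<D}(2n_i+1)=n_{\mathrm{A}}-m$), which is Eq.~\eqref{eqSeeNeqDsgnratio0}. The trivial case $\mathcal{N}=\{\}$ is immediate: $\bm{K}(\omega)$ tends to a constant matrix with no $i0$-dependent part, so $J(0^+)=J(0^-)$ and $S(0^\pm)=1$.
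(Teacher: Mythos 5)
Your core computations are correct and, in fact, follow essentially the same route as the paper's Supplemental Material: the split of $\bm{J}(\omega)$ into a divergent $D\times D$ block and a finite complement, the factorization of the divergent block's determinant with the $q_\omega$ powers pulled out of rows and columns, the recognition of the remaining coefficient matrix as a nonsingular Cauchy matrix, and the pairwise cancellation of non-universal factors in the borderline ratio $S(0^+)/S(0^-)$ all appear there in nearly identical form. Your one presentational difference is how the $\pm$ asymmetry is handled: the paper row-normalizes to $Z_{ij}(\pm)=\kappa_{m,n_i+n_j}(\pm)/\kappa_{m,2n_i}(\pm)$ and proves $\det[\bm{Z}(+)]=\det[\bm{Z}(-)]$, putting the entire phase jump into the product $\prod_i L_{m,2n_i}(\omega)$, whereas you keep $\tilde{\bm{K}}(s)_{ij}=\kappa_{m,n_i+n_j}(s)$ and extract the phase from the conjugation identity $\tilde{\bm{K}}(-)=\mu^{-1}\bm{\Sigma}\tilde{\bm{K}}(+)\bm{\Sigma}$; these are equivalent arrangements of the same calculation, and your identity $\kappa_{m,l}(-)=(-1)^l\mu^{-(l+1)}\kappa_{m,l}(+)$ checks out.

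The genuine gap is the sentence ``the non-criticality hypothesis guarantees its determinant tends to a nonzero constant,'' together with its unstated borderline analogue that your leading coefficients $B(\pm)$ are nonzero. ``Not at a critical point'' is defined \emph{physically} (Appendix~\ref{sec_critical_defi}): no sequence of perturbations $\delta\bm{K}^R_n$ with $\lVert\delta\bm{K}^R_n\rVert\to 0$ produces bound states with energies tending to zero. Translating that hypothesis into the algebraic nondegeneracy your argument actually consumes --- $\det[\bm{J}^{\perp}]\neq 0$ in the interior case with $D<N$, and $B(\pm)\neq 0$ in the borderline case --- is a theorem, not a restatement, and it is where the paper spends most of its effort: first an equivalent spectral characterization of criticality (built from rank-one perturbations $-\lambda_n\ket{e_n}\bra{e_n}$ in one direction and a uniform bound on $\lVert\bm{J}(\omega)^{-1}\rVert$ in the other), then a Newton-polygon analysis of the characteristic polynomial of $\bm{J}(\omega)$ showing an eigenvalue vanishes as $\omega\to 0$ exactly when $\det[\bm{J}^\perp]=0$ (interior) or when $c_D^-=0$ or $\det[\bm{J}^\perp]=0$ (borderline). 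This is not a technicality you can wave through: in the borderline case the paper shows $B(+)\neq 0$ automatically (because $\mathrm{Im}[y]<0$), but $B(-)$ \emph{can} be fine-tuned to zero --- that is precisely the borderline critical point --- and when it is, $J(\omega)=\mathrm{o}(q_\omega^{\alpha_{D-1}})$ in the lower half-plane, your asymptotic form fails, and the four-factor cancellation yielding Eq.~\eqref{eqSeeNeqDsgnratio0} collapses. Likewise, if $\det[\bm{J}^\perp]=0$ your Schur-complement reduction is invalid in the interior case. To close the proof you must either establish this equivalence (as the paper does) or weaken the theorem by replacing its physical hypothesis with the algebraic one.
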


For interior interactions, $S(0^\pm)$ is unitary and universal, independent of the emitter interactions $\bm{K}^R$ and of the behavior of the coupling coefficients $V_i(k)$ away from $k=0$.
Its value is determined by $n_A \pmod{2m}$, with the possible values summarized in Table~\ref{tableInteger} for integer $m\geq 2$.
For borderline interactions, the universal quantity is the ratio $S(0^+)/S(0^-)$, determined by $n_A \pmod{m}$, with possible values being $\{1\}$ for $m=2$; $\mathbb{Z}_m \setminus \{2\}$ for $m=3, 4, 5$; and $\mathbb{Z}_m$ for $m \ge 6$.

Although different systems may yield identical $S(0^\pm)$ values, they can sometimes be distinguished by a generalized Levinson's theorem (see Appendix). While the absolute phase of $S(0^\pm)$ is only defined modulo $2\pi$, the generalized Levinson's theorem unambiguously defines the total phase jump (including multiples of $2 \pi$) across $E=0$. This jump is constrained to be the positive value $\pi n_{\mathrm{R}}/m$, where $n_{\mathrm{R}} = Dm-n_{\mathrm{A}} > 0$.

In contrast, for the linear dispersion \(m=1\), \(S(E)\) is continuous at \(E=0\) and its value is non-universal.

\begin{table}[h!]
\caption{
Possible integer values for the numerators $n_{\mathrm{A}}$, $n_{\mathrm{e}}$, and $n_{\mathrm{o}}$ that determine the universal S-matrix phases for different integer dispersion powers $m$. The first column gives $n_{\mathrm{A}} \pmod{2m}$ for antisymmetric dispersion (interior interactions). The second and third columns give $n_{\mathrm{e}} \pmod{m}$ and $n_{\mathrm{o}} \pmod{m}$ for the even and odd channels of symmetric dispersion, respectively. “N/A” indicates a trivial channel. Higher-$m$ cases are listed in the SM.
\label{tableInteger}}
\centering
\begin{tabular}{@{}r !{\color{gray!50}\vrule} l @{\hspace{2em}} l !{\color{gray!50}\vrule} l @{}}
\toprule

& \multicolumn{1}{c}{Antisymmetric} & \multicolumn{2}{c}{Symmetric} \\ 
\cmidrule(lr){2-2} \cmidrule(lr){3-4}
$m$ & \( n_{\mathrm{A}} \bmod 2m \) & \( n_{\mathrm{e}} \bmod m \) & \( n_{\mathrm{o}} \bmod m \) \\ 
\midrule
2 & $ \{1\}$ & $\{1\}$ & N/A \\
3 & $ \{1\}$ & $\{1\}$ & $\{0\}$ \\
4 & $ \{1, 3, 4\}$ & $\{1\}$ & $\{3\}$ \\
5 & $\{1, 3, 4\}$ & $\{0, 1\}$ & $\{3\}$ \\
6 & $\{1, 3, 4, 5, 6, 8, 9\}$  &  $\{0,1,5 \}$  & $\{3\}$ \\
7 & $ \{1, 3, 4, 5, 6, 8, 9\}$ & $\{1, 5, 6\}$ & $\{0, 3\}$ \\
8 & $Z_{2m}\setminus\{2, 14\}$ & $\{1, 5, 6\}$ & $\{2, 3, 7\}$ \\
9 & $Z_{2m}\setminus\{0, 2, 14, 17\}$ & $\{0, 1, 5, 6\}$ & $\{1, 3, 7\}$ \\
$m\geq 10$ & $Z_{2m}$ & $\dots$ & $\dots$ \\ 
\bottomrule
\end{tabular}
\end{table}

The proof of this Theorem \ref{theoremUniScatteringAnti} is detailed in the Appendix and the SM; here we outline the main physical ideas. For interior interactions, the Jost function asymptotically takes the form \(J(\omega) \sim \text{const.} \times \prod_{n_i \in \mathcal{N}} L_{m,2n_i}(\omega)\), directly generalizing the single-emitter case. The universal \(S(0^\pm)\) values in Eq.~\eqref{eqSasymIntro} result from the collective phase jump of the product \(L_{m,2n_i}(\omega)\) across the real axis. In the borderline case, the interaction direction with \(n_D = n_c\) introduces a non-universal contribution to $\bm{J}(\omega)$ that depends on global features in $k$ space. Although this renders \(S(0^\pm)\) non-universal, the non-universal parts cancel in the ratio \(S(0^+)/S(0^-)\), yielding the universal discontinuity in Eq.~\eqref{eqSeeNeqDsgnratio0}.

\emph{Symmetric Dispersion.}---For a symmetric dispersion, $\epsilon(k)=\epsilon(-k)$, each energy $E>0$ is twofold degenerate, corresponding to momenta $\pm k$. Owing to this symmetry, it is convenient to use the parity basis: even states, $|\psi_{\mathrm{e}}\rangle = \frac{1}{\sqrt{2}}(|k\rangle + |-k\rangle)$, and odd states, $|\psi_{\mathrm{o}}\rangle = \frac{1}{\sqrt{2}}(|k\rangle - |-k\rangle)$. The scattering is then described by a $2 \times 2$ S-matrix,
\begin{equation}
\bm{S}(E) = \begin{pmatrix} S_{\mathrm{ee}} & S_{\mathrm{eo}} \\ S_{\mathrm{oe}} & S_{\mathrm{oo}} \end{pmatrix},
\end{equation}
whose elements are the scattering amplitudes between definite parity states. For example, $S_{\mathrm{ee}}$ is the amplitude for even-parity input to even-parity output, while $S_{\mathrm{oe}}$ corresponds to even to odd. As shown in the following theorem and detailed in the SM, this matrix becomes diagonal and universal at zero energy [see also Fig.~\ref{fig:scattering_illu}(b)].

\begin{theorem} \label{theoremUniScatteringSym}
For a symmetric dispersion, assuming the system is not at a critical point, the S-matrix elements as $E \to 0^+$ are
\begin{align}
S_{\mathrm{ee}}(0^+) &= \exp\left(2\pi i \frac{n_{\mathrm{e}}}{m}\right),\quad &n_{\mathrm{e}}\equiv \sum_{ n_i\in \mathcal{N}_{\mathrm{e}}}2n_i+1, \nonumber\\
S_{\mathrm{oo}}(0^+) &= \exp\left(2\pi i \frac{n_{\mathrm{o}}}{m}\right),\quad &n_{\mathrm{o}}\equiv \sum_{ n_i \in \mathcal{N}_{\mathrm{o}} }2n_i+1,\label{eqSuniS}\\
S_{\mathrm{eo}}(0^+) &= S_{\mathrm{oe}}(0^+) = 0, \nonumber
\end{align}
where $\mathcal{N}_{\mathrm{e}}$ and $\mathcal{N}_{\mathrm{o}}$ are the subsets of even and odd integers in $\mathcal{N}$, respectively. If $\mathcal{N}_{\mathrm{e}}$ (or $\mathcal{N}_{\mathrm{o}}$) is empty, that channel is trivial, yielding $S_{\mathrm{ee}}(0^+)=1$ (or $S_{\mathrm{oo}}(0^+)=1$).
\end{theorem}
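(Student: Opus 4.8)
\emph{Proof proposal.} The plan is to exploit the parity symmetry of the symmetric dispersion to reduce the $2\times 2$ scattering problem to two effectively decoupled single-channel problems, each treated by the same machinery as the antisymmetric multi-emitter case but with symmetric-dispersion Stieltjes integrals. First I would split the coupling vector into parts that are even and odd under $k\to-k$, $\ket{v(k)}=\ket{v_{\mathrm e}(k)}+\ket{v_{\mathrm o}(k)}$, and observe that in the Gram--Schmidt decomposition of Lemma~\ref{lemmaIntegerSet} the leading term $\ket{u_i}c_i k^{n_i}$ carries parity $(-1)^{n_i}$. Hence the even channel is built from the directions with $n_i\in\mathcal{N}_{\mathrm e}$ and the odd channel from $n_i\in\mathcal{N}_{\mathrm o}$, and these two sets of directions are mutually orthonormal.

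Next I would show that the self-energy block-decouples. Because $1/(\omega-|k|^m)$ is even in $k$, the cross terms $\ket{v_{\mathrm e}}\bra{v_{\mathrm o}}$ integrate to zero, so $\bm K(\omega)=\bm K_{\mathrm e}(\omega)+\bm K_{\mathrm o}(\omega)$ with $\bm K_{\mathrm e}$ ($\bm K_{\mathrm o}$) divergent only on the even (odd) orthonormal directions. Writing $\bm J(\omega)$ in this basis gives a block structure whose diagonal blocks diverge like the relevant $\tilde L$ integrals while the off-diagonal blocks, coming only from the constant $\bm K^R$, stay bounded; a Schur-complement estimate then yields $\det\bm J(\omega)\sim \mathrm{const}\times J_{\mathrm e}(\omega)J_{\mathrm o}(\omega)$ as $\omega\to0$. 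Within each parity sector, the same Vandermonde/orthonormality argument that produces $J\sim\mathrm{const}\times\prod_{n_i} L_{m,2n_i}$ in the antisymmetric case gives $J_{a}(\omega)\sim\mathrm{const}\times\prod_{n_i\in\mathcal N_a}\tilde L_{m,2n_i}(\omega)$, where $\tilde L_{m,l}(\omega)\equiv\int_{-\infty}^{+\infty}dk\,k^l/(\omega-|k|^m)$.

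The universal phases then follow from the one-sided branch cut of $\tilde L_{m,l}$. Evaluating this Stieltjes transform for even $l<m-1$ gives $\tilde L_{m,l}(\omega)\propto(-\omega)^{(l+1)/m-1}$ with $\arg(-\omega)\in(-\pi,\pi)$, so crossing the positive-energy cut produces the ratio $\tilde L_{m,l}(E-i0)/\tilde L_{m,l}(E+i0)=\exp[2\pi i(l+1)/m]$. Treating each parity sector as a single-channel problem with $S_{aa}(E)=J_a(E-i0)/J_a(E+i0)$, in direct analogy with Eq.~\eqref{eqdetSSep0}, and setting $l=2n_i$ gives $S_{aa}(0^+)=\prod_{n_i\in\mathcal N_a}\exp[2\pi i(2n_i+1)/m]=\exp(2\pi i\,n_a/m)$, which is Eq.~\eqref{eqSuniS}. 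For the off-diagonal elements, $S_{\mathrm{eo}}\propto\langle v_{\mathrm e}|\bm G|v_{\mathrm o}\rangle$ with $\bm G=-\bm J^{-1}$; since $\ket{v_{\mathrm e}}$ and $\ket{v_{\mathrm o}}$ live in orthogonal subspaces connected only through the subleading off-block of $\bm G$, a power count in $E=k^m$ shows this element vanishes as $E\to0^+$, giving $S_{\mathrm{eo}}(0^+)=S_{\mathrm{oe}}(0^+)=0$.

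The main obstacle is the second step: rigorously justifying the block decoupling and the product factorization of $\det\bm J$ inside each parity sector. One must control the subleading off-block ($\bm K^R$) contributions relative to the divergent diagonal blocks uniformly as $\omega\to0$---this is exactly where the ``not at a critical point'' hypothesis enters, guaranteeing $J_a(0^+)\neq0$ so the leading product survives---and must carefully track the one-sided branch cut together with the intermediate-scaling cross terms $\tilde L_{m,n_i+n_j}$ to confirm that they do not spoil the product structure of the determinant. Once this factorization is secured, the phase-jump computation and the off-diagonal power count are routine.
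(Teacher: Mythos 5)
Your overall strategy mirrors the paper's: reorder the Gram--Schmidt directions by the parity of $n_i$, factorize $\det\bm{J}(\omega)$ into even/odd sector Jost functions, read off the phases from the one-sided branch cut of $\breve{L}_{m,2n_i}$, and kill the off-diagonal S-matrix elements by power counting. However, your key structural claim is wrong, and it is exactly the claim on which everything else rests. It is true that $\bm{K}(\omega)=\bm{K}_{\mathrm e}(\omega)+\bm{K}_{\mathrm o}(\omega)$ exactly, since the cross integrals $\int dk\,\ket{v_{\mathrm e}(k)}\bra{v_{\mathrm o}(k)}/(\omega-|k|^m)$ vanish by parity. But it does \emph{not} follow that $\bm{K}(\omega)$ is block-diagonal in the parity-ordered $\{\ket{u_i}\}$ basis. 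The component functions $\braket{u_i|v(k)} = c_i k^{n_i}+O(k^{n_i+1})$ have definite parity only in their leading term; the subleading corrections have mixed parity, so $\ket{v_{\mathrm e}(k)}$ has nonzero components along the odd-$n_i$ directions (and along the orthogonal complement), cf.\ the paper's Eq.~\eqref{eqViek0}. Consequently the cross-parity matrix elements $K_{ij}(\omega)$ do not reduce to the constant $\bm{K}^R$: they receive contributions $O(\breve{L}_{m,n_i+n_j+1}(\omega))$, which generically \emph{diverge} as $\omega\to 0$ whenever $n_i+n_j+2<m$. They are only of strictly lower order than the diagonal-block elements, $J_{ij}(\omega)=\mathrm{o}(k_\omega^{-m+1+n_i+n_j})$ [Eq.~\eqref{eqJijoff}]. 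A Schur-complement estimate premised on bounded off-diagonal blocks therefore does not apply; one needs the relative-order bookkeeping in the Leibniz expansion that the paper carries out, and, for the diagonal S-matrix elements, the block-wise control of $\bm{J}(\omega)^{-1}$ (Lemma~\ref{lemmaInvJmatrix}) combined with the hierarchy of the vector components — your statement that $\ket{v_{\mathrm e}}$ and $\ket{v_{\mathrm o}}$ ``live in orthogonal subspaces'' fails for the same reason. Relatedly, writing $S_{aa}(E)=J_a(E-i0)/J_a(E+i0)$ ``in analogy with'' the antisymmetric case is an assumption, not a derivation: for symmetric dispersion only $\det[\bm{S}]$ is a ratio of Jost-function boundary values [Eq.~\eqref{eqdetSSepSymMulti}]; the individual elements are given by Eq.~\eqref{eqgvT0} through $\bm{J}(\omega)^{-1}$, and reducing $S_{\mathrm{ee}}$ to an effective single-parity system is precisely the hard content of the paper's proof.

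Two further omissions: (i) your asymptotics for $\breve{L}_{m,l}$ cover only even $l<m-1$, so the borderline case $n_c\in\mathcal{N}$ (where $\breve{L}_{m,m-1}(\omega)\sim m^{-1}\ln|\omega|$) is untreated, even though the theorem includes it — the paper must show separately that the logarithmic divergence factors out of the block determinants and contributes no phase jump; (ii) when $D<N$ your decomposition is missing the orthogonal-complement block $\bm{J}^\perp$ entirely, and the non-criticality hypothesis enters as $\det[\bm{J}^\perp]\neq 0$ (Lemma~\ref{lemmaCriticalSym}), not as ``$J_a(0^+)\neq 0$'' — indeed the sector Jost functions diverge rather than stay nonzero-finite at threshold. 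To your credit, you flagged the block-decoupling step as the main obstacle; but the obstacle is not controlling $\bm{K}^R$, it is controlling divergent (though subdominant) pieces of $\bm{K}(\omega)$ itself, and that changes the required argument from a routine Schur complement into the scaling-hierarchy analysis that constitutes the bulk of the paper's proof.
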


For both interior and borderline interactions, the zero-energy limit of the S-matrix is unitary and is independent of $\bm{K}^R$ and of the behavior of $V_i(k)$ away from $k=0$.

This theorem reveals a fundamental result: the even and odd parity channels always decouple at zero energy.
In the standard quadratic case ($m=2$) with a non-trivial interaction ($\mathcal{N}=\{0\}$), the even channel has $n_{\mathrm{e}}=-1$, resulting in $S_{\mathrm{ee}}=-1$. The odd channel is trivial, yielding $S_{\mathrm{oo}}=1$. Consequently, an incoming particle from either the left or right experiences total reflection with a phase shift of $-1$ \cite{Griffiths2018}.

The possible values of \(n_{\mathrm{e}}, n_{\mathrm{o}} \pmod{m}\) that define the universal values of the transmission coefficients  are given in Table~\ref{tableInteger} and the SM. Analogous to the antisymmetric case, the generalized Levinson's theorem (see Appendix) unambiguously defines an absolute, non-negative phase shift of \(\det[\bm{S}(0^+)]\)  from the trivial scattering limit ($\mathbf{S}=\mathbb{1}_2$). This shift is quantified by the integer $n_{\mathrm{R}} = Dm-n_{\mathrm{e}}-n_{\mathrm{o}}$.\

Universality is more robust for symmetric than antisymmetric dispersions. For antisymmetric cases, borderline interactions yield a non-universal S-matrix. In the symmetric case, however, the borderline term diverges logarithmically, ensuring that low-energy physics remains independent of short-range details. Thus, universal scattering occurs for both interior and borderline cases. See SM for details.

\emph{Outlook.}---The universal scattering behavior established here for emitter scattering is extended to separable potentials in the Appendix. This separable potential analysis is not just an  illustrative example but the essential tool to address the more complex problem of short-range local potentials. As we will show in a forthcoming publication~\cite{wang2025potential}, the behavior of these local potentials emerges as a specific subclass of the classification presented here. Our framework also applies directly to classical wave scattering, where emitters represent classical dipoles. Moreover, the ability to engineer dispersions in synthetic quantum systems offers a concrete path to experimentally observe these new universalities, opening a frontier for extending these concepts to higher-dimensional and many-body physics.

\begin{acknowledgements}

\emph{Acknowledgements.}---We thank Junyi Zhang, Anatoly Svidzinsky, Seth Whitsitt, Liujun Zou, Oriol Rubies Bigorda, and Jonah S.~Peter for discussions. Y.W. and S.Y. acknowledge the NSF through CUA PFC (PHY-2317134), PHY-2207972, and the Q-SEnSE QLCI (OMA-2016244). A.V.G.~was supported in part by ARL (W911NF-24-2-0107), DARPA SAVaNT ADVENT, the DoE ASCR Quantum Testbed Pathfinder program (awards No.~DE-SC0019040 and No.~DE-SC0024220), NSF QLCI (award No.~OMA-2120757), NSF STAQ program, AFOSR MURI, and NQVL:QSTD:Pilot:FTL. A.V.G.~also acknowledges support from the U.S.~Department of Energy, Office of Science, National Quantum Information Science Research Centers, Quantum Systems Accelerator (QSA) and from the U.S.~Department of Energy, Office of Science, Accelerated Research in Quantum Computing, Fundamental Algorithmic Research toward Quantum Utility (FAR-Qu).

\end{acknowledgements}

\begin{center}
\vspace{1cm}
\textbf{\large Appendix}
\end{center}

\section{Determining the integer set \(\mathcal{N}\)\label{AppenIntegerSet}}
In this Appendix, we prove Lemma \ref{lemmaIntegerSet} by outlining the steps to determine the unique integer set \(\mathcal{N}\).

\begin{enumerate}
    \item \textbf{Initial Expansion}: Begin by performing a Taylor expansion of \(\ket{v(k)}\) to the lowest order:
    \[
        \ket{v(k)} = c_1 k^{n_1} \ket{u_1} + o(k^{n_1}),
    \]
    where \(c_1 \neq 0\) and \(\ket{u_1}\) is a unit vector. If \(n_1 \leq n_c\), include \(n_1\) in \(\mathcal{N}\). If  \(n_1 > n_c\) , then \(\mathcal{N}\) is empty, concluding the process.
    
    \item \textbf{Gram-Schmidt Process}: Define \(\ket{\mu_k} = \ket{v(k)} - \braket{u_1 | v(k)} \ket{u_1}\) to ensure orthogonality to \(\ket{u_1}\). If \(\| \ket{\mu_k} \| = o(k^{n_c})\), then \(\mathcal{N} = \{n_1\}\), marking the end of the process. Otherwise, Taylor expand $\ket{\mu_k}$ to lowest order
    to identify another integer \(n_2 \in \mathcal{N}\) and the corresponding vector \(\ket{u_2}\).
    
    \item \textbf{Iterative Process}: Continue iteratively to determine integers \(n_3, n_4, \ldots, n_D\) in \(\mathcal{N}\) and the unit vectors \(\ket{u_3}, \ket{u_4}, \ldots, \ket{u_D}\).
    The process stops when the component of \(\ket{v(k)}\) orthogonal to all of these \(D\) basis vectors is \(o(k^{n_c})\). The number of integers \(D\) defines the size of \(\mathcal{N}\).
\end{enumerate}
\section{Separable potential scattering \label{AppSep}}
In this Appendix, we explore the universal zero-energy scattering behavior using separable potential scattering. This model is deeply connected to the emitter scattering scenario discussed in the main text. Furthermore, it holds significant practical importance, as it often provides a tractable approximation for complex, realistic two-body potentials that lack analytical solutions \cite{haidenbauer1983separable}. The key results derived in the main text for emitter scattering apply directly to this separable potential framework, as we will demonstrate.

For a physical two-body collision, the interaction potential typically depends only on their relative separation $z$. In the center-of-mass frame, this corresponds to a single particle scattering from a local potential $U(z)$. The locality of $U(z)$ implies that its momentum-space representation depends only on the momentum transfer, such that $U(k, k') = U(k-k')$. The potential term in the Hamiltonian is therefore
\begin{equation} 
V = \int_{-\infty}^{+\infty} dk\,dk' \, U(k-k') \, C^\dagger(k')C(k). 
\end{equation}

Separable potentials are a mathematical construct that is often used to approximate physical scattering processes. The general form of a separable potential is given by
\eq{
U(k,k') = \sum_{i,j=1}^N V^*_i(k) \, U_{ij} \, V_j(k'). \label{eqSepPoten}
}
In contrast to local potentials, this potential acts non-locally on the wavefunction in real space. Importantly, we assume that the functions \(V_i(k)\) in Eq.\ \eqref{eqSepPoten} adhere to the same conditions as those governing the emitter scattering model in the main text, i.e.~analytic at $k=0$. In fact, we deliberately use the same notation \(V_i(k)\) and $N$ across both models to emphasize their crucial correspondence. Here, \(U_{ij}\) represents the matrix elements of an invertible matrix \(\bm{U}\). When \(\bm{U}\) is Hermitian, the potential \(V\) is also Hermitian, though we emphasize that our results also apply to non-Hermitian $V$.
This sharing of \(V_i(k)\) guarantees that the physics will be the same.

For both dispersions, the S-matrix crucially depends on the J-matrix, \(\bm{J}(\omega)\) [see Eq.\ \eqref{eqdetSSep0} for the antisymmetric dispersion and SM for the symmetric dispersion]. All formulas for the S-matrix from the emitter scattering analysis apply to separable potential scattering, if we replace the definition of the J-matrix with
\eq{
\bm{J}(\omega) =  \bm{K}(\omega) + \bm{K}^R, \quad \bm{K}^R \equiv -\bm{U}^{-1}, \label{eqJostSep}
}
where \(\bm{K}(\omega)\) is defined in the same manner as in the emitter scattering case [Eq.\ \eqref{eqKdefi}]. We define \(\bm{K}^R\) as \(-\bm{U}^{-1}\) to further highlight the shared structure of the Jost function $J(\omega) = \det[\bm{J}(\omega)]$ with the emitter scattering model. Again, bound states correspond to the zeros of the Jost function. The only difference between the two Jost functions lies in the term \(\omega \mathbb{1}_N\) [see Eq.~\eqref{eqSmatrixN}], which becomes irrelevant in the low-energy limit. Consequently, Theorems \ref{theoremUniScatteringAnti} and \ref{theoremUniScatteringSym} are equally applicable to separable potential scattering.

\section{The critical points\label{sec_critical_defi}}
In this Appendix, we precisely define critical points for emitter and separable potential scattering. Traditionally, a critical point occurs when a bound state is about to form or disappear at the scattering threshold. We generalize this to the broader class of dispersions studied here. For antisymmetric dispersion, which lacks a true threshold, we treat zero energy as the effective scattering threshold.

In emitter scattering scenarios, it is crucial to undertake preliminary steps to accurately define critical points. This involves removing certain zero-energy bound states that act as irrelevant degrees of freedom in the scattering process. These correspond to null vectors \(\ket{e}\) of \(\bm{K}^R\) orthogonal to \(\ket{v(k)}\) for all \(k\), known as emitter zero-energy eigenstates (emitter ZEEs) \cite{wang2022universal}. These states consist only of emitter excitations without a photon wavefunction. Hamiltonians containing emitter ZEEs can be reduced to ``trimmed'' Hamiltonians, which exclude these states while preserving all scattering and bound state properties. Thus, we focus on trimmed Hamiltonians without loss of generality. For separable potential scattering, $\bm{K}^R \equiv -\bm{U}^{-1}$ is always invertible, so the Hamiltonian is inherently trimmed.

We define a system as critical if infinitesimal perturbations can induce the formation of bound states close to zero energy. These perturbations may include dissipative elements. 

\begin{definition} \label{definitionCritical}
    A Hamiltonian is defined as critical if there exists a sequence of perturbations \(\delta \bm{K}_n^R\) such that \(\lim_{n\rightarrow \infty} \lVert \delta \bm{K}_n^R \rVert = 0\), which generates bound states with energies \(E_n \rightarrow 0\) as \(n \rightarrow \infty\). 
\end{definition}

In this context, \(\delta \bm{K}_n^R\) is an \(N \times N\) complex-valued matrix, and \(\lVert \delta \bm{K}_n^R \rVert\) denotes its operator norm. We assume that a Hamiltonian perturbed by \(\delta \bm{K}_n^R\) maintains the same interaction \(\ket{v(k)}\) and dispersion relation \(\epsilon(k)\) as the original Hamiltonian, while $\bm{K}^R$ is modified to \(\delta \bm{K}_n^R + \bm{K}^R\).

For antisymmetric dispersions, it is important to note that the continuum spectrum spans the entire real axis, indicating the absence of traditional bound states without dissipation. Consequently, any perturbation that induces the formation of bound states in a Hermitian system must be non-Hermitian, leading to bound states with nonzero imaginary components.

Definition \ref{definitionCritical} captures the physical essence of the critical point, yet it can be difficult to apply in practice. Therefore, in the SM, we offer two equivalent definitions: one focuses on the behavior of the eigenvalues of \(\bm{J}(\omega)\), while the other examines the matrix elements of \(\bm{J}(\omega)\).

\section{Proof of Theorem 1 for the interior case with $D=N$ \label{appenProof}}

In this Appendix, we outline the proof of Theorem 1 for the illustrative case of an interior interaction with \(D=N\). This scenario demonstrates the core mechanism behind the theorem's universal S-matrix limits, while the complete proofs are detailed in the Supplemental Material (SM).

The Jost function is given by \(J(\omega)=\det[\bm{J}(\omega)]\). For emitter scattering, \(\bm{J}(\omega)=\bm{K}(\omega)+\bm{K}^R-\omega\mathbb{1}_N\), and for separable potential scattering, \(\bm{J}(\omega)=\bm{K}(\omega)+\bm{K}^R\). In the low-energy limit, for an interior interaction, the elements of \(\bm{K}(\omega)\) diverge, and thus the matrix elements \(J_{ij}(\omega)\) have the asymptotic behavior \(J_{ij}(\omega) \sim c_i c_j^* L_{m, n_i+n_j}(\omega)\).

The determinant is evaluated using the Leibniz formula, a sum over all permutations \(\sigma\) of products \(\prod_{i=1}^D J_{i,\sigma(i)}(\omega)\). 
Crucially, each product term in this sum can be decomposed into a common divergent factor, $\left(\prod_{i=1}^D |c_i|^2\right)\prod_{i=1}^D L_{m, 2n_i}(\omega)$, and a permutation-dependent prefactor. Factoring this common term out of the entire sum reduces the remaining expression to the determinant of a new matrix, $\bm{Z}(\pm)$, whose elements depend on the sign of the imaginary part of the energy:
\begin{equation}
Z_{ij}(\pm) = \frac{\kappa_{m, n_i+n_j}(\pm)}{\kappa_{m, 2n_i}(\pm)}.
\end{equation}
The terms $\kappa_{m,n}(\pm)$ are defined in Eq.~\eqref{eqkappa}.

In the SM, we prove a crucial point that  \(\det[\bm{Z}(+)]=\det[\bm{Z}(-)]\equiv Z(\mathcal{N})\) is a positive constant that depends only on \(\mathcal{N}\). This guarantees that the leading divergent terms do not cancel out and 
\begin{equation}
    J(\omega) \sim Z(\mathcal{N}) \left(\prod_{i=1}^D |c_i|^2\right) \left( \prod_{i=1}^D L_{m, 2n_i}(\omega) \right). \label{eqJproper}
\end{equation}
This, together with Eq.\ \eqref{eqdetSSep0}, leads to Eq.\ \eqref{eqSasymIntro} in Theorem \ref{theoremUniScatteringAnti}.

 In the SM, we prove that the system is never critical for interior interaction with $D=N$. This validates the premise of Theorem 1.

\section{Levinson's theorem \label{AppLevinson}}
In this Appendix, we present Levinson's theorem, a fundamental principle in scattering theory that relates the winding number of the scattering phase to the number of bound states within a system. In certain cases, this theorem enables the differentiation of two systems that have identical values of \( S(0^\pm) \).

The theorem has been extensively discussed for various interacting Hamiltonians and dimensional settings in the literature \cite{Levinson1949, jauch1957relation, ida1959relation, wright1965, atkinson1966levinson, ma1985levinson, barton1985levinson, Poliatzky1993, dong1998relativistic, dong2000levinson, ma2006levinson}. 
In our recent works, we extended Levinson's theorem to emitter scattering with linear dispersion \cite{wang2018single} and generalized to arbitrary dispersions \cite{wang2022universal} for a specific class of photon-emitter interactions. Here, we present Levinson's theorem for general emitter scattering and separable potential scattering as considered in this Letter. The proofs can be found in the SM.

For simplicity, we assume the system is non-critical and contains no bound states in the continuum. We examine antisymmetric dispersions with a continuum spectrum $(-\infty, 0) \cup (0, +\infty)$. As energy $E$ increases from $-\infty$ to $\infty$, consider the trajectory of $S(E)$ in the complex plane. This trajectory is continuous except at $E=0$, where it undergoes a phase jump, as illustrated in Fig.\ \ref{figureLevinson}(a). The scattering phase $\delta(E)$ of $S(E) \equiv |S(E)|\exp[2i\delta(E)]$ is continuous in each separate trajectory segment. For $\delta(E)$ to be properly defined, we also assume $S(E) \neq 0$ throughout the continuum spectrum.

We define $\Delta \delta$ as the total winding phase of the two trajectory segments and $N_B$ as the number of bound states in the system. Levinson's theorem for both interior and borderline cases is then expressed as
\eq{
 \Delta \delta = \pi \Delta N + \pi \frac{n_{\mathrm{R}}}{m} \label{eqLev},
}
where \(\Delta N = N - N_B\) for emitter scattering and \(\Delta N = -N_B\) for separable potential scattering. \(\Delta N\) indicates the change in the number of bound states after the interaction $V_i(k)$ is turned on. In the case of emitter scattering, the number of emitters \(N\) is equal to the number of bound states before the emitter-light interaction is turned on.

In Eq.~\eqref{eqLev}, \(\pi \frac{n_{\mathrm{R}}}{m}\) with \(n_R=Dm-n_{\mathrm{A}}\) represents the non-negative phase jump from \(S(0^-)\) to \(S(0^+)\), 
while every bound state contributes a negative quantized phase $-2\pi$.

\begin{figure}
    \centering
    \includegraphics[width=1\linewidth]{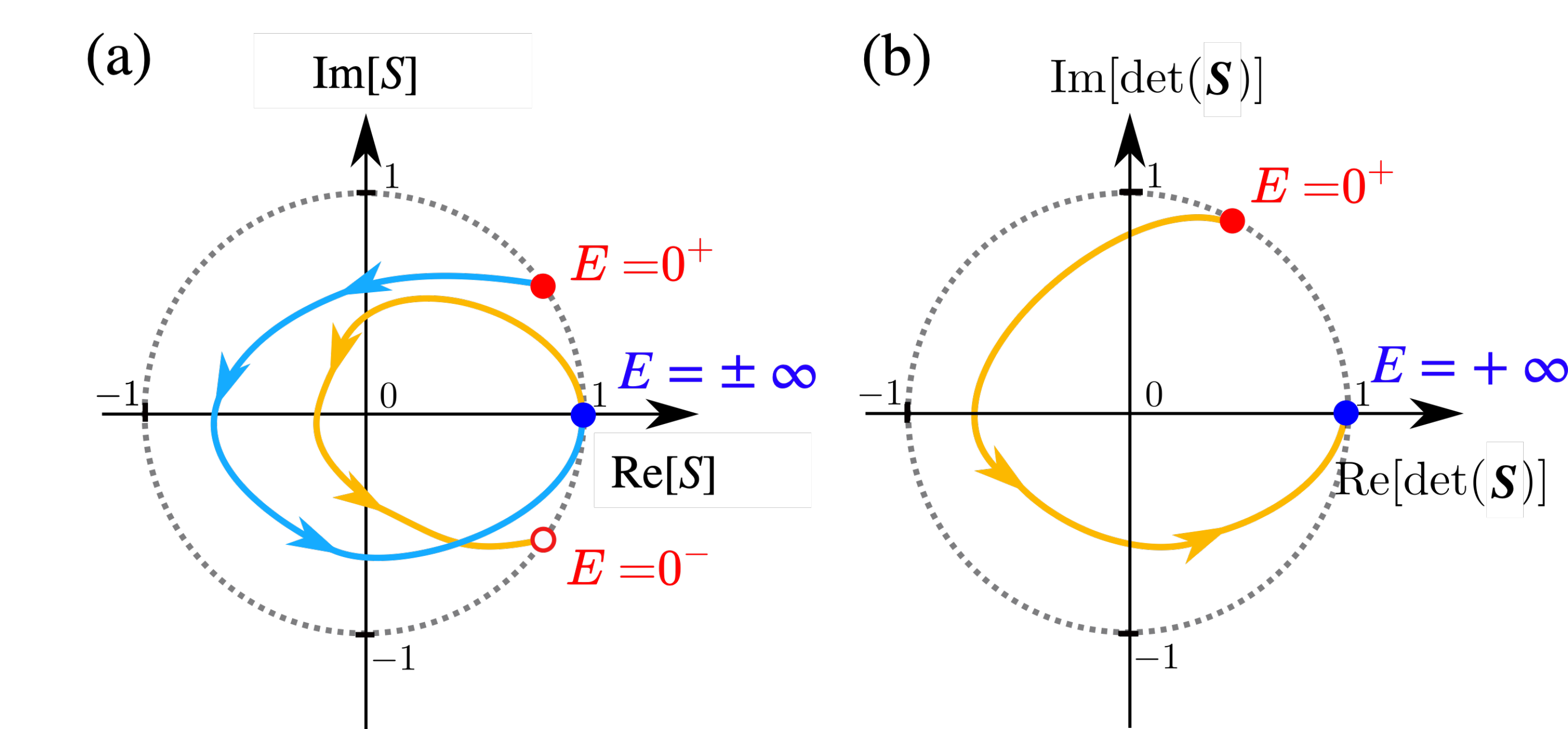}
      \caption{
Illustration of the trajectories of \(\det[\boldsymbol{S}(E)]\) in the complex plane for a dissipative system as \(E\) increases along the continuum spectrum for (a) antisymmetric dispersion  and (b) symmetric dispersion.  In (a), the trajectory starts and ends at \(1\) with a discontinuity across \(E = 0\). The segments of the trajectory for \(E < 0\) and \(E > 0\) are colored yellow and blue, respectively. In (b), the trajectory begins at \(E(0^+)\) and ends at \(E(+\infty) = 1\).
    }
    \label{figureLevinson}
\end{figure}

Continuing with Levinson's theorem, we now consider symmetric dispersion. Let us examine the trajectory of \(\det[\boldsymbol{S}(E)]\) in the complex plane as energy \(E\) increases from 0 to \(\infty\). As illustrated in Fig.\ \ref{figureLevinson}(b), the trajectory starts at  \(\det[\boldsymbol{S}(0^+)] = \exp(2 \pi i (n_{\mathrm{o}}+n_{\mathrm{e}})/m)\) and concludes at \(\det[\boldsymbol{S}(+\infty)] = 1\). If we define the scattering phase \(\delta(E)\) such that \(\det[\boldsymbol{S}(E)] \equiv |\det[\boldsymbol{S}(E)]|\exp[2i\delta(E)]\) is a continuous function of \(E\), Levinson's theorem is encapsulated by Eq.\ \eqref{eqLev}.
In this case,  \(\pi \frac{n_{\mathrm{R}}}{m}\) with \(n_{\mathrm{R}}=Dm-n_{\mathrm{e}}-n_{\mathrm{o}}\) represents the absolute, non-negative phase of  
\(\det[\bm{S}(0^+)]\) compared to a trivial S-matrix.

Levinson's theorem also provides a heuristic explanation for the breakdown of Theorems \ref{theoremUniScatteringAnti} and \ref{theoremUniScatteringSym} at critical points. For Hermitian Hamiltonians, the trajectory of \(\det[\bm{S}(E)]\) is confined to the unit circle and remains a continuous function of \(E\) for analytic Hamiltonians. Continuous changes to this trajectory cannot alter the winding number, which is linked to the number of bound states. Thus, any change in the number of bound states must coincide with non-analytic behavior at the trajectory’s endpoints. The number of bound states acts as a topological invariant of the system, with critical points marking transitions between different topologically invariant parameter spaces.





\bibliographystyle{apsrev4-2.bst}
\bibliography{universal_scattering}


\widetext
\clearpage
\begin{center}
\textbf{\large Supplemental Material}
\end{center}
\setcounter{equation}{0}
\setcounter{figure}{0}
\setcounter{table}{0}
\setcounter{page}{1}
\makeatletter
\renewcommand{\theequation}{S\arabic{equation}}
\renewcommand{\thefigure}{S\arabic{figure}}

\setcounter{secnumdepth}{3}

\section*{Overview}
This Supplemental Material provides the detailed mathematical framework underpinning the theorems on zero-energy S-matrix limits and the generalized Levinson's theorem for 1D scattering with general dispersions, as presented in the main text. The scope of our proofs covers both emitter scattering and separable potential scattering. We begin by establishing an analytically tractable definition of a critical point (Sec.~I).

The core of this Supplemental Material   presents the complete proofs of the main theorems. The analysis for antisymmetric dispersion (Theorem 1) systematically covers all scenarios for an \(N\)-emitter system (Sec.~II). For symmetric dispersion (Theorem 2), the proof is first presented for an illustrative \(N=1\) case before the full generalization (Sec.~III-IV). Finally, we provide a proof of the generalized Levinson's theorem (Sec.~V).

The document concludes with reference materials, including a comprehensive classification of the integer numerators that determine the S-matrix phases for symmetric dispersion (Section VI) and a list of notations (Section VII).





\section{Analytical Definition of a Critical Point}
\label{AppendixCriticalDefi}

While the definition in the main text illustrates the physical meaning of a critical point (a parameter threshold at which bound states are on the borderline to appear or disappear), that framing can be cumbersome for formal proofs. In this section, we establish an equivalent, more analytically tractable definition based on the zero-energy behavior of the eigenvalues of the matrix \(\bm{J}(\omega)\). This alternative formulation is instrumental for proving the S-matrix theorems in the subsequent sections.

\begin{lemma} \label{definitionCritical0}
A system is critical if and only if at least one eigenvalue of \(\bm{J}(\omega)\) is not bounded below by a positive constant as \(\omega \rightarrow 0\). Equivalently, \(\liminf_{\omega\to 0} \min_{i} |\lambda_i(\omega)| = 0\), where \(\lambda_i(\omega)\) are the eigenvalues of \(\bm{J}(\omega)\).
\end{lemma}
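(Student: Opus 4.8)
The plan is to prove the two directions of the equivalence separately, after first recasting the physical criterion of Definition~\ref{definitionCritical} as a statement about near-singularity of $\bm{J}(\omega)$. The key unifying observation is that, for both emitter scattering ($\bm{J}(\omega)=\bm{K}(\omega)+\bm{K}^R-\omega\mathbb{1}_N$) and separable potentials ($\bm{J}(\omega)=\bm{K}(\omega)+\bm{K}^R$), the admissible perturbation $\bm{K}^R\to\bm{K}^R+\delta\bm{K}^R$ simply shifts $\bm{J}(\omega)\to\bm{J}(\omega)+\delta\bm{K}^R$. Since a bound state at $\omega$ is a zero of the Jost function, i.e. a vanishing eigenvalue of $\bm{J}(\omega)+\delta\bm{K}^R$, criticality is equivalent to the existence of sequences $\omega_n\to 0$ (through the resolvent set, where $\bm{J}$ is analytic) and $\delta\bm{K}^R_n\to 0$ for which $\bm{J}(\omega_n)+\delta\bm{K}^R_n$ is singular. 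The two directions then hinge on two elementary facts: the minimal operator-norm perturbation making a matrix singular equals its smallest singular value $\sigma_{\min}$, while a rank-one perturbation aligned with an eigenvector costs only $|\lambda|$.

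For the ``if'' direction (vanishing eigenvalue $\Rightarrow$ critical) I would argue constructively. Assuming $\liminf_{\omega\to 0}\min_i|\lambda_i(\omega)|=0$, select $\omega_n\to 0$ with an eigenvalue $\lambda_n$ of $\bm{J}(\omega_n)$ satisfying $|\lambda_n|\to 0$ and a normalized right eigenvector $\ket{r_n}$, so $\bm{J}(\omega_n)\ket{r_n}=\lambda_n\ket{r_n}$. The rank-one perturbation $\delta\bm{K}^R_n\equiv-\lambda_n\ket{r_n}\bra{r_n}$ then annihilates $\ket{r_n}$, making $\bm{J}(\omega_n)+\delta\bm{K}^R_n$ singular with a bound state at $\omega_n\to 0$, while $\lVert\delta\bm{K}^R_n\rVert=|\lambda_n|\to 0$. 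This exhibits exactly the perturbation sequence required by Definition~\ref{definitionCritical}.

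For the ``only if'' direction (critical $\Rightarrow$ vanishing eigenvalue) I would start from singularity of $\bm{J}(E_n)+\delta\bm{K}^R_n$, which forces $\sigma_{\min}(\bm{J}(E_n))\le\lVert\delta\bm{K}^R_n\rVert\to 0$. The remaining and principal difficulty is to upgrade this to the eigenvalue statement $\min_i|\lambda_i|\to 0$: one always has $\sigma_{\min}\le\min_i|\lambda_i|$, but equality can fail badly for non-normal matrices, and because $\bm{J}(\omega)$ diverges in norm as $\omega\to 0$ for non-trivial interactions, the standard eigenvalue-perturbation bounds (Bauer--Fike/Elsner), whose constants grow with $\lVert\bm{J}\rVert$, do not apply directly. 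My plan to overcome this is to invoke the decomposition of Lemma~\ref{lemmaIntegerSet}: split $\mathbb{C}^N$ into the subspace $\mathcal{D}=\mathrm{span}\{\ket{u_1},\dots,\ket{u_D}\}$ on which $\bm{K}(\omega)$ diverges and its orthogonal complement, and perform a Schur-complement reduction of $\bm{J}(\omega)$. The divergent block is robustly invertible (its own smallest singular value diverges), so any near-singularity must live in a bounded effective matrix acting on the complementary and borderline directions; on a uniformly bounded matrix $\sigma_{\min}$ and $\min_i|\lambda_i|$ are comparable, so $\sigma_{\min}\to 0$ does force a genuine eigenvalue of $\bm{J}(\omega)$ to zero, contradicting a positive lower bound.

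I expect this non-normality-versus-divergence interplay to be the main obstacle, and the delicate bookkeeping will be controlling the off-diagonal coupling blocks in the Schur reduction, since the coupling between divergent and non-divergent directions can itself be mildly singular. Bounding these terms using the $o(k^{n_c})$ orthogonality built into Lemma~\ref{lemmaIntegerSet} and the explicit $L_{m,l}(\omega)$ asymptotics underlying Eq.~\eqref{eqJproper} is where the real work concentrates; the constructive ``if'' direction, by contrast, is immediate.
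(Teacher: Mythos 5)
Your ``if'' direction is exactly the paper's proof --- the same rank-one perturbation $\delta\bm{K}^R_n=-\lambda_n\ket{e_n}\bra{e_n}$ with $\lVert\delta\bm{K}^R_n\rVert=|\lambda_n|\to 0$ --- and is correct. For the converse, however, the paper does something much lighter than what you propose: it argues by contraposition, entirely at the level of norms. Assuming $|\lambda_i(\omega)|\ge a$ on a punctured neighborhood $\mathcal{B}$, it asserts that the bounded spectral radius of $\bm{J}(\omega)^{-1}$, together with analyticity, yields a uniform bound $\lVert\bm{J}(\omega)^{-1}\rVert\le M$ on $\mathcal{B}$; then any perturbation creating a bound state at $\omega\in\mathcal{B}$ satisfies $\ket{v}=-\bm{J}(\omega)^{-1}\delta\bm{K}^R\ket{v}$, hence $1\le M\lVert\delta\bm{K}^R\rVert$, so no vanishing sequence of perturbations exists. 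Your concern about non-normality is legitimate --- it is precisely the issue that the paper's asserted step ``bounded spectral radius implies bounded norm'' absorbs without further argument --- but the paper never needs singular values, Schur complements, or Lemma~\ref{lemmaIntegerSet}.

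The genuine gap is in the last step of your converse. From near-singularity of the bounded effective matrix (the Schur complement $\bm{J}^{\perp}(\omega)-\bm{C}(\omega)[\bm{J}^{(D)}(\omega)]^{-1}\bm{R}(\omega)$) you conclude that ``a genuine eigenvalue of $\bm{J}(\omega)$'' is forced to zero. But the eigenvalues of a Schur complement are not eigenvalues of the full matrix; the block factorization only relates determinants, $\det[\bm{J}]=\det[\bm{J}^{(D)}]\det[\text{Schur}]$, not spectra. To convert smallness of the Schur complement (or of $\det[\bm{J}]$ relative to its expected divergence) into $\min_i|\lambda_i(\bm{J}(\omega))|\to 0$, you must additionally establish that exactly $D$ eigenvalues of $\bm{J}(\omega)$ diverge while the remaining $N-D$ stay bounded --- and that is exactly the eigenvalue-valuation (Newton polygon) analysis the paper develops only \emph{later}, when proving the concrete criticality criteria (Lemmas~\ref{lemma_inter_crit}--\ref{lemmaCriticalSym}); it is not available ``for free'' at the stage of this lemma. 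Moreover, the off-diagonal blocks $\bm{R}(\omega)$, $\bm{C}(\omega)$ are themselves divergent (merely of strictly lower order than the reference scaling), so even boundedness of $\bm{C}[\bm{J}^{(D)}]^{-1}\bm{R}$ is a nontrivial estimate you have deferred. In short: your constructive direction is complete and matches the paper; your converse, as planned, would have to rebuild the paper's downstream asymptotic machinery inside this lemma, and the step transferring near-singularity of the effective block to an actual eigenvalue of $\bm{J}(\omega)$ is missing.
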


The proofs of Theorems 1 and 2 rely on establishing the criticality conditions for different physical scenarios. The general eigenvalue condition of Lemma~\ref{definitionCritical0} is shown to be equivalent to more practical criteria based on the matrix elements of \(\bm{J}(\omega)\). These specific conditions are presented for antisymmetric dispersion (Lemmas~\ref{lemma_inter_crit}, \ref{lemma_criti_anti_borderline1} and \ref{lemma_criti_anti_borderline}) and for symmetric dispersion (Lemma~\ref{lemmaCriticalSym}).

The remainder of this section is dedicated to the proof of Lemma~\ref{definitionCritical0}.

\begin{proof}(Lemma~\ref{definitionCritical0})
We prove the equivalence by demonstrating both implications. A bound state exists at energy \(E\) if and only if the matrix \(\bm{J}(E)\) is singular, i.e., \(\det[\bm{J}(E)]=0\). For clarity, we write the J-matrix as an explicit function of the interaction term \(\bm{K}^R\) that is subject to perturbations:
\begin{equation}
\bm{J}(\omega, \bm{K}^R)=\begin{cases}
\bm{K}(\omega)+\bm{K}^R-\omega\mathbb{1}_N  & \text{for emitters} \\
\bm{K}(\omega)+\bm{K}^R & \text{for separable potentials}
\end{cases}
\label{eqJKrelation}
\end{equation}
In both cases, a perturbation \(\delta \bm{K}^R\) simply shifts the J-matrix: \(\bm{J}(\omega, \bm{K}^R+\delta \bm{K}^R)=\bm{J}(\omega, \bm{K}^R)+\delta \bm{K}^R\).

\textbf{($\Rightarrow$)} First, we show that if a system is critical by Lemma \ref{definitionCritical0}, it is also critical by the definition in the main text.

The condition in the Lemma states that there exists a sequence of energies \(\omega_n \to 0\) for which the unperturbed matrix \(\bm{J}(\omega_n, \bm{K}^R)\) has an eigenvalue \(\lambda_n\) that approaches zero. Let \(\ket{e_n}\) be the corresponding normalized eigenvector:
\eq{
\bm{J}(\omega_n, \bm{K}^R)\ket{e_n} = \lambda_n \ket{e_n}, \quad \text{with} \quad \lim_{n\to\infty} \lambda_n = 0.
}
Our goal is to find a sequence of vanishing perturbations \(\delta \bm{K}_n^R\) that turns the matrix \(\bm{J}(\omega_n)\) into a singular matrix, thereby creating a bound state at energy \(\omega_n\).

Consider the sequence of rank-one perturbations:
\eq{
\delta \bm{K}_n^R = - \lambda_n \ket{e_n}\bra{e_n}.
}
The norm of these perturbations vanishes as \(n \to \infty\), since \(\lim_{n\to\infty} \lVert \delta \bm{K}_n^R \rVert = \lim_{n\to\infty} |\lambda_n| = 0\). Now, we examine the J-matrix of the \emph{perturbed} system at energy \(\omega_n\):
\eq{
\bm{J}(\omega_n, \bm{K}^R + \delta \bm{K}_n^R) = \bm{J}(\omega_n, \bm{K}^R) + \delta \bm{K}_n^R = \bm{J}(\omega_n, \bm{K}^R) - \lambda_n \ket{e_n}\bra{e_n}.
}
Applying this new matrix to the vector \(\ket{e_n}\) gives
\eq{
\bm{J}(\omega_n, \bm{K}^R + \delta \bm{K}_n^R) \ket{e_n} = \left(\bm{J}(\omega_n, \bm{K}^R) - \lambda_n \ket{e_n}\bra{e_n}\right)\ket{e_n} = 0.
}
This shows that \(\ket{e_n}\) is in the null space of \(\bm{J}(\omega_n, \bm{K}^R + \delta \bm{K}_n^R)\). Therefore, the perturbed matrix is singular, and the system has a bound state at energy \(E_n = \omega_n\). We have thus constructed a sequence of vanishing perturbations that generate a sequence of bound states with vanishing energies (\(E_n \to 0\)), satisfying the definition of a critical point from the main text.

\textbf{($\Rightarrow$)}
Next, we prove the converse of Lemma~\ref{definitionCritical0} by contraposition. We assume a system is not critical by the lemma's criterion and show it cannot be critical under the physical definition.

The assumption is that there exists a domain $\mathcal{B}$, a punctured disk at the origin $\{ \omega \in \mathbb{C} \mid 0 < |\omega| < \epsilon \}$ excluding a branch cut on the real axis, and a constant $a > 0$ such that the eigenvalues $\lambda_i(\omega)$ of $\bm{J}(\omega)$ satisfy $|\lambda_i(\omega)| \ge a$ for all $\omega \in \mathcal{B}$. This lower bound ensures $\bm{J}(\omega)$ is invertible on $\mathcal{B}$ and that the spectral radius of its inverse is bounded: $\rho(\bm{J}(\omega)^{-1}) = 1/\min_i |\lambda_i(\omega)| \le 1/a$. For an analytic matrix family defined on $\mathcal{B}$, a bounded spectral radius implies the matrix norm is also bounded as $\omega \to 0$. Thus, there exists a constant $M$ such that $\|\bm{J}(\omega)^{-1}\| \le M$ for all $\omega \in \mathcal{B}$.

A perturbation $\delta\bm{K}^R$ that generates a bound state at energy $\omega \in \mathcal{B}$ must satisfy $(\bm{J}(\omega) + \delta\bm{K}^R) \ket{v} = 0$ for some non-zero $\ket{v}$. Rearranging gives $\ket{v} = -\bm{J}(\omega)^{-1} \delta\bm{K}^R \ket{v}$. Taking the norm for a normalized vector $\ket{v}$ yields
\eq{
1 \le \lVert \bm{J}(\omega)^{-1} \rVert \lVert \delta\bm{K}^R \rVert \le M \lVert \delta\bm{K}^R \rVert.
}
This implies a universal lower bound on any such perturbation, $\lVert \delta\bm{K}^R \rVert \ge 1/M$.

Therefore, it is impossible to find a sequence of perturbations $\delta\bm{K}^R_n$ with $\lVert \delta\bm{K}^R_n \rVert \to 0$ that creates bound states with energies approaching zero. By definition, the system is not critical. We are done with the proof of Lemma~\ref{definitionCritical0}. 
\end{proof}

\section{Antisymmetric Dispersion: Proof of Theorem 1} \label{secProofOdd}

This section provides the complete proofs for Theorem 1 in the main text. The Appendix D of the main text sketches the argument for the interior case with \(D=N\). Here, we formalize that argument and generalize it to all cases.

The core of the proof is to establish the following lemma about the asymptotic behavior of the Jost function, \(J(\omega) = \det[\bm{J}(\omega)]\), for any non-critical system.

\begin{lemma}\label{lemma1}
For a non-critical system, the Jost function \(J(\omega)\) has the following asymptotic behavior as \(\omega \to 0\):
\begin{itemize}
    \item Interior Case:
        \begin{equation}
        J(\omega) \sim c L(\omega), \label{eqJmultiSM}
        \end{equation}
        where \(L(\omega) \equiv \prod_{i=1}^D L_{m, 2n_i}(\omega)\) and \(c\) is a non-zero constant.
    \item Borderline Case:
        \begin{equation}
        J(\omega) \sim \begin{cases} c^+ L(\omega)  & \text{for } \mathrm{Im}[\omega]>0 \\
        c^- L(\omega)  & \text{for } \mathrm{Im}[\omega]<0 \end{cases}, \label{eqJomegacasesSM}
        \end{equation}
        where \(c^+\) and \(c^-\) are non-zero constants.
\end{itemize}
\end{lemma}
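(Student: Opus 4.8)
The plan is to peel off the genuinely divergent directions identified by Lemma~\ref{lemmaIntegerSet}, evaluate the determinant on that subspace exactly as in Appendix~\ref{appenProof}, and control everything orthogonal to it through a Schur-complement factorization, invoking non-criticality only at the end to guarantee a nonzero prefactor.

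First I would work in the orthonormal basis of Lemma~\ref{lemmaIntegerSet}, namely $\ket{u_1},\dots,\ket{u_D}$ completed by an orthonormal basis $\{\ket{w}\}$ of the complement $\mathcal C$. Assigning each direction a divergence exponent $d_i\equiv(m-1)-2n_i$, I split off the divergent subspace $\mathcal D$: in the interior case $\mathcal D=\mathrm{span}\{\ket{u_1},\dots,\ket{u_D}\}$, where all $d_i>0$; in the borderline case $n_D=n_c$ (which forces $m$ odd, so that $n_c$ is an integer) gives $d_D=0$, so I retain $\ket{u_D}$ in the finite block and set $\mathcal D=\mathrm{span}\{\ket{u_1},\dots,\ket{u_{D-1}}\}$, $\mathcal C'=\mathrm{span}\{\ket{u_D}\}\oplus\mathcal C$. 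Using the Stieltjes evaluation of $L_{m,l}$, I assign each matrix element of $\bm J=\bm K+\bm K^R-\omega\mathbb{1}_N$ (or $\bm K+\bm K^R$) a leading power of $q_\omega$: the $\mathcal D$-block satisfies $(\bm J)_{ij}\sim c_ic_j^*L_{m,n_i+n_j}(\omega)$, of order $q_\omega^{-(d_i+d_j)/2}$ with the constant terms $\bm K^R,-\omega$ subleading, while a complement direction carries $\braket{w|v(k)}=O(k^{p_w})$ with $p_w>n_c$, so I record $f_w\equiv p_w-n_c>0$.

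Next I would write $J(\omega)=\det[\bm J_{\mathcal D\mathcal D}]\,\det[\bm S(\omega)]$ with $\bm S=\bm J_{\mathcal C\mathcal C}-\bm J_{\mathcal C\mathcal D}\bm J_{\mathcal D\mathcal D}^{-1}\bm J_{\mathcal D\mathcal C}$. For the divergent block I reuse the Leibniz argument of Appendix~\ref{appenProof}: factoring the common scale $(\prod_{i\in\mathcal D}|c_i|^2)\prod_{i\in\mathcal D}L_{m,2n_i}$ out of the permutation sum reduces the remainder to $\det[\bm Z(\pm)]=Z(\mathcal N_{\mathcal D})>0$, so $\det[\bm J_{\mathcal D\mathcal D}]$ is a nonzero multiple of the divergent product. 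The crux is the Schur correction. Writing $\bm J_{\mathcal D\mathcal D}=\Lambda M\Lambda$ with $\Lambda=\mathrm{diag}(q_\omega^{-d_i/2})$ and $M$ invertible (its determinant is proportional to $Z(\mathcal N_{\mathcal D})\neq0$), one has $(\bm J_{\mathcal D\mathcal D}^{-1})_{ij}\sim q_\omega^{(d_i+d_j)/2}$, whereas $(\bm J_{\mathcal C\mathcal D})_{wi}$ and $(\bm J_{\mathcal D\mathcal C})_{jw'}$ are of order $q_\omega^{-d_i/2+f_w}$ and $q_\omega^{-d_j/2+f_{w'}}$. Every summand of the triple product therefore carries exponent $f_w+f_{w'}>0$ and vanishes, so $\bm S(\omega)\to\bm J_{\mathcal C\mathcal C}(0)$, a finite matrix. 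In the borderline case the retained $\ket{u_D}$ has $f_{u_D}=0$, so the $u_D$-$u_D$ entry acquires only a finite correction; $\bm S'(\omega)$ then converges to a finite but $\mathrm{sign}[\mathrm{Im}\,\omega]$-dependent matrix, the dependence entering solely through $J_{DD}(i0^\pm)=|c_D|^2L_{m,m-1}(\pm)+(\text{non-universal principal-value real part})$.

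Finally I would establish nonvanishing of the Schur determinant: the $N-\dim\mathcal D$ eigenvalues of $\bm J(\omega)$ that stay bounded converge to the spectrum of $\bm S(0)$ (resp.\ $\bm S'(i0^\pm)$), and Lemma~\ref{definitionCritical0} guarantees that non-criticality bounds all eigenvalues of $\bm J$ below by a positive constant, whence $\det[\bm S(0)]=c_S\neq0$ (resp.\ $c_{S'}^\pm\neq0$). Assembling, the interior case gives $J(\omega)\sim c\,L(\omega)$ with $c=(\prod|c_i|^2)Z(\mathcal N)\,c_S\neq0$, which is Eq.~\eqref{eqJmultiSM}; the borderline case gives $J(\omega)\sim c_{A'}c_{S'}^\pm\prod_{i=1}^{D-1}L_{m,2n_i}$, and multiplying and dividing by the nonzero constant $L_{m,m-1}(\pm)$ rewrites this as $c^\pm\prod_{i=1}^{D}L_{m,2n_i}=c^\pm L(\omega)$, which is Eq.~\eqref{eqJomegacasesSM}. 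I expect the main obstacle to be the uniform power counting of the Schur correction across the three matrix factors — showing the cross-block contributions genuinely decouple from the divergent scale — together with converting the spectral non-criticality bound into nonvanishing of the finite-block determinant.
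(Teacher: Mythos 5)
Your proposal is correct in substance, but it takes a genuinely different route from the paper's proof. The two arguments share the same engine for the divergent block: the Leibniz-expansion/Cauchy-determinant identity (Lemma~\ref{eqlemmadetZ}), which gives $\det[\bm J_{\mathcal D\mathcal D}]\sim Z\,\bigl(\prod_i|c_i|^2\bigr)\prod_i L_{m,2n_i}(\omega)$ with $Z>0$. Where you differ is in controlling the remaining $N-\dim\mathcal D$ directions. The paper expands the full determinant in blocks, $J(\omega)=\det[\bm J^{(D)}(\omega)]\det[\bm J^{\perp}(\omega)]+J_{\mathrm{off}}(\omega)$ with $J_{\mathrm{off}}$ subleading (and, in the borderline case, keeps $\ket{u_D}$ inside the divergent block and does a cofactor expansion in the constant $y$), and then proves two-sided criticality criteria via the Newton-polygon method (Lemmas~\ref{lemma_inter_crit}, \ref{lemma_criti_anti_borderline1}, \ref{lemma_criti_anti_borderline}); non-criticality then forces the prefactor to be nonzero by contraposition. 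You instead use an exact Schur-complement factorization $J=\det[\bm J_{\mathcal D\mathcal D}]\det[\bm S]$, show by power counting that $\bm S(\omega)$ has finite limits, and convert non-criticality directly into $\det[\bm S(0)]\neq 0$ (resp.\ $\det[\bm S'(i0^\pm)]\neq 0$) through the spectral characterization of Lemma~\ref{definitionCritical0}, since the bounded eigenvalues of $\bm J(\omega)$ converge to the spectrum of the Schur limit. Your route is leaner: it needs only the implication ``non-critical $\Rightarrow$ nonzero constant'' and bypasses the valuation machinery entirely; the paper's route buys the full ``if and only if'' characterization of criticality (reused elsewhere, e.g.\ to show the interior $D=N$ case is never critical) and explicit formulas for $c^\pm$, from which facts like $c^+\neq 0$ when $\mathrm{Im}[y]<0$ are read off. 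Moving $\ket{u_D}$ into the finite block in the borderline case is a clean alternative to the paper's cofactor expansion.

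Three points need repair, none fatal. First, your cross-block bound $(\bm J_{\mathcal C\mathcal D})_{wi}=O(q_\omega^{-d_i/2+f_w})$ fails whenever $-d_i/2+f_w>0$: those entries contain constant contributions ($K^R_{wi}$ and convergent integrals), so the correct bound is $O\bigl(q_\omega^{\min(0,\,-d_i/2+f_w)}\bigr)$, and the summands of the triple product do not all carry exponent $f_w+f_{w'}$. The conclusion survives because the decay $O(q_\omega^{(d_i+d_j)/2})$ of $\bm J_{\mathcal D\mathcal D}^{-1}$, together with $d_i,d_j>0$ inside $\mathcal D$, still makes every corrected exponent $\min(d_i/2,f_w)+\min(d_j/2,f_{w'})>0$; note this is exactly why your removal of $\ket{u_D}$ from $\mathcal D$ in the borderline case is not optional. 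Second, in the borderline case the sign dependence of $\bm S'(i0^\pm)$ does \emph{not} enter solely through $J_{DD}(i0^\pm)$: the $O(1)$ Schur correction to the $(D,D)$ entry is itself sign-dependent (it renormalizes $|c_D|^2$ to $r(\mathcal N)|c_D|^2$ in front of $L_{m,m-1}(\pm)$, reproducing the paper's $c_D^\pm$); this is harmless since you only use finiteness and nonvanishing of the limits. Third, the eigenvalue-convergence step you flag as the main obstacle is indeed where the real work lies, but it is fillable with tools you already set up: writing $\bm J_{\mathcal D\mathcal D}-\lambda\mathbb{1}=\Lambda(M-\lambda\Lambda^{-2})\Lambda$ with $\Lambda^{-2}\to 0$ gives uniform convergence of $\bm J_{\mathcal C\mathcal D}(\bm J_{\mathcal D\mathcal D}-\lambda\mathbb{1})^{-1}\bm J_{\mathcal D\mathcal C}$ on compact $\lambda$-sets, so the bounded zeros of $\det[\bm J(\omega)-\lambda\mathbb{1}]$ converge to those of $\det[\bm S_0-\lambda\mathbb{1}]$, and Lemma~\ref{definitionCritical0} then excludes a vanishing eigenvalue of $\bm S_0$.
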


Using Lemma \ref{lemma1} and the relation \(S(E)=\frac{J(E-i0)}{J(E+i0)}\), we can derive Theorem 1. To handle the limits, we express \(\omega\) in polar coordinates, \(\omega=re^{i\theta}\). 
In the interior case, the constant \(c\) cancels, yielding
\eq{
S(0^+)&=\lim_{r\rightarrow 0^+}\frac{L(r, \theta=2\pi^-)}{L(r, \theta=0^+)} = \exp\left(\pi i \frac{n_{\mathrm{A}}}{m}\right), \\
S(0^-)&=\lim_{r\rightarrow 0^-}\frac{L(r, \theta=\pi^+)}{L(r, \theta=\pi^-)}=\exp\left(-\pi i \frac{n_{\mathrm{A}}}{m}\right),
}
where \(n_{\mathrm{A}} = \sum_{i=1}^D (2n_i+1)\).

In the borderline case, the constants \(c^\pm\) apply to the upper and lower half-planes and we have 
\begin{align}
S(0^+) & = \frac{c^-}{c^+} \exp\left(\frac{i\pi n_{\mathrm{A}}}{m}\right), \\
S(0^-) &= \frac{c^-}{c^+} \exp\left(-\frac{i\pi n_{\mathrm{A}}}{m}\right).
\end{align}
The ratio is therefore universal, as the non-universal constants \(c^\pm\) cancel:
\begin{equation}
\frac{S(0^+)}{S(0^-)} = \exp\left(\frac{2\pi i n_{\mathrm{A}}}{m}\right),
\end{equation}
which proves the claims in Theorem 1.

The proof of Lemma \ref{lemma1} is organized as follows. In the first subsection, Sec.~\ref{SecAppendetZAnti}, we derive the asymptotic behavior of the Jost function for interior interactions. We then address the more complex borderline interactions in Sec.~\ref{SecDNboderline}. Both of these analyses rely on supporting lemmas that provide the precise conditions for when a system is critical. To maintain the flow of the main argument, the proofs for these criticality lemmas are presented together in the final subsection, Sec.~\ref{sec:ProofNonCritical}.

\subsection{Interior Interactions\label{SecAppendetZAnti}}
We now provide the proof of Lemma \ref{lemma1} for all interior interactions ($n_i\notin \mathcal{N}$). Our strategy is to prove the lemma by addressing the two possible scenarios, $D=N$ and $D<N$, in turn. We will begin with the $D=N$ case and then extend the analysis to the $D<N$ case.
\subsubsection{$D=N$}

We begin with the case $D=N$. The Appendix  of the main text sketched the derivation for the asymptotic form of the Jost function. The argument, based on the Leibniz formula, established that each term in the expansion of $\det[\bm{J}(\omega)]$ is proportional to the same divergent function, $L(\omega)$. This allows one to factor out $L(\omega)$, leading to the relation $J(\omega) \sim \det[\bm{Z}(\pm)] L(\omega)$, where $\bm{Z}(\pm)$ is a matrix of constant coefficients, and the notation ($+$) and ($-$) corresponds to the limits from the upper ($\mathrm{Im}[\omega] > 0$) and lower ($\mathrm{Im}[\omega] < 0$) half-planes. However, that sketch was incomplete as it assumed that the value of this determinant, $\det[\bm{Z}(\pm)]$, is non-zero. To complete the proof, we must now formally show that $\det[\bm{Z}(+)]$ and $\det[\bm{Z}(-)]$ are equal and non-zero.

\begin{lemma} \label{eqlemmadetZ}
Given a set of distinct integers \(\mathcal{N}=\{n_1, \dots, n_D\}\) with \(n_i \leq n_c \equiv (m-1)/2\), the matrix \(\bm{Z}(\pm)\) with elements \(Z_{ij}(\pm) = \frac{\kappa_{m, n_i+n_j}(\pm)}{\kappa_{m,2n_i}(\pm)}\) has a determinant given by
\begin{equation}
Z(\mathcal{N}) \equiv \det[\bm{Z}(+)] = \det[\bm{Z}(-)] = \prod_{1\le j < i \le D} \frac{\sin^2\left(\frac{\pi(m+1)(n_i-n_j)}{2m}\right)}{\sin^2\left(\frac{\pi}{2m} [(m+1)(n_i+n_j)+1]\right)}. \label{eqZNoddSM}
\end{equation}
\end{lemma}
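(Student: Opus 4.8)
The plan is to recognize that, once a purely row-dependent normalization is factored out, $\bm{Z}(\pm)$ collapses to a Cauchy matrix, so the determinant follows from the classical Cauchy determinant identity together with elementary half-angle manipulations. First I would strip the denominators: since $Z_{ij}(\pm)=\kappa_{m,n_i+n_j}(\pm)/\kappa_{m,2n_i}(\pm)$ and the factor $\kappa_{m,2n_i}(\pm)$ depends only on the row index $i$, it pulls out of the determinant,
\begin{equation}
\det[\bm{Z}(\pm)]=\frac{\det\big[\kappa_{m,n_i+n_j}(\pm)\big]_{i,j}}{\prod_{i=1}^{D}\kappa_{m,2n_i}(\pm)},
\end{equation}
which isolates the essential object: the symmetric matrix $M^{\pm}_{ij}\equiv\kappa_{m,n_i+n_j}(\pm)$ whose entries depend on $i,j$ only through $n_i+n_j$.

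Next I would expose the Cauchy structure. Setting $\mu=e^{i\pi/m}$ and packaging the sign and phase of each index into a single unimodular number,
\[
y_i\equiv(-1)^{n_i}e^{i\pi n_i/m}=e^{i\pi(m+1)n_i/m},
\]
and using $(-1)^{n_i+n_j}=(-1)^{n_i}(-1)^{n_j}$ and $(\pm1)^2=1$ in Eq.~\eqref{eqkappa}, a short computation should give
\begin{align}
M^{+}_{ij}&=\frac{1}{1-\mu y_iy_j}, & M^{-}_{ij}&=\frac{1}{\mu y_iy_j\,(1-\mu y_iy_j)}.
\end{align}
The matrix $M^{+}$ is exactly of Cauchy type $[\,1/(1-x_iz_j)\,]$ with $x_i=\mu y_i$ and $z_j=y_j$, so the classical identity $\det[1/(1-x_iz_j)]=\prod_{i<j}(x_i-x_j)(z_i-z_j)/\prod_{i,j}(1-x_iz_j)$ applies. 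Since $x_i-x_j=\mu(y_i-y_j)$, and since the diagonal factors $\kappa_{m,2n_i}(+)=M^{+}_{ii}=1/(1-\mu y_i^2)$ cancel exactly the $i=j$ terms in $\prod_{i,j}(1-\mu y_iy_j)$, I expect the compact intermediate form
\begin{equation}
\det[\bm{Z}(+)]=\mu^{\binom{D}{2}}\prod_{1\le j<i\le D}\left(\frac{y_i-y_j}{1-\mu y_iy_j}\right)^2.
\end{equation}

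Converting this to the stated sine-ratio is then a half-angle calculation: writing $y_i-y_j=2i\,e^{i(\phi_i+\phi_j)/2}\sin\frac{\phi_i-\phi_j}{2}$ and $1-\mu y_iy_j=-2i\,e^{i(\pi/m+\phi_i+\phi_j)/2}\sin\frac{\pi/m+\phi_i+\phi_j}{2}$ with $\phi_i=\pi(m+1)n_i/m$, the residual phase of each squared ratio is $e^{-i\pi/m}=\mu^{-1}$, precisely cancelling the $\mu^{\binom{D}{2}}$ prefactor and reproducing Eq.~\eqref{eqZNoddSM}; this form also makes manifest that the determinant is real and non-negative. For the equality $\det[\bm{Z}(-)]=\det[\bm{Z}(+)]$ I would avoid any recomputation: the relation $M^{-}=\mu^{-1}\,\mathrm{diag}(y_i^{-1})\,M^{+}\,\mathrm{diag}(y_i^{-1})$ shows that the scalar $\mu^{-1}$ and the two diagonal conjugations contribute identical factors $\mu^{-D}\prod_i y_i^{-2}$ to both $\det[M^{-}]$ and to $\prod_i\kappa_{m,2n_i}(-)=\prod_i M^{-}_{ii}$, so they cancel in the ratio and leave $\det[\bm{Z}(-)]=\det[M^{+}]/\prod_i\kappa_{m,2n_i}(+)=\det[\bm{Z}(+)]$.

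The main obstacle is not any single hard estimate but the algebraic bookkeeping: correctly collapsing $(-1)^{n_i}e^{i\pi n_i/m}$ into the single phase $y_i$ so that the Cauchy kernel emerges, and then tracking the half-angle sign and phase factors so the powers of $\mu$ cancel exactly. To guard against a stray phase I would verify the overall constant on the $D=2$ case, where the identity reduces to a single explicit ratio of sines, before trusting the general bookkeeping.
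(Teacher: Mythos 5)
Your proposal is correct and follows essentially the same route as the paper's proof: factor the row-dependent normalization $\kappa_{m,2n_i}(\pm)$ out of the determinant, recognize the remaining matrix as a Cauchy kernel and apply the classical Cauchy determinant identity, convert the resulting product $\prod_{j<i}\left[(x_i-x_j)/(1-x_ix_j)\right]^2$ to the sine form, and obtain $\det[\bm{Z}(-)]=\det[\bm{Z}(+)]$ from a diagonal similarity relation. The only difference is bookkeeping: the paper absorbs the half-power into its variable $x_k=(-1)^{n_k}\mu^{n_k+1/2}$, which symmetrizes the kernel to $1/(1-x_ix_j)$ and makes the stray $\mu^{\binom{D}{2}}$ factor you track explicitly never appear.
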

Since all \(n_i\leq n_c\) are distinct, the arguments of the sine functions are never integer multiples of \(\pi\), ensuring that \(Z(\mathcal{N})\) is a finite 
positive constant.
\begin{proof}
The functions \(\kappa_{m,l}(\pm)\) are defined in the main text. By defining \(x_k = (-1)^{n_k}e^{i\pi(n_k+1/2)/m} = (-1)^{n_k}\mu^{n_k+1/2}\) with \(\mu=e^{i\pi/m}\), the matrix elements become
\begin{equation}
Z_{ij}(+) = \frac{1-x_i^2}{1-x_i x_j} \quad \text{and} \quad Z_{ij}(-) = \frac{x_j}{x_i} Z_{ij}(+).
\end{equation}
It is easy to see that \(\det[\bm{Z}(-)] = \det[\bm{Z}(+)]\), hence we focus on \(\bm{Z}(+)\). Factoring out \((1-x_i^2)\) from each row gives
\begin{equation}
\det[\bm{Z}(+)] = \left(\prod_{i=1}^D (1-x_i^2)\right) \det[\bm{C}],
\end{equation}
where \(\bm{C}\) is a matrix with elements \(C_{ij} = \frac{1}{1-x_i x_j}\). This is a well-known variant of the Cauchy matrix. Its determinant is given by
\begin{equation}
\det[\bm{C}] = \frac{\prod_{1\le j < i \le D} (x_i-x_j)^2}{\prod_{i,j=1}^D (1-x_i x_j)}.
\end{equation}
Combining these results and simplifying leads to
\begin{equation}
\det[\bm{Z}(+)] = \prod_{1\le j < i \le D} \frac{(x_i-x_j)^2}{(1-x_i x_j)^2}. \label{eqdetZplus}
\end{equation}
The final step is to rewrite this expression using the definitions of \(x_k\). A direct calculation shows
\begin{equation}
\frac{x_i-x_j}{1-x_i x_j} = \frac{\sin\left(\frac{\pi(m+1)(n_i-n_j)}{2m}\right)}{\sin\left(\frac{\pi}{2m} [(m+1)(n_i+n_j)+1]\right)}.
\end{equation}
Substituting this back yields the final result in Eq. \eqref{eqZNoddSM}, completing the proof.
\end{proof}
Although presented here in the context of interior interactions, Lemma~\ref{eqlemmadetZ} is a general result that holds for borderline interactions ($n_c\in \mathcal{N}$) as well. We will rely on this key identity in our subsequent analysis of the borderline case. 

By combining the argument from the Appendix  with the result of Lemma \ref{eqlemmadetZ}, we have now formally established that the relation $J(\omega) \sim c L(\omega)$ holds for any interior system with $D=N$. This result was derived without imposing a non-criticality condition, yet such a condition is a premise of Lemma \ref{lemma1}. The apparent discrepancy is resolved by the fact that this case is inherently non-critical. We will state this formally in the next subsection via Lemma \ref{lemma_inter_crit}, whose own proof is deferred to Section \ref{sec:ProofNonCritical}.
\subsubsection{$D<N$\label{SecDlessNinterior}}

We now extend the proof from the $D=N$ case to the $D<N$ case. The strategy is to partition the Jost matrix in a basis that separates the $D$ nontrivial interaction directions from their orthogonal complement. We work in an orthonormal basis where the first $D$ vectors are $\{|u_1\rangle, \dots, |u_D\rangle\}$, in which $\bm{J}(\omega)$ takes the block form

\begin{equation}
\bm{J}(\omega)=\begin{pmatrix}
    \bm{J}^{(D)}(\omega) &  \bm{R}(\omega)  \\
   \bm{C}(\omega)  &  \bm{J}^{\perp}(\omega) 
  \end{pmatrix}.
\end{equation}
The \(D \times D\) block \(\bm{J}^{(D)}(\omega)\) is structurally identical to the J-matrix analyzed in the \(D=N\) case. 
The key to the proof is the relative asymptotic scaling of the matrix elements across the different blocks. To formalize this, we extend the definition of the integers \(n_i\) by setting \(n_i=n_c\) for all \(i > D\). With this definition, the scaling of the matrix elements can be summarized as follows:

\begin{itemize}
    \item Elements within the diagonal blocks, \(J_{ij} \in \bm{J}^{(D)}(\omega)\) or \(\bm{J}^{\perp}(\omega)\), have an asymptotic order of \(J_{ij}(\omega) = \mathrm{O}(q_\omega^{-m+1+n_i+n_j})\). More precisely, for elements in $\bm{J}^{(D)}$ this bound is tight, $J_{ij}(\omega) = \Theta(q_\omega^{-m+1+n_i+n_j})$, while the elements of $\bm{J}^{\perp}(\omega)$ approach non-zero constants.
    \item Elements within the off-diagonal blocks, \(J_{ij} \in \bm{R}(\omega)\) or \(\bm{C}(\omega)\), are of a strictly lower order, \(J_{ij}(\omega) = \mathrm{o}(q_\omega^{-m+1+n_i+n_j})\).
\end{itemize}
From our analysis of the $D=N$ case, we know the asymptotic behavior of the determinant of the upper-left submatrix:
\begin{equation}
\det[\bm{J}^{(D)}(\omega)] \sim \left(Z(\mathcal{N})\prod_{i=1}^D |c_i|^2\right) L(\omega).
\label{eq:det_J_D_asymptotic}
\end{equation}
The determinant can be decomposed based on this block structure as
\begin{equation}
J(\omega) = \det[\bm{J}^{(D)}(\omega)] \det[\bm{J}^{\perp}(\omega)] + J_{\mathrm{off}}(\omega).
\label{eqJprodsubmat}
\end{equation}
In the determinant's Leibniz expansion, the term $\det[\bm{J}^{(D)}(\omega)] \det[\bm{J}^{\perp}(\omega)]$ represents the sum over all product terms built exclusively from elements within the diagonal blocks. The remainder, $J_{\mathrm{off}}(\omega)$, comprises all product terms that include at least one element from the off-diagonal blocks. Due to the smaller relative asymptotic order of these off-diagonal elements, it can be shown that $J_{\mathrm{off}}(\omega) = \mathrm{o}(\det[\bm{J}^{(D)}(\omega)])$.
The asymptotic behavior of $J(\omega)$ thus depends on the constant matrix $\bm{J}^{\perp} \equiv \lim_{\omega\to 0} \bm{J}^{\perp}(\omega)$. We have established two distinct mathematical outcomes:
\begin{itemize}
\item If $\det[\bm{J}^{\perp}] \neq 0$, then $J(\omega)\sim c L(\omega)$ with a non-zero constant $c$.
\item If $\det[\bm{J}^{\perp}] = 0$, then $J(\omega) = \mathrm{o}(L(\omega))$.
\end{itemize}
To complete the proof of Lemma \ref{lemma1}, we must now show that these two mathematical conditions correspond exactly to the system being non-critical and critical, respectively. The following lemma provides this necessary connection.
\begin{lemma} \label{lemma_inter_crit}
For interior interactions, the system is critical if and only if $D<N$ and $\det[\bm{J}^{\perp}]= 0$.
\end{lemma}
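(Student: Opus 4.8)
The plan is to prove the equivalence through the analytic criticality criterion of Lemma~\ref{definitionCritical0}: the system is critical if and only if $\liminf_{\omega\to0}\min_i|\lambda_i(\omega)|=0$, where $\lambda_i(\omega)$ denote the eigenvalues of $\bm{J}(\omega)$. The entire argument then reduces to tracking these eigenvalues in the block basis as $\omega\to0$, showing that they separate into a group that diverges (governed by $\bm{J}^{(D)}(\omega)$) and a group that converges to the spectrum of the constant block $\bm{J}^\perp\equiv\lim_{\omega\to0}\bm{J}^\perp(\omega)$.

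First I would control the upper-left block. Using $J^{(D)}_{ij}(\omega)\sim c_ic_j^*L_{m,n_i+n_j}(\omega)$ together with the exact evaluation $L_{m,l}(\omega)=-\tfrac{2\pi i}{m}\kappa_{m,l}(\pm)q_\omega^{-m+1+l}$, I would write $\bm{J}^{(D)}(\omega)=-\tfrac{2\pi i}{m}q_\omega^{-m+1}\,\bm{A}\,(\tilde{\bm{Z}}+\bm{E}(\omega))\,\bm{B}$, where $\bm{A}=\mathrm{diag}(c_iq_\omega^{n_i})$ and $\bm{B}=\mathrm{diag}(c_i^*q_\omega^{n_i})$ are invertible diagonal matrices, $\tilde{\bm{Z}}=[\kappa_{m,n_i+n_j}(\pm)]$ is a constant matrix with $\det\tilde{\bm{Z}}=Z(\mathcal{N})\prod_i\kappa_{m,2n_i}(\pm)\neq0$ by Lemma~\ref{eqlemmadetZ}, and $\bm{E}(\omega)\to0$ absorbs the subleading corrections. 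Inverting gives $[\bm{J}^{(D)}(\omega)]^{-1}=\tfrac{im}{2\pi}q_\omega^{m-1}\bm{B}^{-1}(\tilde{\bm{Z}}+\bm{E})^{-1}\bm{A}^{-1}$, whose entries are $\Theta(|q_\omega|^{\,m-1-n_j-n_k})$; since every $n_j,n_k<n_c$ forces $m-1-n_j-n_k>0$, all entries vanish and $\|[\bm{J}^{(D)}(\omega)]^{-1}\|\to0$. This single estimate does the work: it yields $\min_i|\lambda_i(\bm{J}^{(D)})|\ge1/\|[\bm{J}^{(D)}]^{-1}\|\to\infty$, so for $D=N$ (where $\bm{J}=\bm{J}^{(D)}$) one has $\min_i|\lambda_i|\to\infty$ and the system is never critical, consistent with the right-hand condition $D<N$ failing.

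For $D<N$ I would use the Schur complement relative to the divergent block. For bounded $\lambda$, $\|[\bm{J}^{(D)}(\omega)]^{-1}\|\to0$ guarantees that $\bm{J}^{(D)}(\omega)-\lambda\mathbb{1}_D$ is invertible for small $\omega$ with inverse of the same graded order, and the block-determinant identity gives $\det[\bm{J}(\omega)-\lambda\mathbb{1}_N]=\det[\bm{J}^{(D)}(\omega)-\lambda\mathbb{1}_D]\,\det[\bm{S}_\lambda(\omega)]$ with $\bm{S}_\lambda(\omega)=\bm{J}^\perp(\omega)-\lambda\mathbb{1}_{N-D}-\bm{C}(\omega)[\bm{J}^{(D)}(\omega)-\lambda\mathbb{1}_D]^{-1}\bm{R}(\omega)$. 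The crucial estimate is that the correction vanishes: multiplying the $o(|q_\omega|^{-m+1+n_c+n_j})$ order of $\bm{C}$, the $\Theta(|q_\omega|^{m-1-n_j-n_k})$ order of $[\bm{J}^{(D)}]^{-1}$, and the $o(|q_\omega|^{-m+1+n_k+n_c})$ order of $\bm{R}$ (the off-diagonal scalings from Sec.~\ref{SecDlessNinterior}, with $n_c$ on the outer factors) yields total exponent $-m+1+2n_c=0$, so $\bm{C}[\bm{J}^{(D)}]^{-1}\bm{R}=o(1)$. Hence $\bm{S}_\lambda(\omega)\to\bm{J}^\perp-\lambda\mathbb{1}_{N-D}$ uniformly on bounded $\lambda$-sets, and by the argument principle the $N-D$ bounded eigenvalues of $\bm{J}(\omega)$ converge (with multiplicity) to $\mathrm{spec}(\bm{J}^\perp)$ while the remaining $D$ diverge. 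Therefore $\liminf_{\omega\to0}\min_i|\lambda_i(\omega)|=\min\{|\mu|:\mu\in\mathrm{spec}(\bm{J}^\perp)\}$, which is $0$ if and only if $\det[\bm{J}^\perp]=0$. Combined with the $D=N$ case, the system is critical iff $D<N$ and $\det[\bm{J}^\perp]=0$.

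I expect the main obstacle to lie in the bookkeeping underlying the two $o(1)$ estimates rather than in any new idea. For Step~1 one must verify that the subleading corrections to $J^{(D)}_{ij}$ genuinely factor through the diagonal scalings, so that $\bm{E}(\omega)\to0$ in norm and $(\tilde{\bm{Z}}+\bm{E})^{-1}\to\tilde{\bm{Z}}^{-1}$; a mere divergence of $\det[\bm{J}^{(D)}]$ would not by itself rule out a spuriously bounded eigenvalue, so it is the control of the inverse norm that is essential. For the $D<N$ step the cancellation is exact only because the off-diagonal blocks carry a strict little-$o$ rather than $\Theta$ scaling and because the outer index on each off-diagonal factor sits precisely at $n_c$; tracking these two facts carefully is the delicate part. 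Once they are secured, the Schur reduction and the identification of the bounded eigenvalues with $\mathrm{spec}(\bm{J}^\perp)$ are routine.
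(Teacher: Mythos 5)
Your proof is correct, but it takes a genuinely different route from the paper's. The paper works with the characteristic polynomial $\det[\bm{J}(\omega)-\lambda\mathbb{1}_N]=\sum_{l}(-1)^{N-l}a_l(\omega)\lambda^l$ and applies the Newton polygon method: the valuations of the coefficients $a_l(\omega)$ are computed from the dominant principal minors (with Lemma~\ref{eqlemmadetZ} guaranteeing that the leading minors do not cancel), and criticality is read off as the presence of a negative-slope segment of the lower convex hull, which occurs exactly when $v(a_0)>v(a_{N-D})$, i.e.\ when $D<N$ and $\det[\bm{J}^{\perp}]=0$. You instead factor the divergent block explicitly as $\bm{A}(\tilde{\bm{Z}}+\bm{E}(\omega))\bm{B}$ times a scalar, use Lemma~\ref{eqlemmadetZ} to invert $\tilde{\bm{Z}}$, deduce $\|[\bm{J}^{(D)}(\omega)]^{-1}\|\to 0$, and then Schur-complement the divergent block away, concluding via Hurwitz/argument-principle that the $N-D$ bounded eigenvalues of $\bm{J}(\omega)$ converge to $\mathrm{spec}(\bm{J}^{\perp})$ while the other $D$ diverge. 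Both arguments rest on the same three pillars---the eigenvalue criterion of Lemma~\ref{definitionCritical0}, the block scaling hierarchy of Sec.~\ref{SecDlessNinterior}, and the nonvanishing Cauchy-type determinant---but the mechanism for extracting eigenvalue behavior differs. Your route is more elementary (no valuation or convex-hull machinery) and yields strictly more information: the small eigenvalues actually converge to the spectrum of $\bm{J}^{\perp}$, not merely have non-negative valuation; your exponent identity $(-m+1+n_c+n_j)+(m-1-n_j-n_k)+(-m+1+n_k+n_c)=0$ is the precise counterpart of the paper's comparison of $v(a_l)$ with $v(a_{N-D})$. What the paper's formulation buys in exchange is portability: the Newton polygon transfers almost verbatim to the borderline case (where $J_{DD}(\omega)$ is $\Theta(1)$ or logarithmic, so $\|[\bm{J}^{(D)}]^{-1}\|$ no longer vanishes and your Schur step would have to be redone relative to a strictly divergent sub-block only) and to symmetric dispersion via MacLane's two-component valuations. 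Two cosmetic points: the entries of $[\bm{J}^{(D)}(\omega)]^{-1}$ are in general only $\mathrm{O}(|q_\omega|^{\,m-1-n_j-n_k})$ rather than $\Theta$ (an entry of $\tilde{\bm{Z}}^{-1}$ may vanish), which is harmless since only upper bounds enter your estimates; and the uniformity bookkeeping you flag for $\bm{E}(\omega)\to 0$ is at the same level of rigor as the paper's own entry-wise asymptotics, so it is not a gap relative to the paper.
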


 By completing the argument for the $D<N$ case and formally proving that the $D=N$ case is always non-critical, Lemma \ref{lemma_inter_crit} completes the proof of Lemma \ref{lemma1} for all interior interactions. The proof of Lemma \ref{lemma_inter_crit} itself is provided in Section \ref{sec_crit_asym_DN_interior}.

\subsection{Borderline Interactions \label{SecDNboderline}}

We now turn to the proof of Lemma \ref{lemma1} for the borderline case, where the integer set $\mathcal{N}$ includes the critical order $n_c = (m-1)/2$. While most matrix elements $J_{ij}(\omega)$ behave similarly to their counterparts in the interior case, the unique behavior of this scenario is isolated entirely in the single matrix element $J_{DD}(\omega)$, corresponding to $n_D = n_c$. In the $\omega \to 0$ limit, $J_{DD}(\omega)$ is determined by the sum of two finite contributions:
\begin{subequations}
\eq{
J_{DD}(\omega)&\sim |c_D|^2 L_{m, m-1}(\omega)+ y,   \\
y&=K^R_{DD} -\mathcal{P}\int_{-\infty}^{+\infty} dk\ \frac{| \braket{u_D|v(k)}|^2}{k^m}.
}
\end{subequations}
The first contribution, from the universal long-range interaction, is purely imaginary and depends on the sign of $\mathrm{Im}[\omega]$, as $L_{m,m-1}(\omega) = \mp i\pi/m$ for $\mathrm{Im}[\omega] \gtrless 0$. The second contribution, $y$, is a constant determined by the global properties of $|v(k)\rangle$. These two finite terms  allow for a fine-tuned cancellation of the leading orders in the determinant of the J-matrix, introducing a new mechanism for criticality that is present even for the $D=N$ case.

\subsubsection{$D=N$\label{sec_crit_asym_DN_interior}}

We begin with the \(D=N\) case, which serves as the simplest example for analyzing this new criticality mechanism. We separate the J-matrix as \(\bm{J}(\omega) = \bm{B}(\omega) + \bm{\Delta}\), where \(\bm{B}(\omega)\) represents the universal long-range behavior 
\begin{equation}
    B_{ij}(\omega) = c_ic_j^*L_{m, n_i+n_j}(\omega).
\end{equation}
and \(\bm{\Delta}\) is a constant matrix with a single non-zero element \(\Delta_{DD} = y\). Using the cofactor expansion, the determinant is
\begin{equation}
\det[\bm{J}(\omega)] \sim  \det[\bm{B}(\omega)] + y \cdot \det[\bm{B}^{(D-1)}(\omega)], \label{eqdetHsignsecC_revised}
\end{equation}
where \(\bm{B}^{(D-1)}(\omega)\) is the principal submatrix of \(\bm{B}(\omega)\) excluding the \(D\)-th row and column. 

We find the asymptotic behavior of the two determinants on the right-hand side by applying the same logic used for the interior case, which relies on Lemma~\ref{eqlemmadetZ}. This gives
\begin{align}
\det[\bm{B}(\omega)] &\sim Z(\mathcal{N}) \left( \prod_{i=1}^{D} |c_i|^2 \right) L(\omega), \label{eqdetB} \\
\det[\bm{B}^{(D-1)}(\omega)] &\sim Z(\mathcal{N}\setminus\{n_c\}) \left( \prod_{i=1}^{D-1} |c_i|^2 \right) \left( \prod_{i=1}^{D-1} L_{m,2n_i}(\omega) \right). \label{eqdetBDm1}
\end{align}

Substituting these asymptotic forms back into Eq.~\eqref{eqdetHsignsecC_revised}, we can factor out the common term \(L(\omega)\) to find the prefactors in the relation \(\det[\bm{J}(\omega)] \sim c_D^\pm L(\omega)\):

\begin{equation}
c_D^\pm = \left( r(\mathcal{N}) |c_D|^2 \pm i\frac{m}{\pi} y \right) Z(\mathcal{N}\setminus\{n_c\}) \prod_{i=1}^{D-1} |c_i|^2, \label{eqcpmBorderline}
\end{equation}
where \(r(\mathcal{N}) \equiv Z(\mathcal{N})/Z(\mathcal{N}\setminus\{n_c\})\). 

Given that \(\mathrm{Im}[y]<0\), the coefficient \(c_D^+\) is never zero. However, \(c_D^-\) can be tuned to zero, which would suppress the Jost function's leading behavior to \(J(\omega) = \mathrm{o}(L(\omega))\). The following lemma connects this mathematical cancellation to criticality.

\begin{lemma}\label{lemma_criti_anti_borderline1}
For borderline interactions with \(D=N\), the system is critical if and only if \(c_D^-=0\), where \(c_D^-\) is the coefficient defined in Eq.~\eqref{eqcpmBorderline}.
\end{lemma}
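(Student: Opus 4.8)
The plan is to reduce criticality to the non-vanishing of a single scalar Schur complement whose limit is proportional to \(c_D^\pm\). By Lemma~\ref{definitionCritical0} and its proof — which controls the admissible perturbations through \(\|\bm J(\omega)^{-1}\|\); equivalently, by Eckart--Young the smallest destabilizing perturbation in Definition~\ref{definitionCritical} has operator norm \(\sigma_{\min}[\bm J(\omega)]=\|\bm J(\omega)^{-1}\|^{-1}\) — the system is critical if and only if \(\|\bm J(\omega)^{-1}\|\) is unbounded as \(\omega\to 0\) within the punctured disk \(\mathcal B\). I therefore aim to track \(\|\bm J(\omega)^{-1}\|\) directly. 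Partition \(\bm J(\omega)\) as in Sec.~\ref{SecDlessNinterior}, now isolating the borderline direction \(n_D=n_c\): write \(\bm J=\left(\begin{smallmatrix}\bm A & \bm b\\ \bm c^T & d\end{smallmatrix}\right)\), where \(\bm A=\bm J^{(D-1)}\) is the purely interior \((D-1)\times(D-1)\) block associated with \(\mathcal N\setminus\{n_c\}\), \(d=J_{DD}(\omega)\), and \(\bm b,\bm c\) collect the couplings \(J_{iD},J_{Di}\) (\(i<D\)). Since \(\det\bm A\) diverges, \(\bm A\) is invertible near the origin, and I define the Schur complement \(s(\omega)=d-\bm c^T\bm A^{-1}\bm b\), so that \(\det\bm J=\det\bm A\cdot s\).

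First I would extract the limit of \(s(\omega)\). Combining \(\det\bm J\sim c_D^\pm L(\omega)\) [Eq.~\eqref{eqcpmBorderline}] with the interior asymptotics \(\det\bm A\sim Z(\mathcal N\setminus\{n_c\})\big(\prod_{i<D}|c_i|^2\big)\prod_{i<D}L_{m,2n_i}(\omega)\) [Eq.~\eqref{eqdetBDm1}], and writing \(L(\omega)=L_{m,m-1}(\omega)\prod_{i<D}L_{m,2n_i}(\omega)\), the diverging factors \(\prod_{i<D}L_{m,2n_i}\) cancel in the ratio \(s=\det\bm J/\det\bm A\). Because \(L_{m,m-1}(\omega)=\mp i\pi/m\) is a nonzero constant and \(Z(\mathcal N\setminus\{n_c\})>0\) by Lemma~\ref{eqlemmadetZ}, this yields \(s(\omega)\to s^\pm_\infty=c_D^\pm\,L_{m,m-1}(\pm)\big/\big(Z(\mathcal N\setminus\{n_c\})\prod_{i<D}|c_i|^2\big)\), a finite constant proportional to \(c_D^\pm\). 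Hence \(s^\pm_\infty=0\) if and only if \(c_D^\pm=0\).

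Next I would run the two directions. If \(c_D^-=0\), then \(s(\omega)\to0\) along \(\mathrm{Im}\,\omega<0\); the vector \(\ket w=(-\bm A^{-1}\bm b,\,1)^T\) satisfies \(\bm J\ket w=s\ket{e_D}\), and since \(\|\ket w\|\ge 1\) this gives \(\sigma_{\min}[\bm J]\le|s|\to0\), i.e.\ \(\|\bm J^{-1}\|\to\infty\): the system is critical. Conversely, if \(c_D^-\neq0\), then \(c_D^+\neq0\) as well (recall \(c_D^+\) never vanishes because \(\mathrm{Im}\,y<0\)), so both \(s^\pm_\infty\neq0\) and \(s(\omega)\) is bounded away from \(0\). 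I would then invoke the block-inverse formula for \(\bm J^{-1}\) and the interior scalings \(n_i<n_c\) (\(i<D\)) to verify that \(\bm A^{-1}\to0\), \(\bm A^{-1}\bm b\to0\), and \(\bm c^T\bm A^{-1}\to0\), while \(s^{-1}\) stays bounded; hence every block of \(\bm J^{-1}\) is bounded and \(\|\bm J^{-1}\|\to|s^\pm_\infty|^{-1}<\infty\), so the system is non-critical. Since criticality is governed entirely by the lower half-plane, this establishes the claimed equivalence \(\text{critical}\iff c_D^-=0\).

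The main obstacle is promoting these radial/half-plane statements into a genuine claim about \(\liminf_{\omega\to0}\|\bm J(\omega)^{-1}\|\) over the full punctured disk \(\mathcal B\): I must show that the convergence \(s(\omega)\to s^\pm_\infty\) and the vanishing of \(\bm A^{-1}\bm b,\bm c^T\bm A^{-1}\) are \emph{uniform} as \(\arg\omega\) ranges up to the branch cut, so that no sequence \(\omega_n\to0\) can evade the bound of the non-critical direction. This requires care with the \(\theta\)-dependence hidden in \(q_\omega=|\omega|^{1/m}e^{i\theta/m}\) and in the subleading corrections to the asymptotics quoted above; the leading coefficients \(\kappa_{m,l}(\pm)\) depend only on \(\mathrm{sign}(\mathrm{Im}\,\omega)\), which is precisely what makes \(s^\pm_\infty\) half-plane constants, but the uniformity of the remainder terms is what must be controlled to close the argument.
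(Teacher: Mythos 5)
Your proposal is correct, and it takes a genuinely different route from the paper's proof. The paper proves Lemma~\ref{lemma_criti_anti_borderline1} with the Newton polygon method: it computes the valuations of the coefficients \(a_l(\omega)\) of the characteristic polynomial of \(\bm{J}(\omega)\), separately in the limits \(\omega \to i0^{\pm}\), and identifies criticality with a negative-slope segment of the polygon (a vanishing eigenvalue, via Lemma~\ref{definitionCritical0}); for \(D=N\) such a segment can only appear in the lower half-plane, precisely when \(v^-(a_0) > v(a_1)\), i.e.\ when \(c_D^-=0\). You instead isolate the borderline direction with a scalar Schur complement \(s(\omega)=J_{DD}-\bm{c}^T\bm{A}^{-1}\bm{b}=\det\bm{J}/\det\bm{A}\), show that its half-plane limits are \(s^{\pm}_{\infty}=y \mp i\pi r(\mathcal{N})|c_D|^2/m\), which are nonzero multiples of \(c_D^{\pm}\), and convert this into criticality through the smallest singular value: the explicit vector \((-\bm{A}^{-1}\bm{b},\,1)^T\) gives \(\sigma_{\min}[\bm{J}]\le |s|\) (forward direction), while the block-inverse formula plus the interior scalings bounds \(\lVert\bm{J}^{-1}\rVert\) when \(s\) stays away from zero (converse). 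Your route buys two things: (i) via Eckart--Young, the criterion \(\liminf_{\omega\to0}\sigma_{\min}[\bm{J}(\omega)]=0\) is \emph{exactly} equivalent to Definition~\ref{definitionCritical}, so you bypass the eigenvalue-based Lemma~\ref{definitionCritical0} and, in particular, the delicate step in its proof that a bounded spectral radius of \(\bm{J}(\omega)^{-1}\) implies a bounded norm; (ii) the approximate null vector makes the physics transparent---the incipient bound state lives along \(\ket{u_D}\) with a vanishing admixture \(-\bm{A}^{-1}\bm{b}\). What the paper's heavier machinery buys is uniformity of method: the same polygon argument disposes of interior and borderline interactions, \(D=N\) and \(D<N\), and (with two-component valuations) the symmetric case, and it yields the full set of eigenvalue valuations rather than just boundedness.

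Two caveats, neither fatal. First, your derivation of \(s^{\pm}_{\infty}\) quotes \(\det\bm{J}\sim c_D^{\pm}L(\omega)\), which is vacuous exactly in the degenerate case \(c_D^-=0\) that the forward direction needs; the clean fix is to divide the cofactor expansion in Eq.~\eqref{eqdetHsignsecC_revised} by \(\det\bm{A}\) term by term, which gives \(s(\omega)\to r(\mathcal{N})|c_D|^2 L_{m,m-1}(\omega)+y\) unconditionally. Second, the uniformity issue you flag (error control as \(\arg\omega\) ranges over each half-plane, and the antisymmetric analogue of the inverse-scaling Lemma~\ref{lemmaInvJmatrix} for the interior block \(\bm{A}\)) is real but sits at the same level of rigor as the paper itself: the leading \(L_{m,l}\) formulas are exact in each half-plane, the paper likewise treats the \(o(\cdot)\) corrections as uniformly subleading without explicit bounds, and the inverse scalings for \(\bm{A}\) follow from the same adjugate argument the paper uses in the symmetric case. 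So your argument closes to the same standard as the published proof.
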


With this lemma, we are done with the proof of  Lemma \ref{lemma1}  for the $D=N$ borderline case. 

The proof of this lemma is presented in Section \ref{secCriticalBorderline}.

\subsubsection{$D<N$\label{SecDlessNBorderline}}
We now address the final case for the proof of Lemma \ref{lemma1}: borderline interactions with \(D<N\). The analysis begins with a block decomposition analogous to the one used for the interior case, which gives the determinant as
 \begin{equation}
 J(\omega) = \det[\bm{J}^{(D)}(\omega)] \det[\bm{J}^{\perp}] + \mathrm{o}(L(\omega)). \label{eqJblockborderline}
 \end{equation}
The asymptotic behavior of the sub-determinant \(\det[\bm{J}^{(D)}(\omega)]\) follows directly from our \(D=N\) analysis. Generically, \(\det[\bm{J}^{(D)}(\omega)] \sim c_D^\pm L(\omega)\), where the coefficients \(c_D^\pm\) are defined in Eq.~\eqref{eqcpmBorderline}. As established, \(c_D^+\) is always non-zero, but when \(c_D^-\) is fine-tuned to zero, the asymptotic order is suppressed to \(\mathrm{o}(L(\omega))\) as \(\omega\to i0^-\).

Consequently, the leading behavior of the full determinant \(J(\omega)\) is suppressed if the first term in Eq.~\eqref{eqJblockborderline} vanishes or is itself suppressed. This occurs if \(\det[\bm{J}^{\perp}] = 0\), which nullifies the term in both half-planes, or if \(c_D^- = 0\), which suppresses its order in the lower half-plane. The following lemma connects these mathematical conditions to the physical notion of criticality.

\begin{lemma}\label{lemma_criti_anti_borderline}
For borderline interactions with \(D<N\), the system is critical if and only if \(\det[\bm{J}^\perp]=0\) or \(c_D^-=0\), where \(c_D^-\) is the coefficient defined for the submatrix \(\bm{J}^{(D)}(\omega)\) via Eq.~\eqref{eqcpmBorderline}.
\end{lemma}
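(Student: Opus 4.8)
The plan is to prove the equivalence through the analytic criticality criterion of Lemma~\ref{definitionCritical0}: a system is critical iff the smallest eigenvalue of $\bm{J}(\omega)$ fails to be bounded below as $\omega\to 0$, equivalently iff $\|\bm{J}(\omega)^{-1}\|$ is unbounded along some sequence $\omega_n\to 0$, equivalently iff $\bm{J}(\omega_n)$ admits a unit-norm approximate null vector. The central device is the diagonal rescaling $\hat{\bm{D}}(\omega)=\mathrm{diag}(q_\omega^{n_c-n_1},\dots,q_\omega^{n_c-n_D},1,\dots,1)$, where I again set $n_i\equiv n_c$ for the $N-D$ trivial directions. Because the entries of $\bm{J}^{(D)}(\omega)$ scale as $q_\omega^{-m+1+n_i+n_j}$ while the coupling blocks $\bm{R},\bm{C}$ carry a strictly smaller order $\mathrm{o}(q_\omega^{-m+1+n_i+n_c})$ and $\bm{J}^\perp(\omega)\to\bm{J}^\perp$, the congruence $\tilde{\bm{J}}(\omega)\equiv\hat{\bm{D}}\bm{J}\hat{\bm{D}}$ has a finite, block-diagonal limit $\tilde{\bm{J}}_\infty=\mathrm{diag}(\tilde{\bm{J}}^{(D)}_\infty,\bm{J}^\perp)$: the off-diagonal blocks vanish since $d_i R_{ij}=\mathrm{o}(q_\omega^{2n_c-m+1})=\mathrm{o}(1)$, and a short computation using Lemma~\ref{eqlemmadetZ} and $2n_c=m-1$ shows $\det[\tilde{\bm{J}}^{(D)}_\infty]\propto c_D^\pm$, so the first block is nonsingular iff $c_D^\pm\neq0$.

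For the forward direction (both conditions fail $\Rightarrow$ non-critical), I would observe that $\|\hat{\bm{D}}(\omega)\|=\max_i|d_i|\to1$ since every exponent $n_c-n_i\ge0$, so the factorization $\bm{J}(\omega)^{-1}=\hat{\bm{D}}\,\tilde{\bm{J}}(\omega)^{-1}\hat{\bm{D}}$ gives $\|\bm{J}^{-1}\|\le\|\hat{\bm{D}}\|^2\|\tilde{\bm{J}}^{-1}\|$. When $c_D^-\neq0$ (and $c_D^+\neq0$ always, since $\mathrm{Im}[y]<0$) and $\det[\bm{J}^\perp]\neq0$, the limit $\tilde{\bm{J}}_\infty$ is invertible in both half-planes, so $\|\tilde{\bm{J}}(\omega)^{-1}\|$ stays bounded and hence so does $\|\bm{J}(\omega)^{-1}\|$, establishing non-criticality.

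For the converse I would treat the two sufficient conditions separately, each producing a unit-norm approximate null vector. If $\det[\bm{J}^\perp]=0$ (and $c_D^-\neq0$, else use the second mechanism), let $\ket{w}$ span $\ker\bm{J}^\perp$ and set $\ket{a}=-(\bm{J}^{(D)}(\omega))^{-1}\bm{R}(\omega)\ket{w}$; the subdominant scaling forces $\ket{a}\to0$, so $\ket{\psi}=(\ket{a},\ket{w})$ has unit norm in the limit, while by construction its upper block $\bm{J}^{(D)}\ket{a}+\bm{R}\ket{w}$ vanishes exactly and its lower block is the Schur combination $\bm{J}^\perp(\omega)\ket{w}-\bm{C}(\bm{J}^{(D)})^{-1}\bm{R}\ket{w}\to0$, giving $\bm{J}(\omega)\ket{\psi}\to0$. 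If instead $c_D^-=0$, I would invoke Lemma~\ref{lemma_criti_anti_borderline1}: the $D\times D$ subsystem $\bm{J}^{(D)}$ is then critical, so there is a sequence $\omega_n\to i0^-$ and unit vectors $\ket{a_n}$ with $\bm{J}^{(D)}(\omega_n)\ket{a_n}\to0$, and I would lift this to the full system by appending $\ket{b_n}=-(\bm{J}^\perp)^{-1}\bm{C}(\omega_n)\ket{a_n}$ (using $\det[\bm{J}^\perp]\neq0$) and showing the subdominant coupling keeps both blocks of $\bm{J}(\omega_n)(\ket{a_n},\ket{b_n})$ small.

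The main obstacle I anticipate is precisely this last lift in the $c_D^-=0$ case. Unlike the $\det[\bm{J}^\perp]=0$ mechanism, where the approximate null vector lives in the bounded trivial block and $\ket{a}$ is explicitly small, here the vector $\ket{a_n}$ is only supplied abstractly by Lemma~\ref{lemma_criti_anti_borderline1} and may carry appreciable weight on the interior directions, where $\bm{C}$ has entries that diverge. Controlling $\bm{C}(\omega_n)\ket{a_n}$ and the residual upper-block term therefore requires tracking the rates at which the coupling is suppressed relative to the scale on which $\bm{J}^{(D)}(\omega_n)\ket{a_n}\to0$ — equivalently, showing that the subdominance of $\bm{R},\bm{C}$ established above guarantees stability of the subsystem's criticality under the off-diagonal coupling. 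An alternative packaging that sidesteps the explicit null vector is to perturb $\bm{K}^R$ only within the first $D$ directions to annihilate $\det[\bm{J}^{(D)}(\omega_n)]$ exactly and then argue, via the decomposition \eqref{eqJblockborderline}, that a further perturbation of vanishing norm removes the subdominant correction so that $\det[\bm{J}(\omega_n)]=0$; verifying that this correction is negligible on the relevant scale is the crux.
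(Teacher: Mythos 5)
Your forward direction (both conditions fail $\Rightarrow$ non-critical) and your first sufficiency mechanism ($\det[\bm{J}^\perp]=0$ $\Rightarrow$ critical) are sound, and they take a genuinely different route from the paper: the paper never constructs any vectors, but instead runs a Newton-polygon analysis on the characteristic polynomial of $\bm{J}(\omega)$, reading eigenvalue valuations off the valuations of the coefficients $a_l$ (sums of principal minors) separately in the $\omega\to i0^\pm$ limits. The gap is the remaining sufficiency direction, $c_D^-=0 \Rightarrow$ critical, which is precisely the new, borderline-specific mechanism that distinguishes this lemma from Lemma~\ref{lemma_inter_crit} — and you leave it unresolved, flagging it yourself as ``the crux.'' The difficulty is real, not bookkeeping. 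Lemma~\ref{lemma_criti_anti_borderline1} supplies unit vectors $\ket{a_n}$ with $\bm{J}^{(D)}(\omega_n)\ket{a_n}\to 0$ but with no rate information, while the entries of the coupling block $\bm{C}(\omega_n)$ are only known to be $\mathrm{o}(q_{\omega}^{n_k-n_c})$, which \emph{diverges} whenever $n_k<n_c$; the product $\bm{C}(\omega_n)\ket{a_n}$ is therefore uncontrolled, and the same problem afflicts the residual upper-block term. Your rescaling picture does not rescue this either: taking $\ket{\tilde w}\in\ker\tilde{\bm{J}}_\infty(-)$ and lifting via $\bm{J}(\omega)\hat{\bm{D}}\ket{\tilde w}=\hat{\bm{D}}^{-1}\tilde{\bm{J}}(\omega)\ket{\tilde w}$, the entrywise corrections $\tilde{\bm{J}}(\omega)-\tilde{\bm{J}}_\infty(-)=\mathrm{O}(q_\omega)$ are amplified by $\hat{\bm{D}}^{-1}$ to $\mathrm{O}(q_\omega^{1-(n_c-n_i)})$, which fails to vanish as soon as some $n_i\le n_c-1$. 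So neither construction, as sketched, closes this case.

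The good news is that your ``alternative packaging'' can be completed with one extra observation, which is essentially what the paper's polygon encodes in the comparison of $v^-(a_0)$ with $v(a_{N-D+1})$. Since the determinant is affine in each entry, a rank-one perturbation along the borderline direction shifts it exactly:
\begin{equation*}
\det\bigl[\bm{J}(\omega)+t\ket{u_D}\bra{u_D}\bigr]=\det[\bm{J}(\omega)]+t\,\mathrm{cof}_{DD}[\bm{J}(\omega)].
\end{equation*}
When $c_D^-=0$ and $\det[\bm{J}^\perp]\neq 0$, Eq.~\eqref{eqJblockborderline} gives $\det[\bm{J}(\omega)]=\mathrm{o}(L(\omega))$ as $\omega\to i0^-$, whereas the $(D,D)$ cofactor is $\Theta(L(\omega))$: the minor obtained by deleting the borderline row and column is an interior-type system with integer set $\mathcal{N}\setminus\{n_c\}$ [cf.\ Eq.~\eqref{eqdetBDm1}] times $\det[\bm{J}^\perp]\neq0$, and $L_{m,m-1}(\omega)=\Theta(1)$. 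Hence $t_n=-\det[\bm{J}(\omega_n)]/\mathrm{cof}_{DD}[\bm{J}(\omega_n)]\to 0$ produces exact zeros of the perturbed Jost function at energies $\omega_n\to 0$, i.e., criticality under the physical definition; the remaining sub-case $\det[\bm{J}^\perp]=0$ is already covered by your first mechanism run in the upper half-plane, where $c_D^+\neq 0$ always holds. As submitted, however, the proposal proves only one of the two sufficiency mechanisms, so the proof of the lemma is incomplete.
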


The proof of this lemma is presented in Sec.~\ref{secCriticalBorderline}.

With this result, the proof of Lemma \ref{lemma1} for the \(D<N\) borderline case is complete. As this was the final scenario to consider, this also concludes the proof of Lemma \ref{lemma1}.

\subsection{Criticality Conditions \label{sec:ProofNonCritical}}
This section provides the proofs for the criticality lemmas for interior (Lemma~\ref{lemma_inter_crit}) and borderline (Lemmas~\ref{lemma_criti_anti_borderline1} and \ref{lemma_criti_anti_borderline}) interactions. Our analysis is based on Lemma~\ref{definitionCritical0}, which states that a system is critical if an eigenvalue of \(\bm{J}(\omega)\) is not bounded away from zero as \(\omega \to 0\). To test this, we employ the Newton polygon method, a technique that determines the asymptotic behavior of a polynomial's roots from the scaling of its coefficients. It allows us to find the valuation of each eigenvalue---defined as its leading power in an asymptotic expansion in the small parameter \(q_\omega\)---from the valuations of the coefficients of the characteristic polynomial. A positive eigenvalue valuation signifies that the eigenvalue vanishes in this limit, thus satisfying the criticality condition. Graphically, this corresponds to finding a segment with a negative slope on the Newton polygon. The results of this analysis for all cases are illustrated in Fig.~\ref{fig:NewtonPoly_Antisym}.

\subsubsection{Interior Interactions}
\begin{proof}[Proof of Lemma \ref{lemma_inter_crit}]
We apply the Newton polygon method to the characteristic polynomial of \(\bm{J}(\omega)\), \(\det[\bm{J}(\omega)-\lambda\bm{1}] = \sum_{l=0}^N (-1)^{N-l} a_{l}(\omega)\lambda^{l}\). The coefficients \(a_l(\omega)\) are given by the sum over all principal minors of size \(N-l\):
\begin{equation}
a_l(\omega)=\sum_{\mathcal{A}\subseteq\mathbb{Z}_N, |\mathcal{A}|=N-l} \det[\bm{J}_{\mathcal{A}}(\omega)], \label{eqalcharac}
\end{equation}
where \(\bm{J}_{\mathcal{A}}\) is the principal submatrix of \(\bm{J}\) with rows and columns indexed by the set \(\mathcal{A}\). The valuation, \(v(a_l)\), is the leading power of \(q_\omega\) in the Laurent series of \(a_l(\omega)\). The Newton polygon is the lower convex hull of the points \((l, v(a_l))\), and the valuations of the eigenvalues are given by the negative slopes of its segments.

Let us first consider the part of the polygon for \(l \ge N-D\). The matrix elements scale as \(J_{ij}(\omega) = O(q_\omega^{-m+1+n_i+n_j})\), and since the sequence \(n_i\) is strictly increasing, the elements with lower indices are more divergent. Consequently, the most divergent term in the sum in Eq.~\eqref{eqalcharac} corresponds to the principal minor with the lowest possible indices, i.e., \(\mathcal{A} = \{1, \dots, N-l\}\). Therefore,
\begin{align*}
a_l(\omega) &\sim \det[\bm{J}_{\{1,..,N-l\}}(\omega)] \\
&\sim Z(\{n_1..n_{N-l}\}) \left(\prod_{j=1}^{N-l} |c_j|^2\right) \left(\prod_{j=1}^{N-l} L_{m,2n_j}(\omega)\right).
\end{align*}
From this, we extract the valuation: \(v(a_l) = \sum_{j=1}^{N-l}(-m+1+2n_j)\). Since the sequence \(n_j\) is strictly increasing, the points \((l, v(a_l))\) for \(l=N-D, \dots, N\) form a strictly convex sequence. The slopes of the segments connecting these points are all positive, corresponding to \(D\) eigenvalues with negative valuations that diverge as \(\omega \to 0\).

This initial analysis is sufficient to prove the lemma for the \(D=N\) case. Here, the condition \(l \ge N-D=0\) covers the entire polygon. As all segments have positive slopes, all \(N\) eigenvalues diverge, and none can vanish. The system is therefore never critical, consistent with the lemma's claim for \(D=N\). This scenario is depicted in Fig.~\ref{fig:NewtonPoly_Antisym}(a).

For the general case \(D<N\), the determination of criticality depends on the remaining \(N-D\) eigenvalues, which are governed by the shape of the Newton polygon in the range \(0 \le l \le N-D\). This shape is determined by the relative valuations of the endpoints, \(a_0(\omega)\) and \(a_{N-D}(\omega)\). The valuation of \(a_{N-D}(\omega)\) is \(v(a_{N-D}) = v(\det[\bm{J}^{(D)}]) = \sum_{j=1}^{D}(-m+1+2n_j)\), while the valuation of \(a_0(\omega) = \det[\bm{J}(\omega)]\) depends on \(\det[\bm{J}^{\perp}]\):
\begin{equation}
v(a_0) = \begin{cases}
v(a_{N-D}) & \text{if } \det[\bm{J}^\perp]\neq 0, \\
> v(a_{N-D}) & \text{if } \det[\bm{J}^\perp] = 0.
\end{cases} \label{eqva0antisym}
\end{equation}

\textbf{Case 1: \(\det[\bm{J}^\perp] \neq 0\).}
In this case, \(v(a_0) = v(a_{N-D})\). Furthermore, for any intermediate coefficient \(a_l\) with \(0 < l < N-D\), the dominant principal minors in Eq.~\eqref{eqalcharac} will always contain the submatrix \(\bm{J}^{(D)}\). Thus, the valuation of any intermediate coefficient must satisfy \(v(a_l) \ge v(a_{N-D})\). This confirms that all intermediate points \((l, v(a_l))\) lie on or above the line connecting the endpoints. The lower boundary of the convex hull is therefore a horizontal line segment, as shown by the blue line in Fig.~\ref{fig:NewtonPoly_Antisym}(c). This zero-slope segment corresponds to \(N-D\) eigenvalues with zero valuation. As no eigenvalue vanishes, the system is not critical.

\textbf{Case 2: \(\det[\bm{J}^\perp] = 0\).}
In this case, \(v(a_0) > v(a_{N-D})\). The point \((0, v(a_0))\) is strictly higher than \((N-D, v(a_{N-D}))\). The lower boundary of the convex hull must therefore contain a segment connecting these two points (or points between them), as illustrated by the red line in Fig.~\ref{fig:NewtonPoly_Antisym}(c). This segment necessarily has a negative slope, which implies a positive valuation for at least one eigenvalue. This proves the existence of an eigenvalue that vanishes as \(\omega \to 0\), and therefore the system is critical.

This completes the proof: for interior interactions, criticality occurs if and only if \(D<N\) and the constant sub-block \(\bm{J}^{\perp}\) is singular.
\end{proof}

\begin{figure*}[t]
    \centering
\includegraphics[width=0.85\linewidth]{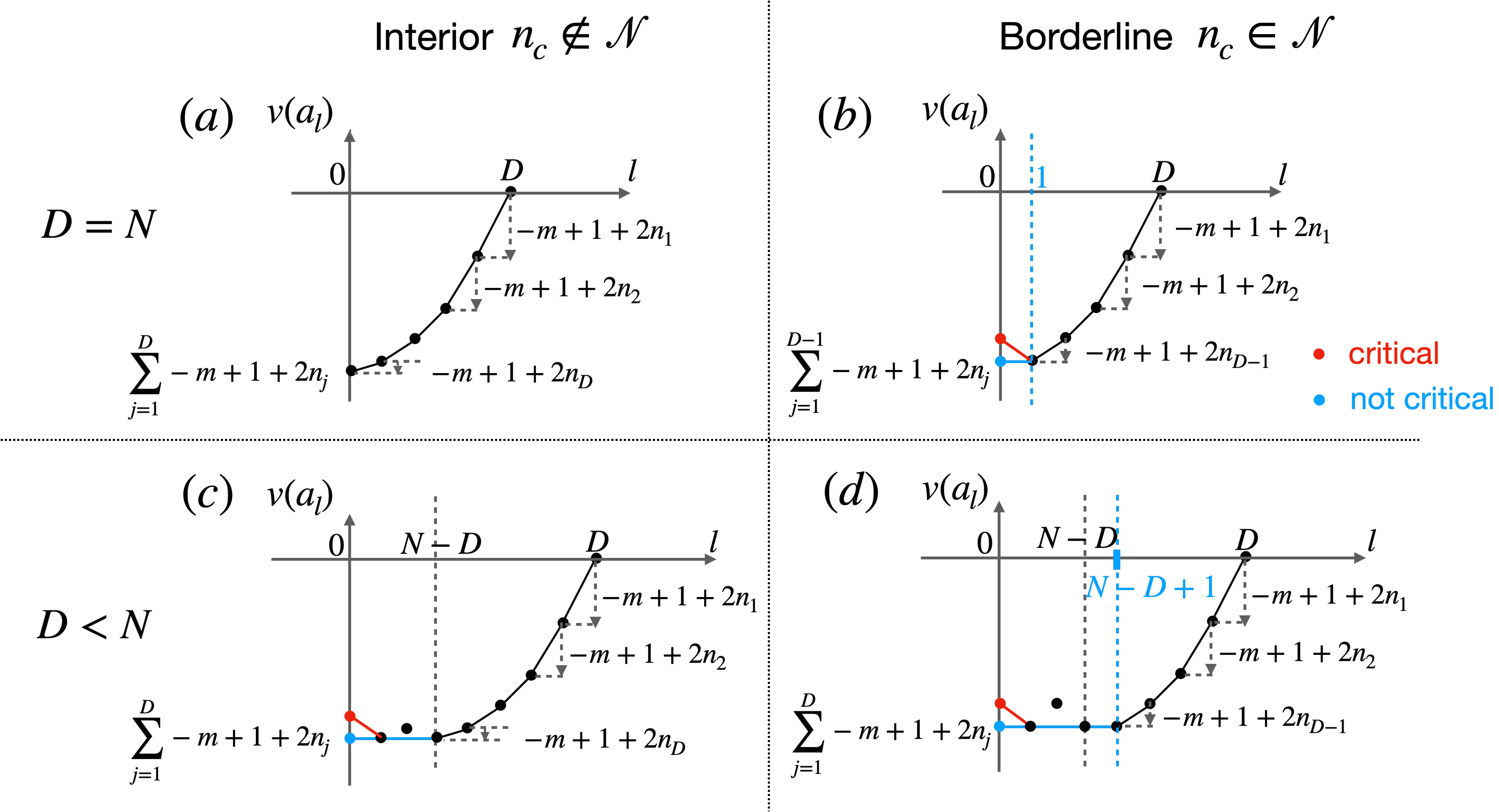}
\caption{Newton polygons for the characteristic polynomial of \(\bm{J}(\omega)\), illustrating the conditions for criticality. The panels compare \textbf{(a, c)} interior interactions with \textbf{(b, d)} borderline interactions for system sizes \(D=N\) (top row) and \(D<N\) (bottom row). The black dots represent the points \((l, v(a_l))\), where \(l\) is the power of the eigenvalue in the characteristic polynomial and \(v(a_l)\) is the valuation (the leading power of \(q_\omega\)) of the corresponding coefficient \(a_l\) in the limit \(q_{\omega}\rightarrow 0\). The polygon is the lower convex hull of these points, and the valuations of the eigenvalues of \(\bm{J}(\omega)\) correspond to the negative slopes of its segments. A system is critical if an eigenvalue vanishes as \(q_{\omega}\rightarrow 0\), which requires a positive eigenvalue valuation. This condition is met when the polygon has a segment with a negative slope, as highlighted in red.}
    
    \label{fig:NewtonPoly_Antisym}
\end{figure*}

\subsubsection{Borderline Interactions \label{secCriticalBorderline}}
\begin{proof}[Proof of Lemma  \ref{lemma_criti_anti_borderline1} and \ref{lemma_criti_anti_borderline}]

This proof uses the Newton polygon analysis to establish the  criticality conditions for both $D=N$ (Lemma~\ref{lemma_criti_anti_borderline1}) and $D<N$ (Lemma~\ref{lemma_criti_anti_borderline}) borderline systems. The key features of this analysis are illustrated in Figs.~\ref{fig:NewtonPoly_Antisym} (b) and (d).

The unique feature of the borderline case is the different behavior of the \(J_{DD}(\omega)\) element in the \(\omega \to i0^\pm\) limits. This requires us to analyze the Newton polygon for the \(\omega \to i0^\pm\) limits separately. The system is critical if \emph{either} analysis reveals a vanishing eigenvalue.

Before splitting the analysis, we establish the properties of the polygon common to both limits. The valuations of coefficients \(a_l(\omega)\) for \(l \ge N-D+1\) are determined by divergent principal minors that do not involve the critical order \(n_D=n_c\). The reasoning is analogous to the interior case and accounts for \(D-1\) diverging eigenvalues. In particular, the coefficient \(a_{N-D+1}(\omega)\) is determined by \(\det[\bm{J}^{(D-1)}]\) (see Eq.~\eqref{eqdetBDm1}), and its valuation is the same for both \(\omega \to i0^\pm\) limits:
\[ v(a_{N-D+1}) = \sum_{j=1}^{D-1}(-m+1+2n_j). \]
This point on the polygon serves as an anchor point for the subsequent analysis.

\textbf{Analysis for the \(\omega \to i0^+\) limit:}
In this limit, the non-zero prefactor \(c_D^+\) ensures that the valuation \(v^+(a_{N-D})\) is equal to \(v(a_{N-D+1})\). The polygon segment between these two points is therefore horizontal, indicating the existence of an eigenvalue that approaches a non-zero constant in this limit. The overall test for a vanishing eigenvalue in this region still depends on the segment connecting \(l=0\) to the anchor point at \(l=N-D\). The analysis of this segment is identical to that of the interior case: a vanishing eigenvalue is found if and only if \(D<N\) and \(\det[\bm{J}^\perp]=0\).

\textbf{Analysis for the \(\omega \to i0^-\) limit:}
In this limit, the lower boundary of the polygon for \(l \le N-D+1\) is the segment connecting the points at \(l=0\) and the anchor point at \(l=N-D+1\). A vanishing eigenvalue exists if the slope of this segment is negative, which occurs if \(v^-(a_0) > v(a_{N-D+1})\). This inequality holds if and only if the prefactor of the leading term in \(\det[\bm{J}(\omega)]\) vanishes, which corresponds to the condition that \(c_D^- = 0\) or (\(D<N\) and \(\det[\bm{J}^\perp]=0\)).

\textbf{Conclusion:}
The system is critical if the condition for a vanishing eigenvalue is met in the \(\omega \to i0^+\) limit OR in the \(\omega \to i0^-\) limit. The logical union of the conditions derived from the two separate analyses is:
\[ (\text{\(D<N\) and \(\det[\bm{J}^\perp]=0\)}) \lor (\text{\(c_D^-=0\) or (\(D<N\) and \(\det[\bm{J}^\perp]=0\))}). \]
This simplifies to the single statement that the system is critical if and only if \(c_D^-=0\) or (\(D<N\) and \(\det[\bm{J}^\perp]=0\)). 
This is precisely the claim of Lemmas~\ref{lemma_criti_anti_borderline1} and \ref{lemma_criti_anti_borderline}, which completes the proof.

\end{proof}

\section{Symmetric Dispersion: S-matrix and the $N=1$ Case \label{secSymm}}
In this and the following section, we analyze the symmetric dispersion case.  This case exhibits  a richer structure: the S-matrix is a \(2\times 2\) matrix describing scattering between even and odd parity channels. This means that calculating the S-matrix determinant alone is no longer sufficient to determine the scattering properties, unlike in the antisymmetric case. A more detailed analysis of the individual matrix elements is required.

This section serves as a foundational treatment of this more complex problem. We first introduce the general S-matrix formalism and the key analytic functions, \(\breve{L}_{m,l}(\omega)\), that govern the symmetric case. We then prove Theorem~2 of the main text for the illustrative \(N=1\) case. This simplified example allows us to establish the core logical strategy—combining the determinant calculation with an asymptotic analysis of individual S-matrix elements—that will be generalized using block-matrix methods in the full proof that follows in Sec.~\ref{SecProofSym}.

\subsection{The S-matrix \label{subsecSmatrix}}

For symmetric dispersion, scattering between even and odd parity channels is described by the S-matrix
\begin{equation}
\bm{S}(E) =
\begin{pmatrix}
S_{\mathrm{ee}} & S_{\mathrm{eo}} \\
S_{\mathrm{oe}} & S_{\mathrm{oo}}
\end{pmatrix},
\end{equation}
which encodes the transmission and reflection coefficients. The matrix elements are given by
\begin{equation}
S_{\alpha\beta}(E) = \delta_{\alpha\beta} + 2\pi i\, \rho(E)\, \langle v_{\beta}(k_E)| \bm{J}(\omega)^{-1} |v_{\alpha}(k_E)\rangle, \label{eqgvT0}
\end{equation}
where \(\alpha,\beta \in \{\mathrm{e},\mathrm{o}\}\), and \(|v_{\mathrm{e}}(k)\rangle\) and \(|v_{\mathrm{o}}(k)\rangle\) are the even and odd components of the interaction vector \(|v(k)\rangle\). Here, \(k_E = |E|^{1/m}\) is the momentum, and \(\rho(E) = m^{-1}|E|^{-1+1/m}\) is the density of states. The matrix \(\bm{J}(\omega)\) is constructed from \(\bm{K}(\omega) = \int_{-\infty}^{+\infty} dk\, \frac{|v(k)\rangle \langle v(k)|}{\omega - |k|^m}\) and $\bm{K}^R$ as given in Eq.\ \eqref{eqJKrelation}, in analogy to the antisymmetric case.

A key difference for symmetric dispersion is that the branch cut of the matrix elements of \(\bm{K}(\omega)\) lies only along the positive real axis of the complex \(\omega\)-plane. A crucial formula connects the S-matrix to the analytic structure of the Jost function:
\begin{equation}
\det[\bm{S}(E)] = \frac{J(E+i0)}{ J(E-i0)}.\label{eqdetSSepSymMulti}
\end{equation}
This relation allows us to determine the determinant of the S-matrix from the asymptotic behavior of $J(\omega)=\det[\bm{J}(\omega)]$, circumventing the need to explicitly invert \(\bm{J}(\omega)\) in Eq.~\eqref{eqgvT0}.

\subsection{The Case of $N=1$ \label{secN1Symm}}

We now prove Theorem~2 of the main text for the  \(N=1\) case. We consider an interaction coefficient \(V(k) = c\,k^n + \mathrm{o}(k^n)\) for \(n \le n_c\). The S-matrix elements at low energy are given by
\begin{equation}
S_{\alpha\beta}(0^+) = \delta_{\alpha\beta}
- \lim_{E \to 0^+} 2\pi i\, \rho(E)\, \frac{V_\alpha(k_E) V_{\beta}^*(k_E)}{J(E+i0)}, \label{eqSN1e}
\end{equation}
where \(\alpha,\beta \in \{\mathrm{e},\mathrm{o}\}\) and \(V_{\mathrm{e}}, V_\mathrm{o}\) are the even/odd components of \(V(k)\).

To analyze the divergence of \(J(\omega)\), we introduce the functions \(\breve{L}_{m,l}(\omega)\), which play a role analogous to \(L_{m,l}(\omega)\) in the antisymmetric case:
\begin{equation}
\breve{L}_{m,l}(\omega) = \int_{-\infty}^{+\infty} dk\, \frac{k^{l}}{\omega - |k|^m}. \label{eqSLmndefinition1}
\end{equation}
The asymptotic behavior of these functions as \(\omega \to 0\) is essential. For \(\omega = |\omega|e^{i\theta}\) with $\theta\in (0, 2\pi)$ and \(q_\omega = |\omega|^{1/m}e^{i\theta/m}\), we have
\begin{equation}
\breve{L}_{m,l}(\omega) \sim
\begin{cases}
0 & \text{for odd } l \\
-\frac{2\pi i}{m} \kappa_{m,l}\, q_\omega^{-m+1+l} & \text{for even } l < m-1 \\
m^{-1} \ln|\omega| +i[\arg(\omega)-\pi] & \text{for even } l = m-1
\end{cases}, \label{eqKsum2}
\end{equation}
where the coefficient \(\kappa_{m,l}\) is a constant. The key differences from the antisymmetric case are that \(\breve{L}_{m,l}(\omega)\) vanishes for odd \(l\), exhibits a logarithmic divergence in the borderline case \(l=m-1\), and its prefactor \(\kappa_{m,l}\) is independent of the sign of \(\mathrm{Im}[\omega]\).

For the \(N=1\) system, the Jost function is dominated by the most divergent term:
\begin{equation}
J(\omega) \sim K(\omega) \sim |c|^2 \breve{L}_{m,2n}(\omega). \label{eqLinvK}
\end{equation}
Since \(\breve{L}_{m,2n}(\omega)\) diverges for all relevant \(n \le n_c\), \(|J(\omega)| \to \infty\), and by Lemma~\ref{definitionCritical0}, the system is never critical.

Our strategy is to first calculate \(\det[\bm{S}(0^+)]\) using Eq.~\eqref{eqdetSSepSymMulti} and then determine the individual matrix elements from Eq.~\eqref{eqSN1e}. From Eqs.~\eqref{eqdetSSepSymMulti} and \eqref{eqLinvK}, we find 
\begin{align}
\det[\bm{S}(0^{+})] &= \lim_{r\to 0} \frac{\breve{L}_{m, 2n}(re^{-i0})}{\breve{L}_{m, 2n}(re^{+i0})} 
= \exp\left(2\pi i\, \frac{-m+2n+1}{m}\right) = \exp\left(2\pi i\, \frac{2n+1}{m}\right). \label{eqdetSymmN1}
\end{align}
This result is true also for $n=n_c$, in which case $\exp\left(2\pi i\, \frac{2n+1}{m}\right)=1$ becomes a trivial phase. 

Next, consider the case of even \(n\). Here, \(V_{\mathrm{e}}(k) \sim c k^n\) while \(V_{\mathrm{o}}(k)\) is of higher order. For both interior and borderline interactions, the scaling of the numerator in Eq.~\eqref{eqSN1e} is then
\begin{equation}
\rho(E) V_\alpha(k_E) V_\beta^*(k_E)
\sim
\begin{cases}
\Theta(k_E^{-m+1+2n}) & \text{if } \alpha = \beta = \mathrm{e} \\
\mathrm{o}(k_E^{-m+1+2n}) & \text{otherwise}
\end{cases}.
\end{equation}
Since \(J(\omega) \sim \breve{L}_{m,2n} \sim k_E^{-m+1+2n}\), the ratio vanishes for all elements except \(S_{\mathrm{ee}}\). This implies
\begin{equation}
S_{\mathrm{eo}}(0^{+}) = S_{\mathrm{oe}}(0^{+}) = 0, \quad \text{and} \quad S_{\mathrm{oo}}(0^{+}) = 1.
\end{equation}
Combining this with the determinant gives the final result for \(S_{\mathrm{ee}}\):
\begin{equation}
S_{\mathrm{ee}}(0^{+}) = \det[\bm{S}(0^{+})] = \exp\left(2\pi i\, \frac{2n+1}{m}\right).
\end{equation}
An analogous argument holds for odd \(n\), completing the proof of Theorem~2 for the \(N=1\) case. 

For the symmetric dispersion with borderline interactions, the zero-energy limit of the S-matrix is an identity operator, a trivial, but universal value. This is in contrast to the non-universal value for the antisymmetric dispersion. 

\section{Symmetric Dispersion: Proof of Theorem 2}
\label{SecProofSym}

In this section, we provide the complete proof for Theorem 2 of the main text on the zero-energy limit of the S-matrix for symmetric dispersions. The primary challenge in the symmetric case is that the S-matrix is a \(2\times 2\) matrix describing scattering between parity channels. Consequently, unlike the antisymmetric case, the determinant relation [Eq.~\eqref{eqdetSSepSymMulti}] is insufficient to determine the full S-matrix. We must therefore resort to  Eq.~\eqref{eqgvT0} and compute the inverse of the J-matrix, \(\bm{J}(\omega)\).

For symmetric dispersion, a key feature is that the condition for criticality is the same for both interior and borderline interactions. Although the proof for the borderline case requires a more sophisticated analysis involving generalized valuations, the final condition is unified: the system is critical if and only if \(\det[\bm{J}^{\perp}]=0\). This unified condition arises because the logarithmic divergence of the borderline term is more robust than the constant term found in the antisymmetric case. In the latter case, the short-range physics can fine-tune the constant borderline matrix element in the J-matrix to zero, inducing a critical point.

Our proof strategy unfolds as follows. We first establish the asymptotic properties of the J-matrix in a parity-ordered basis, showing that it is dominated by its block-diagonal structure. We then present the main line of argument for Theorem 2:
\begin{enumerate}
    \item We calculate the full determinant, \(\det[\bm{S}(0^+)]\), by analyzing the asymptotic behavior of \(\det[\bm{J}(\omega)]\).
    \item We then calculate the diagonal S-matrix elements, \(S_{\mathrm{ee}}(0^+)\) and \(S_{\mathrm{oo}}(0^+)\). This step relies on a crucial technical lemma (Lemma \ref{lemmaInvJmatrix}) about the block-wise structure of the inverse matrix, \(\bm{J}(\omega)^{-1}\).
    \item Finally, we show that the product of these diagonal elements matches the independently calculated \(\det[\bm{S}(0^+)]\), which confirms that the off-diagonal S-matrix elements are zero and completes the proof of Theorem 2.
\end{enumerate}
The detailed proofs for the supporting lemmas on the J-matrix inverse (Lemma \ref{lemmaInvJmatrix}) and the criticality condition (Lemma \ref{lemmaCriticalSym}) are provided in the final subsections to maintain a clear narrative flow.

\subsection{Block Structure of the J-Matrix \label{subsecJsym}}

We begin by analyzing the asymptotic behavior of the matrix elements of \(\bm{J}(\omega)\) as \(\omega \to 0\). The key is to reorder the integers \(n_j \in \mathcal{N}\) based on their parity, and to arrange the corresponding basis vectors \(|u_j\rangle\) accordingly. Specifically, the \(N_{\mathrm{e}}\) even integers are grouped first, followed by the \(N_{\mathrm{o}}\) odd integers, with each group sorted in ascending order. After completing the basis with orthogonal vectors \(|u_{D+1}\rangle, \dots, |u_N\rangle\) spanning the orthogonal complement, we refer to this construction as the \emph{parity-ordered basis}.

This choice of basis reveals the block structure of the J-matrix. We define three diagonal blocks: \(\bm{J}^{\mathrm{e}}(\omega)\), \(\bm{J}^{\mathrm{o}}(\omega)\), and \(\bm{J}^\perp(\omega)\) with dimensions \(N_{\mathrm{e}}\times N_{\mathrm{e}}\), \(N_{\mathrm{o}}\times N_{\mathrm{o}}\), and \((N-D)\times(N-D)\), respectively, as illustrated in Fig.~\ref{invsub2}(a). The elements within these blocks have distinct asymptotic behaviors:
\begin{itemize}
    \item Within the blocks \(\bm{J}^{\mathrm{e}}(\omega)\) and \(\bm{J}^{\mathrm{o}}(\omega)\), the elements are given by
    \begin{equation}
        J_{ij}(\omega) \sim c_i c_j^*\breve{L}_{m,n_i+n_j}(\omega)    
        =\begin{cases}
m^{-1}|c_{D}|^2\ln(|\omega|) & n_i=n_j=n_c \\
\Theta(k_{\omega}^{-m+1+n_i+n_j}) &\text{otherwise}
\end{cases}
\label{eqMijMeMo}.
    \end{equation}
    These terms either diverge polynomially or, in the borderline case \(n_i=n_j=n_c\), logarithmically.
    
    \item The block \(\bm{J}^\perp(\omega)\) approaches a constant matrix \(\bm{J}^\perp\) as \(\omega\to 0\).
\end{itemize}

\begin{figure}[t]
    \centering
\includegraphics[width=\linewidth]{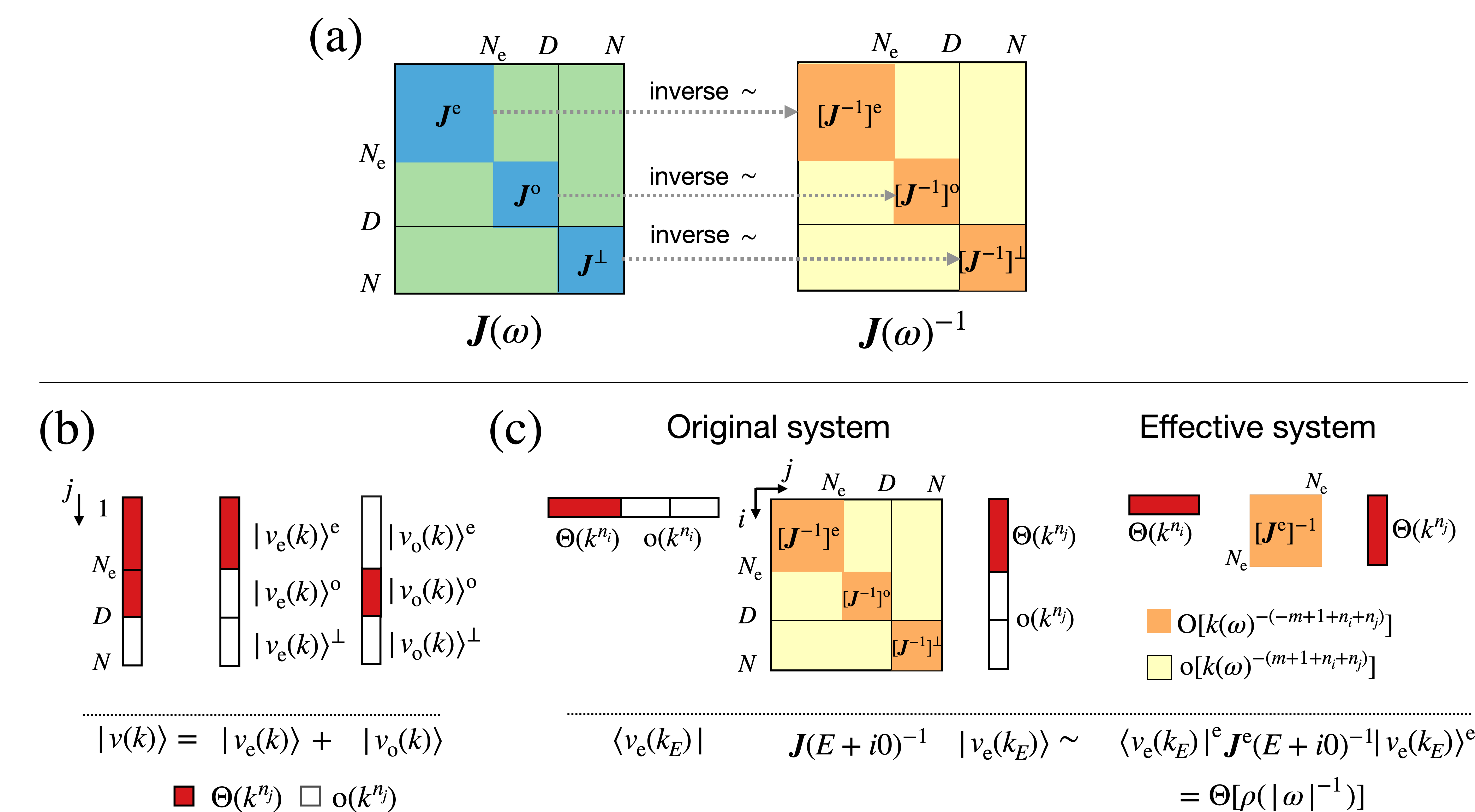}
    \caption{(a) Illustration of the key result from Lemma \ref{lemmaInvJmatrix}: the diagonal blocks of \(\bm{J}(\omega)^{-1}\) are asymptotically equivalent to the inverses of the corresponding blocks of \(\bm{J}(\omega)\). (b) Diagram of the asymptotic hierarchy of the components of the interaction vector \(|v_{\mathrm{e}}(k)\rangle\), showing that components in the even subspace (\(i \le N_{\mathrm{e}}\)) are of a leading asymptotic order. (c) Illustration of Eq.~\eqref{eqEquiEff}, showing how the combined asymptotic hierarchies of the vectors and matrix blocks reduce the calculation for the original system to that of the simpler effective system.}
    
    \label{invsub2}
\end{figure}
To formalize the scaling hierarchy between the diagonal and off-diagonal blocks, we extend the definition of the sequence \(n_j\) to all \(N\) indices by setting \(n_j=n_c=(m-1)/2\) for \(j > D\). This convenient tool allows us to easily compare the asymptotic orders between matrix elements in the diagonal and off-diagonal blocks. 
\begin{itemize}
    \item In the submatrix \(\bm{J}^\perp(\omega)\),
        \begin{equation}
            J_{ij}(\omega) = \mathrm{O}(k_{\omega}^{-m+1+n_i+n_j}) = \mathrm{O}(1).
        \end{equation}
    \item In the off-diagonal blocks,
        \begin{equation}
            J_{ij}(\omega) = \mathrm{o}(k_{\omega}^{-m+1+n_i+n_j}). \label{eqJijoff}
        \end{equation}
\end{itemize}

The central feature that simplifies the subsequent analysis is the asymptotic dominance of the block-diagonal structure (\(\bm{J}^{\mathrm{e}}\), \(\bm{J}^{\mathrm{o}}\), and \(\bm{J}^{\perp}\)) over the off-diagonal blocks. This hierarchy allows for an approximate block-diagonal inversion of the J-matrix. Consequently, the condition for criticality is isolated within the sole non-divergent block, \(\bm{J}^{\perp}\). As in the antisymmetric case, a fine-tuning of the short-range physics that makes this block singular (\(\det[\bm{J}^\perp]=0\)) corresponds to a critical point.

\subsection{Determinants of the J-Matrix Blocks}

In this section, we determine the asymptotic behavior of \(\det[\bm{J}(\omega)]\) and the determinants of its sub-matrices. This is a crucial step for three reasons: (i) to compute \(\det[\bm{S}(E)]\), (ii) to find \(\bm{J}(\omega)^{-1}\) for the individual S-matrix elements, and (iii) to establish the condition for criticality.

Our analysis begins with the determinants of the diagonal blocks \(\bm{J}^{\mathrm{e}}(\omega)\) and \(\bm{J}^{\mathrm{o}}(\omega)\). Following a similar derivation as in the antisymmetric case, their determinants separate into a product of \(\breve{L}_{m,2n_i}\) functions and a constant prefactor given by the determinant of a matrix \(\breve{\bm{Z}}\).

\begin{lemma} \label{lemma:Z_twiddle_det}
Given a set of distinct integers \(\mathcal{N}' \subset \mathcal{N}\) containing only even or only odd numbers (and excluding \(n_c\)), we define the matrix \(\breve{\bm{Z}}\) with elements \(\breve{Z}_{ij} = \frac{\kappa_{m,n_i+n_j}}{\kappa_{m, 2n_i}}\) for \(n_i, n_j \in \mathcal{N}'\). Its determinant is a non-zero, positive constant given by
\begin{equation}
\breve{Z}(\mathcal{N}') \equiv \det[\breve{\bm{Z}}] = \prod_{n_j, n_i \in \mathcal{N}', j < i} \frac{\sin^2\left(\frac{\pi(m+1)(n_i-n_j)}{2m}\right)}{\sin^2\left(\frac{\pi}{2m} [(m+1)(n_i+n_j)+1]\right)}.
\end{equation}
\end{lemma}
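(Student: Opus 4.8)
The plan is to follow the template of the antisymmetric Lemma~\ref{eqlemmadetZ} essentially verbatim, the only new ingredient being the explicit form of the symmetric prefactors $\kappa_{m,l}$. First I would extract $\kappa_{m,l}$ for even $l$ from the asymptotics in Eq.~\eqref{eqKsum2}: evaluating the Stieltjes transform $\breve{L}_{m,l}(\omega)=2\int_0^\infty dk\, k^l/(\omega-k^m)$ by the standard beta-function identity writes $\kappa_{m,l}$ as a simple rational function of $\mu\equiv e^{i\pi/m}$, of the form $\kappa_{m,l}\propto(1-\mu^{2(l+1)})^{-1}$. Because $\mathcal{N}'$ contains integers of a single parity, both $n_i+n_j$ and $2n_i$ are even and, with $n_c$ excluded, strictly less than $m-1$; hence every entry $\breve{Z}_{ij}=\kappa_{m,n_i+n_j}/\kappa_{m,2n_i}$ is a finite ratio and no logarithmic (borderline) term ever enters.

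The pivotal observation is that this ratio is again of Cauchy type. Introducing the unit-modulus variables $x_k\equiv\mu^{2n_k+1}$, the entries collapse to $\breve{Z}_{ij}=(1-x_i^2)/(1-x_ix_j)$, exactly the structure exploited in Lemma~\ref{eqlemmadetZ}. From here the computation is mechanical and parallels that proof: factor $(1-x_i^2)$ out of row $i$, identify the residual matrix as the Cauchy matrix $C_{ij}=1/(1-x_ix_j)$ with classical determinant $\det[\bm{C}]=\prod_{j<i}(x_i-x_j)^2/\prod_{i,j}(1-x_ix_j)$, and recombine to obtain
\begin{equation}
\det[\breve{\bm{Z}}]=\prod_{1\le j<i\le D}\frac{(x_i-x_j)^2}{(1-x_ix_j)^2}.
\end{equation}

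The last step is the trigonometric reduction of each Cauchy factor, the only place demanding care. Writing $x_k=e^{i\gamma_k}$ and applying a half-angle identity turns each factor $(x_i-x_j)/(1-x_ix_j)$ into a ratio whose numerator is a sine of the index difference and whose denominator is a sine of the index sum; substituting the phases $\gamma_k$ and simplifying then yields the closed-form product of sine-squared ratios recorded in the statement. Positivity follows at once: since $|x_k|=1$, each factor $(x_i-x_j)/(1-x_ix_j)$ is real (its conjugate equals itself), so $\det[\breve{\bm{Z}}]$ is a product of squares and hence non-negative, while the distinctness of the $n_i\le n_c$ keeps every sine argument away from integer multiples of $\pi$, forcing each factor to be finite and non-zero and so the determinant strictly positive. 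I expect the main obstacle to be purely the bookkeeping of this reduction---in particular, correctly tracking the phases $\gamma_k$, noting that the symmetric $\kappa_{m,l}$ carries no alternating $(-1)^{n_k}$ sign (unlike its antisymmetric counterpart), and verifying that the single-parity restriction on $\mathcal{N}'$ is precisely what keeps $n_i+n_j$ even and the entries on the non-logarithmic branch of Eq.~\eqref{eqKsum2}.
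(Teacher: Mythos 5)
Your route is the paper's own route, step for step: the same identification of the symmetric prefactor, the same Cauchy-type variables $x_k=\mu^{2n_k+1}=e^{2\pi i(n_k+1/2)/m}$, the same row-factorization plus Cauchy-determinant evaluation, and the same positivity reasoning. Your extraction $\kappa_{m,l}\propto(1-\mu^{2(l+1)})^{-1}$ for even $l$ is correct (and is a detail the paper leaves implicit), and it does collapse the entries to $\breve{Z}_{ij}=(1-x_i^2)/(1-x_ix_j)$, giving $\det[\breve{\bm{Z}}]=\prod_{j<i}(x_i-x_j)^2/(1-x_ix_j)^2$.

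The gap sits exactly in the step you flagged as ``the only place demanding care'' and then asserted instead of performing: with your (correct) phases $\gamma_k=\pi(2n_k+1)/m$, the half-angle reduction does \emph{not} reproduce the formula recorded in the statement. One finds
\begin{equation*}
\frac{x_i-x_j}{1-x_ix_j}=-\frac{\sin\!\left(\frac{\pi(n_i-n_j)}{m}\right)}{\sin\!\left(\frac{\pi(n_i+n_j+1)}{m}\right)},
\qquad
\det[\breve{\bm{Z}}]=\prod_{n_j,n_i\in\mathcal{N}',\,j<i}\frac{\sin^{2}\!\left(\frac{\pi(n_i-n_j)}{m}\right)}{\sin^{2}\!\left(\frac{\pi(n_i+n_j+1)}{m}\right)},
\end{equation*}
whereas the stated formula carries the antisymmetric phases of Lemma~\ref{eqlemmadetZ} (those come from $x_k=(-1)^{n_k}\mu^{n_k+1/2}$, which you yourself note is \emph{not} the symmetric variable); since $n_i\pm n_j$ is even for a single-parity set, the stated formula collapses to sines of the half arguments $\pi(n_i-n_j)/(2m)$ and $\pi(n_i+n_j+1)/(2m)$, which is genuinely different. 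A concrete check: for $m=9$, $\mathcal{N}'=\{0,2\}$, a direct evaluation gives $\det[\breve{\bm{Z}}]=1-\kappa_{9,2}^{2}/(\kappa_{9,0}\kappa_{9,4})=\sin^{2}(2\pi/9)/\sin^{2}(\pi/3)\approx 0.55$, while the printed formula gives $\sin^{2}(\pi/9)/\sin^{2}(\pi/6)\approx 0.47$. So your final sentence claiming the reduction ``yields the closed-form product\ldots recorded in the statement'' cannot be made rigorous; the printed closed form appears to have been carried over from the antisymmetric lemma and is inconsistent with the symmetric $x_k$ (the paper's own one-line proof shares this inconsistency). What does survive, in your argument and the paper's alike, is everything that is actually used downstream: $\breve{Z}(\mathcal{N}')$ is a finite, strictly positive constant, which follows either from your reality-of-Cauchy-factors observation or from the corrected sine product, since $0<|n_i-n_j|<m$ and $0<n_i+n_j+1<m$ keep every sine nonzero. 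To complete the proof honestly, carry the reduction through and state the corrected product formula rather than asserting agreement with the printed one.
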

\begin{proof}
Defining \(x_j = e^{2\pi i(n_j+1/2)/m}\), the matrix elements can be written as \(\breve{Z}_{ij} = \frac{x_i^{-1}-x_i}{x_i^{-1}-x_j}\). This is identical in form to the matrix elements of \(\bm{Z}(+)\) in the antisymmetric case. The proof then follows the same steps involving the Cauchy matrix determinant. Since all \(n_i\) in \(\mathcal{N}'\) are distinct and less than \(m-1\), the determinant is non-zero and positive.
\end{proof}
With this lemma, we can state the asymptotic behavior of the block determinants. For interior interactions (\(n_c \notin \mathcal{N}_{\mathrm{e}}, \mathcal{N}_{\mathrm{o}}\)), we have
\begin{align}
 \det[\bm{J}^{\mathrm{e}}(\omega)] &\sim  \breve{Z}(\mathcal{N}_{\mathrm{e}})\left(\prod_{ n_i\in \mathcal{N}_{\mathrm{e}}}|c_{i}|^2\right) \left(\prod_{ n_i\in \mathcal{N}_{\mathrm{e}}} \breve{L}_{m,2n_i}(\omega)\right),   \label{eq:detJe_sym} \\
 \det[\bm{J}^{\mathrm{o}}(\omega)] &\sim  \breve{Z}(\mathcal{N}_{\mathrm{o}})\left(\prod_{ n_i\in \mathcal{N}_{\mathrm{o}}}|c_{i}|^2\right) \left(\prod_{ n_i\in \mathcal{N}_{\mathrm{o}}} \breve{L}_{m,2n_i}(\omega)\right),   \label{eq:detJo_sym}
\end{align}
where \(\breve{Z}(\mathcal{N}')\) is a positive constant.

The borderline case, where \(n_c \in \mathcal{N}\), requires special attention. Suppose \(n_c \in \mathcal{N}_{\mathrm{e}}\). The element \(J_{N_{\mathrm{e}}, N_{\mathrm{e}}}(\omega)\) diverges logarithmically. This divergence, while weaker than the polynomial divergences of other elements in its block, is asymptotically dominant relative to the other entries in its row and column. This allows the determinant to be factorized:
\begin{equation}
\det[\bm{J}^{\mathrm{e}}(\omega)] \sim J_{N_{\mathrm{e}},N_{\mathrm{e}}}(\omega) \det[\bm{J}^{e,(N_{\mathrm{e}}-1)}(\omega)],
\end{equation}
where \(\bm{J}^{e,(N_{\mathrm{e}}-1)}(\omega)\) is the $(N_{\mathrm{e}}-1)\times (N_{\mathrm{e}}-1)$ sub-block of $\bm{J}^{\mathrm{e}}(\omega)$ without the last row and column. Since \(J_{N_{\mathrm{e}},N_{\mathrm{e}}}(\omega) \sim |c_{N_{\mathrm{e}}}|^2 \breve{L}_{m,m-1}(\omega)\) and the determinant of the sub-block follows the interior-case formula, the full product can be ``grouped back" to yield
\begin{equation}
\det[\bm{J}^{\mathrm{e}}(\omega)] \sim \breve{Z}(\mathcal{N}_{\mathrm{e}} \setminus \{n_c\}) \left(\prod_{ n_i\in \mathcal{N}_{\mathrm{e}}}|c_{i}|^2\right) \left(\prod_{ n_i\in \mathcal{N}_{\mathrm{e}}} \breve{L}_{m,2n_i}(\omega)\right).
\end{equation}

The product \(\det[\bm{J}^{\mathrm{e}}(\omega)]\det[\bm{J}^{\mathrm{o}}(\omega)]\) exhibits the ``saturated" divergence expected from the system's long-range interactions. Whether the full determinant \(\det[\bm{J}(\omega)]\) achieves this same level of divergence depends on \(\bm{J}^\perp\). If $D=N$ or \(\det[\bm{J}^{\perp}]\neq 0\), the leading behavior is
\begin{equation}
\det[\bm{J}(\omega)] \sim \det[\bm{J}^{\perp}] \det[\bm{J}^{\mathrm{e}}(\omega)]\det[\bm{J}^{\mathrm{o}}(\omega)]. \label{eqJJDJperp1}
\end{equation}
If, however, \(\det[\bm{J}^\perp]=0\), this leading term vanishes and \(\det[\bm{J}(\omega)]\) is of a lower divergence order.

This result allows us to compute \(\det[\bm{S}(0^+)]\). The logarithmic divergence in the borderline case, \(\breve{L}_{m, m-1}(\omega)\), does not contribute a phase jump across the real axis. Thus, for any non-critical system (interior or borderline), we find
\begin{equation}
\det[\bm{S}(0^+)] = \exp\left(2\pi i \frac{\sum_{n_i\in \mathcal{N}\setminus\{n_c\}} (2n_i+1) }{m}\right), \label{eqdetS0p}
\end{equation}
 Note that, since $2n_c+1=m$, the right-hand side can be also written as $ \exp\left(2\pi i \frac{\sum_{n_i\in \mathcal{N}} (2n_i+1) }{m}\right)$.  Equation  \eqref{eqdetS0p} agrees with Theorem 2 in the main text. This derivation relies on the condition $D=N$ or  \(\det[\bm{J}^\perp]\neq 0\), which we now relate to the condition for non-critical systems. 

\begin{lemma} \label{lemmaCriticalSym}
For symmetric dispersion, the system is critical if and only if \(D<N\) and \(\det[\bm{J}^\perp]=0\).
\end{lemma}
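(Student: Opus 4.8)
The plan is to prove the criticality condition via Lemma~\ref{definitionCritical0}, which reduces the question to whether \(\liminf_{\omega \to 0} \min_i |\lambda_i(\omega)| = 0\) for the eigenvalues of \(\bm{J}(\omega)\). I would apply the Newton polygon method to the characteristic polynomial \(\det[\bm{J}(\omega) - \lambda \mathbb{1}_N] = \sum_{l=0}^N (-1)^{N-l} a_l(\omega) \lambda^l\), exactly as in the antisymmetric case (Lemma~\ref{lemma_inter_crit}), where \(a_l(\omega)\) is the sum of all principal minors of \(\bm{J}(\omega)\) of size \(N-l\) and the eigenvalue valuations are read off as the negative slopes of the polygon's lower convex hull. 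The block-diagonal dominance established in Sec.~\ref{subsecJsym} ensures that the most divergent minor of each size is drawn from the even and odd blocks \(\bm{J}^{\mathrm{e}}(\omega), \bm{J}^{\mathrm{o}}(\omega)\), while \(\bm{J}^\perp(\omega)\) contributes only at constant order.

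First I would show that the \(D\) directions spanned by the nontrivial blocks yield \(D\) eigenvalues with strictly negative valuation (i.e., diverging), so that the coefficient \(a_{N-D}(\omega)\) has valuation set by \(v\big(\det[\bm{J}^{\mathrm{e}}(\omega)]\det[\bm{J}^{\mathrm{o}}(\omega)]\big)\), using Eqs.~\eqref{eq:detJe_sym}--\eqref{eq:detJo_sym}. The fate of the remaining \(N-D\) eigenvalues is then controlled by the segment joining \((0, v(a_0))\) and \((N-D, v(a_{N-D}))\). Invoking Eq.~\eqref{eqJJDJperp1}: when \(\det[\bm{J}^\perp] \neq 0\) we have \(v(a_0) = v(a_{N-D})\), the segment is horizontal, and the \(N-D\) extra eigenvalues have zero valuation (bounded away from zero), so the system is non-critical; when \(\det[\bm{J}^\perp] = 0\) the leading term of \(\det[\bm{J}(\omega)]\) vanishes, forcing \(v(a_0) > v(a_{N-D})\), a segment of negative slope, and hence at least one vanishing eigenvalue, so the system is critical. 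For \(D=N\) the block \(\bm{J}^\perp\) is absent and all eigenvalues diverge, so the system is never critical; this reconciles the lemma statement with the unified condition \(\det[\bm{J}^\perp]=0\).

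The main obstacle, and the point where the argument genuinely departs from the antisymmetric case, is the borderline element corresponding to \(n_c\), which diverges only logarithmically: \(J_{DD}(\omega) \sim m^{-1}|c_D|^2 \ln|\omega|\). Ordinary \(k_\omega\)-valuations cannot resolve a logarithm against a constant, so I would introduce a generalized valuation taking values in a lexicographically ordered group that tracks the power of \(k_\omega\) together with the power of \(\ln|\omega|\). The essential facts are that \(\ln|\omega| \to \infty\) ranks strictly above any constant but strictly below any negative power of \(k_\omega\), and---crucially---that the symmetric prefactors \(\kappa_{m,l}\) are independent of \(\mathrm{sign}[\mathrm{Im}\,\omega]\). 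The first fact guarantees that the borderline direction remains among the \(D\) diverging eigenvalues for both interior and borderline interactions; the second eliminates the \(\omega \to i0^\pm\) asymmetry that produced the fine-tunable constant \(c_D^-\) in the antisymmetric borderline case. Because the logarithmic term cannot be tuned to a finite value, the additional criticality mechanism \(c_D^- = 0\) is absent here, and the condition collapses to the single requirement that \(\det[\bm{J}^\perp] = 0\) with \(D<N\), completing the proof.
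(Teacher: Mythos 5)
Your proposal is correct and follows essentially the same route as the paper's proof: reducing criticality to the eigenvalue condition of Lemma~\ref{definitionCritical0}, running the Newton polygon argument on the characteristic polynomial with the interior case paralleling the antisymmetric analysis, and handling the borderline case with a generalized two-component valuation ordered lexicographically so that the logarithmic divergence of $J_{DD}(\omega)$ ranks between constants and negative powers of $k_\omega$. You also correctly identify the physical reason the antisymmetric fine-tuning mechanism ($c_D^-=0$) disappears here---the borderline term diverges rather than approaching a sign-dependent constant, so no short-range constant can cancel it---which is exactly the point the paper emphasizes.
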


To maintain the flow of the main derivation, the proof of this lemma is deferred to Section \ref{secCriticalSymm}.

 \subsection{Derivation of Theorem 2}

Having established the properties of the J-matrix and its block determinants, we are now in a position to complete the proof of Theorem 2 from the main text. Our strategy is to compute the diagonal elements \(S_{\mathrm{ee}}(0^+)\) and \(S_{\mathrm{oo}}(0^+)\) and then show that their product equals \(\det[\bm{S}(0^+)]\). This will imply that the off-diagonal elements are zero, thus fully determining the S-matrix. We assume throughout that the system is not at a critical point.

The simplest case is a purely even-parity interaction, where \(|v(k)\rangle = |v_{\mathrm{e}}(k)\rangle\). Here, the odd channel decouples, so \(S_{\mathrm{oo}}(0^+)=1\) and \(S_{\mathrm{eo}}(0^+)=S_{\mathrm{oe}}(0^+)=0\). The S-matrix is diagonal, and \(S_{\mathrm{ee}}(0^+)\) must equal the full determinant:
\begin{equation}
S_{\mathrm{ee}}(0^+) = \det[\bm{S}(0^+)] = \exp\left(2\pi i \frac{\sum_{n_i\in \mathcal{N}_{\mathrm{e}}} (2n_i+1) }{m}\right). \label{eqSeeEvenParity}
\end{equation}

For the general case of a mixed-parity interaction, our strategy is to show that \(S_{\mathrm{ee}}(0^+)\) is identical to that of an effective even-parity system. This effective system is defined by an interaction \(|v_{\mathrm{eff}}\rangle = |v_{\mathrm{e}}(k)\rangle^{\mathrm{e}}\),  which is the projection of the original even-parity component onto the subspace spanned by \(\{|u_1\rangle, \dots, |u_{N_{\mathrm{e}}}\rangle\}\). 
We define the emitter-emitter interaction \(\bm{K}^R_{\mathrm{eff}}\) of this effective system to be the \(N_{\mathrm{e}}\times N_{\mathrm{e}}\) submatrix \([\bm{K}^{R}]^{\mathrm{e}}\) from the original system. Consequently, the J-matrix of the effective system, \(\bm{J}_{\mathrm{eff}}(\omega)\), is exactly equal to the even-parity block of the original system, \(\bm{J}^{\mathrm{e}}(\omega)\).

By construction, this effective system is non-critical. As it is a purely even-parity system, its S-matrix is diagonal, its \(S_{\mathrm{ee}}\) element $S^{\mathrm{eff}}_{\mathrm{ee}}(E) $ is given by Eq.~\eqref{eqSeeEvenParity}.

To connect this result to our original system, we write down the definition of this S-matrix element in terms of the effective J-matrix, \(\bm{J}_{\mathrm{eff}}(\omega)=\bm{J}^{\mathrm{e}}(\omega)\):
\begin{equation}
S^{\mathrm{eff}}_{\mathrm{ee}}(E) = 1 + 2\pi i \rho(E) \langle v_{\mathrm{e}}(k_E)|^{\mathrm{e}}[\bm{J}^{\mathrm{e}} (\omega)]^{-1} |v_{\mathrm{e}}(k_E)\rangle^{\mathrm{e}}. \label{eqSeffee}
\end{equation}
Our goal is now to show that \(S_{\mathrm{ee}}(0^+)\) for the original mixed-parity system, defined by
\begin{equation}
S_{\mathrm{ee}}(E)=1 + 2\pi i \rho(E) \langle v_{\mathrm{e}}(k_E)|[\bm{J}(\omega)]^{-1} |v_{\mathrm{e}}(k_E)\rangle, \label{SeeEdefi} 
\end{equation}
evaluates to the same value as \(S^{\mathrm{eff}}_{\mathrm{ee}}(0^+)\).
 This requires a detailed understanding of the inverse matrix \([\bm{J}(\omega)]^{-1}\). The necessary identities are provided by the following lemma.

\begin{lemma} \label{lemmaInvJmatrix}
If the system is not critical, the inverse J-matrix \(\bm{J}(\omega)^{-1}\) has the following properties as \(\omega \to 0\):
\begin{enumerate}
    \item The diagonal blocks of the inverse are asymptotically equal to the inverses of the blocks of \(\bm{J}(\omega)\):
    \begin{equation}
    [\bm{J}(\omega)^{-1}]^{\mathrm{e}}\sim [\bm{J}^{\mathrm{e}}(\omega)]^{-1}, \quad  [\bm{J}(\omega)^{-1}]^{\mathrm{o}}\sim [\bm{J}^{\mathrm{o}}(\omega)]^{-1}, \quad  [\bm{J}(\omega)^{-1}]^{\perp}\to  [\bm{J}^{\perp}]^{-1}. \label{eqInvJsub}
    \end{equation}
    \item The elements of the diagonal blocks of \(\bm{J}(\omega)^{-1}\) vanish with a specific scaling:
    \begin{equation}
    [\bm{J}(\omega)^{-1}]_{ij}=O\left(k_{\omega}^{-(-m+1+n_i+n_j)}\right). \label{eqJinvDiag}
    \end{equation}
    \item The elements of the off-diagonal blocks of \(\bm{J}(\omega)^{-1}\) vanish faster:
    \begin{equation}
    [\bm{J}(\omega)^{-1}]_{ij}=o\left(k_{\omega}^{-(-m+1+n_i+n_j)}\right). \label{eqJinvoffDiag}
    \end{equation}
\end{enumerate}
\end{lemma}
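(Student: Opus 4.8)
The plan is to reduce the inversion of the full J-matrix to the inversion of its block-diagonal part by a diagonal rescaling that renders the three blocks comparable in size. Using the extended sequence $n_j=n_c$ for $j>D$, I would introduce the diagonal conjugating matrix $\bm{P}(\omega)=\mathrm{diag}(q_\omega^{n_1},\dots,q_\omega^{n_N})$ and define $\bm{W}(\omega)\equiv q_\omega^{m-1}\,\bm{P}(\omega)^{-1}\bm{J}(\omega)\bm{P}(\omega)^{-1}$, so that $W_{ij}(\omega)=q_\omega^{m-1-n_i-n_j}J_{ij}(\omega)$. The purpose of this rescaling is that the estimates of Sec.~\ref{subsecJsym} become uniform: every entry inside a diagonal block is promoted to $\Theta(1)$ (by Eq.~\eqref{eqMijMeMo}), whereas every off-diagonal entry, being $\mathrm{o}(q_\omega^{-m+1+n_i+n_j})$ by Eq.~\eqref{eqJijoff}, is demoted to $\mathrm{o}(1)$. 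Hence $\bm{W}(\omega)=\bm{W}_{\mathrm{diag}}(\omega)+\bm{W}_{\mathrm{off}}(\omega)$, with $\bm{W}_{\mathrm{diag}}$ collecting the blocks $\bm{W}^{\mathrm{e}},\bm{W}^{\mathrm{o}},\bm{W}^\perp$ and $\bm{W}_{\mathrm{off}}\to 0$.

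First I would establish that each diagonal block of $\bm{W}$ is invertible with a bounded inverse. Since $m-1-2n_c=0$, the $\perp$ block is $\bm{W}^\perp(\omega)=\bm{J}^\perp(\omega)\to\bm{J}^\perp$, which is invertible precisely because the system is non-critical (Lemma~\ref{lemmaCriticalSym}); when $D=N$ this block is absent and the condition is vacuous. For the even and odd blocks I would use that the rescaled determinant $\det[\bm{W}^{\mathrm{e}}]$ equals $\det[\bm{J}^{\mathrm{e}}]$ divided by the scalar $q_\omega^{\sum_{n_i\in\mathcal{N}_{\mathrm{e}}}(-m+1+2n_i)}$; by Eq.~\eqref{eq:detJe_sym} and the $\breve{L}_{m,l}$ asymptotics of Eq.~\eqref{eqKsum2} this tends to a nonzero constant (proportional to $\breve{Z}(\mathcal{N}_{\mathrm{e}})\neq 0$ from Lemma~\ref{lemma:Z_twiddle_det}), so $\bm{W}^{\mathrm{e}}$ is invertible for small $\omega$ with a bounded inverse, and likewise for $\bm{W}^{\mathrm{o}}$.

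With the diagonal part invertible, the inversion follows from $\bm{W}^{-1}=(\mathbb{1}+\bm{W}_{\mathrm{diag}}^{-1}\bm{W}_{\mathrm{off}})^{-1}\bm{W}_{\mathrm{diag}}^{-1}$. Because $\bm{W}_{\mathrm{off}}\to 0$ while $\bm{W}_{\mathrm{diag}}^{-1}$ stays bounded, the correction factor tends to $\mathbb{1}$, giving $\bm{W}(\omega)^{-1}\sim\bm{W}_{\mathrm{diag}}(\omega)^{-1}$, which is block-diagonal with blocks $[\bm{W}^{\mathrm{e}}]^{-1},[\bm{W}^{\mathrm{o}}]^{-1},[\bm{W}^\perp]^{-1}$. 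Undoing the rescaling via $[\bm{J}^{-1}]_{ij}=q_\omega^{m-1-n_i-n_j}[\bm{W}^{-1}]_{ij}$ then yields all three claims simultaneously: on a diagonal block $[\bm{W}^{-1}]_{ij}=O(1)$ produces the scaling $O(k_\omega^{-(-m+1+n_i+n_j)})$ of Eq.~\eqref{eqJinvDiag}, and matching blockwise gives $[\bm{J}^{-1}]^{\mathrm{e}}\sim q_\omega^{m-1}\bm{P}_{\mathrm{e}}^{-1}[\bm{W}^{\mathrm{e}}]^{-1}\bm{P}_{\mathrm{e}}^{-1}=[\bm{J}^{\mathrm{e}}]^{-1}$ (and analogously for the other two blocks), which is Eq.~\eqref{eqInvJsub}; on an off-diagonal block $[\bm{W}^{-1}]_{ij}\to 0$ produces the faster decay $\mathrm{o}(k_\omega^{-(-m+1+n_i+n_j)})$ of Eq.~\eqref{eqJinvoffDiag}.

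I expect the main obstacle to be the borderline case $n_c\in\mathcal{N}$, where the rescaling does not send the corresponding diagonal entry to a constant: because $\breve{L}_{m,m-1}(\omega)$ diverges logarithmically [Eq.~\eqref{eqKsum2}], the entry $W_{N_{\mathrm{e}},N_{\mathrm{e}}}(\omega)\sim m^{-1}|c_{N_{\mathrm{e}}}|^2\ln|\omega|$ is unbounded while every other entry of its block remains $\Theta(1)$. The resolution is that this logarithm only strengthens invertibility. Expanding along the $N_{\mathrm{e}}$-th row and column, exactly as in the determinant factorization of Sec.~\ref{subsecJsym}, gives $\det[\bm{W}^{\mathrm{e}}]\sim W_{N_{\mathrm{e}},N_{\mathrm{e}}}\det[\bm{W}^{e,(N_{\mathrm{e}}-1)}]$, which diverges like $\ln|\omega|$ since the leading sub-block determinant is the nonzero constant proportional to $\breve{Z}(\mathcal{N}_{\mathrm{e}}\setminus\{n_c\})$; hence $\bm{W}^{\mathrm{e}}$ is invertible for small $\omega$. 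Writing each entry of $[\bm{W}^{\mathrm{e}}]^{-1}$ as a cofactor over this determinant, the single logarithm appears in the denominator and in exactly those cofactors that retain the $(N_{\mathrm{e}},N_{\mathrm{e}})$ entry, so every entry of $[\bm{W}^{\mathrm{e}}]^{-1}$ stays bounded as $\omega\to 0$ (those in row or column $N_{\mathrm{e}}$ in fact vanish). The inverse block being bounded, the block-diagonal-dominance argument of the preceding paragraph carries over unchanged. This is precisely the ``generalized valuation'' phenomenon alluded to in the main text: the borderline logarithm behaves as a factor of valuation zero that, because it grows rather than shrinks, can never generate a vanishing eigenvalue, leaving $\det[\bm{J}^\perp]=0$ as the only route to criticality.
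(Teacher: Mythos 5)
Your proposal is correct, but it takes a genuinely different route from the paper. The paper works entrywise through the adjugate formula: each element \([\bm{J}(\omega)^{-1}]_{ij}\) is written as \((-1)^{i+j}\det[\bm{J}_{\cancel{ji}}(\omega)]/\det[\bm{J}(\omega)]\), and the three claims follow from a Leibniz-expansion analysis of the minors — for diagonal-block indices the minor factorizes into block determinants (giving statement 1 and, via the block-determinant asymptotics, statement 2), while for off-diagonal indices every permutation must ``cross'' between blocks and therefore picks up at least one extra sub-dominant off-diagonal element (giving statement 3). You instead conjugate by \(\bm{P}(\omega)=\mathrm{diag}(q_\omega^{n_1},\dots,q_\omega^{n_N})\) to normalize all scalings, so that \(\bm{W}=\bm{W}_{\mathrm{diag}}+\bm{W}_{\mathrm{off}}\) with \(\bm{W}_{\mathrm{diag}}^{-1}\) bounded and \(\bm{W}_{\mathrm{off}}\to 0\), and then invoke the Neumann identity \(\bm{W}^{-1}=(\mathbb{1}+\bm{W}_{\mathrm{diag}}^{-1}\bm{W}_{\mathrm{off}})^{-1}\bm{W}_{\mathrm{diag}}^{-1}\); undoing the conjugation delivers all three statements at once. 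Your approach buys a more systematic and uniform argument: the three claims are consequences of a single operator-norm perturbation bound, and it makes transparent why the corrections to the diagonal blocks are second order in the off-diagonal entries (since \(\bm{W}_{\mathrm{off}}\) has vanishing diagonal blocks, the first Neumann term does not touch them) — a fact the paper encodes combinatorially through permutation counting. The paper's cofactor route, in exchange, yields finer entrywise information (e.g.\ the explicit \(\Theta\bigl((\ln|\omega|)^{-1}\bigr)\) behavior of borderline inverse entries) and reuses the block-determinant formulas already needed for \(\det[\bm{S}(0^+)]\). Two caveats, neither fatal: first, your claim that every diagonal-block entry becomes \(\Theta(1)\) after rescaling overstates things for the \(\perp\) block (entries there are only \(\mathrm{O}(1)\), possibly zero) and for the borderline entry (which is \(\Theta(\ln|\omega|)\)); but what your argument actually uses — invertibility of the limiting \(\bm{J}^{\perp}\) from non-criticality and boundedness of \([\bm{W}^{\mathrm{e}}]^{-1}\), \([\bm{W}^{\mathrm{o}}]^{-1}\), which your cofactor treatment of the logarithm correctly supplies — is right. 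Second, your additive \(o(1)\) error after rescaling yields entrywise asymptotic equivalence in statement 1 only for entries whose rescaled limits are non-vanishing; this looseness is shared by the paper's own proof (whose minor factorizations likewise presuppose non-vanishing leading terms), so it is a feature of the lemma's intended reading rather than a gap specific to your argument.
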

(The proof of this lemma is given in Sec.~\ref{lemmaInvJmatrix}.)

This lemma is the key to the proof. The final step is to analyze the asymptotic behavior of the vector \(|v_{\mathrm{e}}(k_E)\rangle \). As illustrated in Fig.~\ref{invsub2}(b), its components exhibit a strong asymptotic hierarchy determined by the parity of the basis vectors:
\begin{equation}
V_{e,i}(k) =\begin{cases}
\Theta(k^{n_i}) & \text{for } 1\leq i\leq N_{\mathrm{e}} \text{ (even subspace)},\\
\mathrm{o}(k^{n_i}) &  \text{otherwise}.
\end{cases}\label{eqViek0}
\end{equation}
When we expand the matrix element \(\langle v_{\mathrm{e}}(k_E)|[\bm{J}(\omega)]^{-1} |v_{\mathrm{e}}(k_E)\rangle\), each term is a product of the form \((V_{e,i}^*(k_E) [J(E+i0)^{-1}]_{ij} V_{e,j}(k_E)\). Due to the strict scaling hierarchies established for both the vector components [Eq.~\eqref{eqViek0}] and the inverse J-matrix (Lemma \ref{lemmaInvJmatrix}), the dominant contributions come exclusively from terms where both indices \(i\) and \(j\) are within the even block (\(1 \le i,j \le N_{\mathrm{e}}\)). All other terms are of a lower asymptotic order. This reduction is visualized in Fig.~\ref{invsub2}(c) and leads to the crucial equivalence:
\begin{equation}
\langle v_{\mathrm{e}}(k_E)|[\bm{J}(E+i0)]^{-1} |v_{\mathrm{e}}(k_E)\rangle \sim \langle v_{\mathrm{e}}(k_E)|^{\mathrm{e}}[\bm{J}^{\mathrm{e}}(E+i0)]^{-1} |v_{\mathrm{e}}(k_E)\rangle^{\mathrm{e}}. \label{eqEquiEff}
\end{equation}
Comparing Eqs.~\eqref{SeeEdefi}, \eqref{eqSeffee}, and \eqref{eqEquiEff}, we conclude that \(S_{\mathrm{ee}}(0^+) = S^{\mathrm{eff}}_{\mathrm{ee}}(0^+)\). A similar argument holds for the odd channel, yielding
\begin{align}
S_{\mathrm{ee}}(0^+) &= \exp\left(2\pi i \frac{\sum_{n_i\in \mathcal{N}_{\mathrm{e}}} (2n_i+1) }{m}\right), \\
S_{\mathrm{oo}}(0^+) &= \exp\left(2\pi i \frac{\sum_{n_i\in \mathcal{N}_{\mathrm{o}}} (2n_i+1) }{m}\right).
\end{align}
The product of these diagonal elements matches the total determinant \(\det[\bm{S}(0^+)]\) from Eq.~\eqref{eqdetS0p}. For a \(2\times 2\) unitary matrix, this implies that the off-diagonal elements must be zero: \(S_{\mathrm{oe}}(0^+)=S_{\mathrm{eo}}(0^+)=0\). This completes the proof of Theorem 2.

In the following two sections, we prove the two lemmas used in the proof, i.e.~Lemma \ref{secCriticalSymm} on the critical condition and Lemma \ref{lemmaInvJmatrix} on the properties of $\bm{J}(\omega)^{-1}$.

\subsection{Asymptotic Behavior of the Inverse J-matrix }

In this section, we prove the three statements of Lemma \ref{lemmaInvJmatrix} regarding the properties of the inverse J-matrix. Throughout the proof, we assume the system is not critical, which for \(D<N\) implies \(\det[\bm{J}^{\perp}]\neq 0\). The arguments for the \(D=N\) case follow an identical, simpler logic and are omitted.

The proof relies on the adjugate formula for a matrix inverse, where the \((i,j)\) element is given by the ratio of the determinant of a minor to the full determinant:
\begin{equation}
 [\bm{J}(\omega)^{-1}]_{ij} = \frac{(-1)^{i+j} \det[\bm{J}_{\cancel{ji}}(\omega)]}{\det[\bm{J}(\omega)]}, \label{eqinvHij0}
\end{equation}
where \(\bm{J}_{\cancel{ji}}(\omega)\) is the submatrix of \(\bm{J}(\omega)\) formed by removing row \(j\) and column \(i\).

\subsubsection{Block-Diagonal Structure of the Inverse Matrix}

We first prove the first statement [Eq.~\eqref{eqInvJsub}]) that the diagonal blocks of \(\bm{J}^{-1}\) are asymptotically equal to the inverses of the blocks of \(\bm{J}\), as illustrated in Fig.~\ref{invsub2}(a).

We begin by proving the first statement of the lemma [Eq.\eqref{eqInvJsub}]: that the diagonal blocks of $\bm{J}^{-1}$ are asymptotically equal to the inverses of the corresponding blocks of $\bm{J}$. This block-diagonal structure of the inverse is illustrated in Fig.\ref{invsub2}(a).

\begin{figure}[t]
    \includegraphics[width=\linewidth]{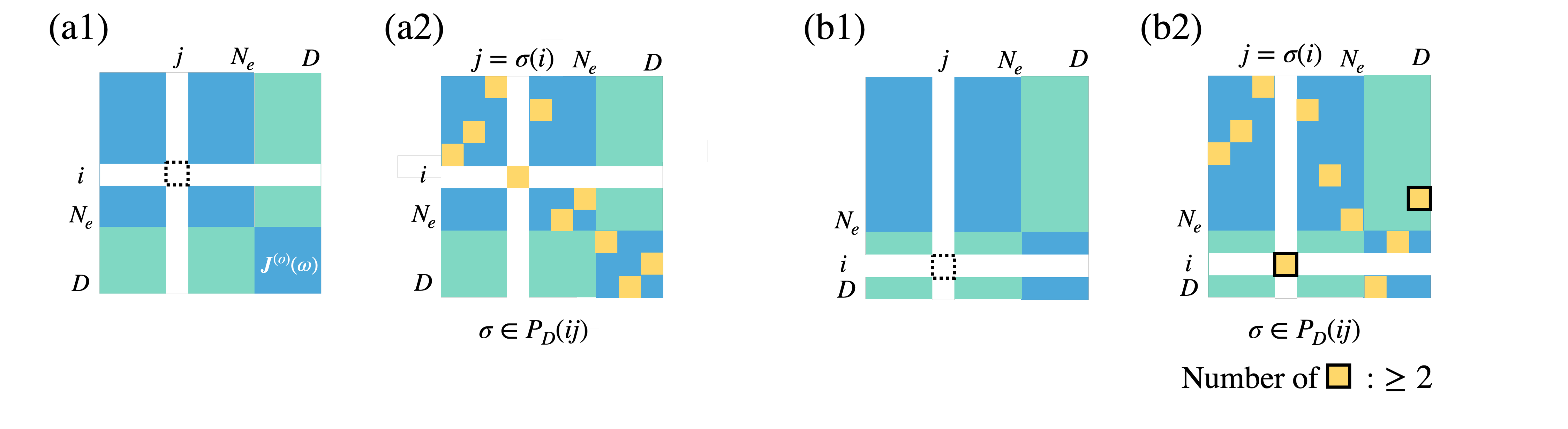}
    \caption{Illustration of the minors used in the proof of Lemma \ref{lemmaInvJmatrix}, shown for simplicity in the \(D=N\) case where \(\bm{J}(\omega) = \bm{J}^{(D)}(\omega)\). (a1) The minor \(\bm{J}_{\cancel{ji}}(\omega)\) used to compute an element of \([\bm{J}^{-1}]_{ij}\) within a diagonal block. (a2) A dominant permutation contributing to \(\det[\bm{J}_{\cancel{ji}}(\omega)]\), which only involves elements from the diagonal sub-blocks. (b1) The minor for an off-diagonal element. (b2) A permutation contributing to its determinant must select elements from the off-diagonal blocks (outlined squares), making the term sub-dominant.}
    \label{figHijcross}
\end{figure}

Consider an element \([\bm{J}(\omega)^{-1}]_{ij}\) where both \(i\) and \(j\) are in the even-parity block (\(1\leq i,j\leq N_{\mathrm{e}}\)). The numerator in the adjugate formula is \(\det[\bm{J}_{\cancel{ji}}(\omega)]\). As illustrated in Fig.~\ref{figHijcross}(a), removing a row and column from within one diagonal block leaves the other diagonal blocks intact. Due to the asymptotic dominance of the diagonal structure, the leading behavior of this minor's determinant is factorized:
\begin{equation}
    \det[\bm{J}_{\cancel{ji}}(\omega)] \sim \det[\bm{J}^{\mathrm{e}}_{\cancel{ji}}(\omega)]\det[\bm{J}^{\mathrm{o}}(\omega)]\det[\bm{J}^{\perp}].
\end{equation}
Taking the ratio, the determinants of the odd and perpendicular blocks cancel out, leaving
\begin{equation}
  [\bm{J}(\omega)^{-1}]_{ij} \sim \frac{(-1)^{i+j} \det[\bm{J}^{\mathrm{e}}_{\cancel{ji}}(\omega)]}{\det[\bm{J}^{\mathrm{e}}(\omega)]} = [\bm{J}^{\mathrm{e}}(\omega)^{-1}]_{ij}.
\end{equation}
This shows that \([\bm{J}(\omega)^{-1}]^{\mathrm{e}} \sim [\bm{J}^{\mathrm{e}}(\omega)]^{-1}\). The same logic applies to the odd block, \([\bm{J}(\omega)^{-1}]^{\mathrm{o}}\).

Next, we consider the perpendicular block, where \(D+1\leq i,j\leq N\). The behavior of the numerator, \(\det[\bm{J}_{\cancel{ji}}(\omega)]\), depends on the minor \(\det[\bm{J}^{\perp}_{\cancel{ji}}]\).
\begin{itemize}
    \item If \(\det[\bm{J}^{\perp}_{\cancel{ji}}] \neq 0\), the leading behavior is given by the product of the block determinants:
    \begin{equation*}
        \det[\bm{J}_{\cancel{ji}}(\omega)] \sim \det[\bm{J}^{\mathrm{e}}(\omega)]\det[\bm{J}^{\mathrm{o}}(\omega)]\det[\bm{J}^{\perp}_{\cancel{ji}}].
    \end{equation*}
    \item If \(\det[\bm{J}^{\perp}_{\cancel{ji}}] = 0\), the leading term vanishes, and the numerator is of a lower asymptotic order:
    \begin{equation*}
         \det[\bm{J}_{\cancel{ji}}(\omega)] = \mathrm{o}(\det[\bm{J}^{\mathrm{e}}(\omega)]\det[\bm{J}^{\mathrm{o}}(\omega)]).
    \end{equation*}
\end{itemize}
Substituting these into the formula for the inverse and taking the limit \(\omega \to 0\), the divergent parts cancel, and we obtain
\begin{equation}
\lim_{\omega\to 0} [\bm{J}(\omega)]^{-1}_{ij} = \begin{cases}
    \frac{(-1)^{i+j} \det[\bm{J}^{\perp}_{\cancel{ji}}]}{\det[\bm{J}^\perp]} & \text{if } \det[\bm{J}^{\perp}_{\cancel{ji}}] \neq 0 \\
    0 & \text{if } \det[\bm{J}^{\perp}_{\cancel{ji}}] = 0
\end{cases}.
\end{equation}
The expression on the top line is exactly the formula for the \((i,j)\) element of the matrix \((\bm{J}^\perp)^{-1}\). The bottom line is also consistent with this, because if the minor determinant \(\det[\bm{J}^{\perp}_{\cancel{ji}}]\) is zero, then the corresponding element of \((\bm{J}^\perp)^{-1}\) is also zero. Therefore, in all cases, we have
\begin{equation}
    \lim_{\omega\to 0} [\bm{J}(\omega)]^{-1}_{ij} = [(\bm{J}^{\perp})^{-1}]_{ij}.
\end{equation}
This concludes the proof of the first statement of the lemma.

\subsubsection{Asymptotic Scaling of Diagonal Block Element }

Having established the block structure of the inverse matrix, we now analyze the asymptotic scaling of the elements within its diagonal blocks. We will prove the second statement of Lemma \ref{lemmaInvJmatrix}: that these elements scale according to Eq.~\eqref{eqJinvDiag}.
\begin{itemize}
    \item For the perpendicular block \([\bm{J}(\omega)^{-1}]^\perp\), we just showed that it approaches a constant matrix \((\bm{J}^\perp)^{-1}\). This is \(\mathrm{O}(1)\), which matches the scaling \(\mathrm{O}(k_{\omega}^{-(-m+1+n_i+n_j)})\) since we define \(n_i=n_j=n_c\) in this block.

    \item For the even block \([\bm{J}(\omega)^{-1}]^{\mathrm{e}}\), we use the result \([\bm{J}(\omega)^{-1}]^{\mathrm{e}} \sim [\bm{J}^{\mathrm{e}}(\omega)]^{-1}\). The leading behavior of the elements of this inverse can be found using the adjugate formula for \(\bm{J}^{\mathrm{e}}\) itself:
    \begin{equation}
    [\bm{J}^{\mathrm{e}}(\omega)^{-1}]_{ij} = \frac{(-1)^{i+j}\det[\bm{J}^{\mathrm{e}}_{\cancel{ji}}(\omega)]}{\det[\bm{J}^{\mathrm{e}}(\omega)]} = \Theta\left(\frac{1}{J_{ji}^{\mathrm{e}}(\omega)}\right).
    \end{equation}
    This gives two distinct behaviors depending on whether the interaction is borderline:
    \begin{equation}
    [\bm{J}^{\mathrm{e}}(\omega)^{-1}]_{ij} = 
    \begin{cases}
    \Theta\left( \left( \ln|\omega| \right)^{-1} \right) & \text{if } n_i=n_j=n_c, \\
    \Theta\left(k_{\omega}^{-(-m+1+n_i+n_j)}\right) & \text{otherwise}.
    \end{cases}
    \end{equation}
    Since the logarithmic term vanishes slower than any power law, both cases are encompassed by the upper bound 
    \begin{equation}
    [\bm{J}^{\mathrm{e}}(\omega)^{-1}]_{ij} = O\left(k_{\omega}^{-(-m+1+n_i+n_j)}\right).
    \end{equation}
\end{itemize}
An identical argument holds for the odd block \([\bm{J}(\omega)^{-1}]^{\mathrm{o}}\). This completes the proof of Eq.~\eqref{eqJinvDiag} for all diagonal blocks.

\subsubsection{Asymptotic Scaling of Off-Diagonal Block Elements}

Finally, we prove the third statement of the lemma [Eq.~\eqref{eqJinvoffDiag}], which establishes that each off-diagonal element of $\bm{J}(\omega)^{-1}$ has an asymptotic order strictly smaller than the reference scaling $k_{\omega}^{-(-m+1+n_i+n_j)}$.

Consider an element \([J(\omega)^{-1}]_{ij}\) where \(i\) and \(j\) belong to different diagonal blocks.
The denominator in the adjugate formula is again \(\det[\bm{J}(\omega)]\), which is dominated by products of elements from within the diagonal blocks. The numerator, however, is \(\det[\bm{J}_{\cancel{ji}}(\omega)]\). As illustrated in Fig.~\ref{figHijcross}(b), any permutation contributing to this determinant must now ``cross'' between blocks to connect the remaining rows and columns. This forces every term in the Leibniz expansion of \(\det[\bm{J}_{\cancel{ji}}(\omega)]\) to contain at least one extra off-diagonal element compared to the leading terms in \(\det[\bm{J}(\omega)]\).

Because off-diagonal elements are of a strictly lower asymptotic order [Eq.~\eqref{eqJijoff}], every term in the sum for \(\det[\bm{J}_{\cancel{ji}}(\omega)]\) is sub-dominant to the leading terms of \(\det[\bm{J}(\omega)]\). Therefore, the entire numerator is of a lower asymptotic order:
\begin{equation}
    \det[\bm{J}_{\cancel{ji}}(\omega)] = o\left(\det[\bm{J}(\omega)] \cdot k_{\omega}^{-(-m+1+n_i+n_j)}\right).
\end{equation}
Dividing by \(\det[\bm{J}(\omega)]\) via the adjugate formula gives the desired result:
\begin{equation}
    [\bm{J}(\omega)^{-1}]_{ij} = o\left(k_{\omega}^{-(-m+1+n_i+n_j)}\right).
\end{equation}
This confirms that the off-diagonal blocks of the inverse matrix vanish faster than the diagonal blocks, completing the proof of Lemma \ref{lemmaInvJmatrix}.

\subsection{Criticality Conditions }
\label{secCriticalSymm}

This section provides the proof for Lemma \ref{lemmaCriticalSym}. A key feature of the symmetric case is that the condition for criticality is the same for both interior and borderline interactions. However, the proofs for these two scenarios must be handled separately due to a fundamental difference in the asymptotic behavior of the Jost matrix. Specifically, the J-matrix for a borderline interaction contains a matrix element with logarithmic scaling, whereas the interior case involves only power-law scaling. This distinction requires a separate application of the Newton polygon method for each case, which we present in turn. For the purpose of this proof, we revert to the original ascending order for the elements of the set $\mathcal{N}$ ($n_1 < n_2 < \dots < n_D$). This contrasts with the preceding analysis, where the elements were grouped by parity.

\subsubsection{Interior Interactions}
For interior interactions, the argument directly parallels that of the antisymmetric dispersions. Because the valuations \(v(a_l)\) of the characteristic polynomial's coefficients are identical to those derived in Section~\ref{SecDlessNinterior}, applying the same Newton polygon analysis leads to the identical conclusion: the system is critical if and only if \(D<N\) and \(\det[\bm{J}^\perp] = 0\). These scenarios are illustrated by the Newton polygons in Figs.~\ref{fig:NewtonPoly_sym}(a) and \ref{fig:NewtonPoly_sym}(c).

\begin{figure*}[t]
    \centering
\includegraphics[width=0.9\linewidth]{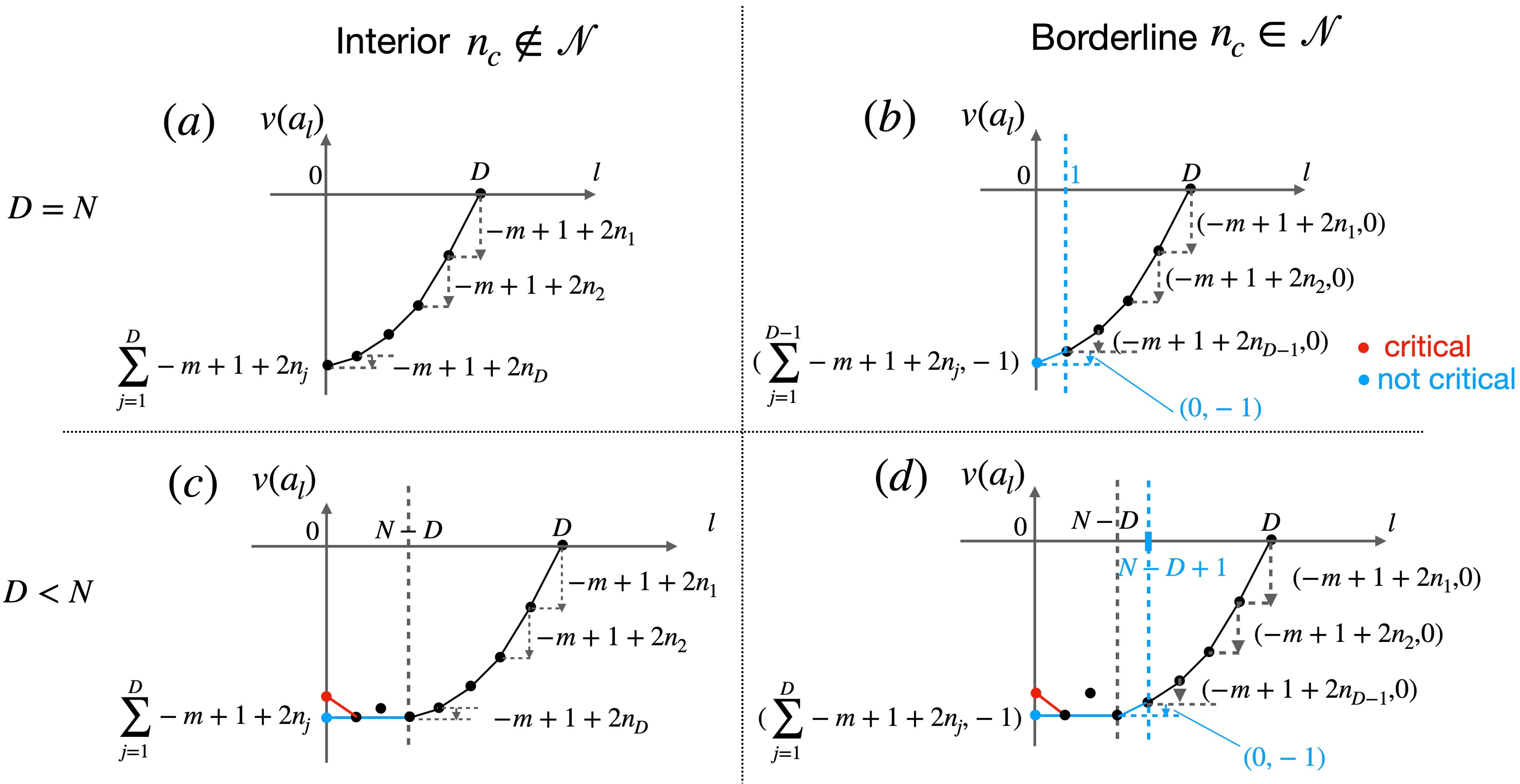}
\caption{Newton polygons for the  characteristic polynomial \(\bm{J}(\omega)\) in the symmetric dispersion case, illustrating the conditions for criticality. The panels compare interior \textbf{(a, c)} and borderline \textbf{(b, d)} interactions for system sizes \(D = N\) (top row) and \(D < N\) (bottom row). The analysis for the interior cases \textbf{(a, c)} is identical to that of the antisymmetric dispersion (cf.~Fig.~S1), whereas the borderline cases \textbf{(b, d)} require a generalized two-component valuation to properly account for logarithmic divergences. As established previously, criticality corresponds to a segment with a negative slope (highlighted in red). The figure demonstrates that for both interior and borderline interactions, this condition is met if and only if \(D < N\) and \(\det[\bm{J}^\perp] = 0\).}

    \label{fig:NewtonPoly_sym}
\end{figure*}

\subsubsection{Borderline Interactions}

The borderline case($n_D=n_c$) requires more care, as the logarithmic divergence means the coefficients \(a_l(\omega)\) are no longer simple Laurent series. To proceed, we generalize the Newton polygon method to handle terms with logarithmic factors.


Following MacLane [Bull. Am. Math. Soc. 45 (1939)], we define a generalized two-component valuation.
For a term with the leading asymptotic behavior \(x \sim c \left(\ln|q_{\omega}|\right)^{\beta} q_{\omega}^{\alpha}\), where \(c\) is a non-zero complex constant, its valuation is the pair \(v(x) = (\alpha, -\beta)\). This captures both the power-law (\(\alpha\)) and logarithmic (\(-\beta\)) behavior. The key properties of this valuation are
 \begin{itemize}

    \item \textbf{Ordering:} Valuations are ordered lexicographically: \((\alpha_1, \beta_1) < (\alpha_2, \beta_2)\) if \(\alpha_1 < \alpha_2\), or if \(\alpha_1 = \alpha_2\) and \(\beta_1 < \beta_2\). A term vanishes if its valuation is greater than \((0,0)\), diverges if it is less than \((0,0)\), and approaches a nonzero constant if it is equal to \((0,0)\).

    \item \textbf{Addition and Multiplication:} The valuation of a product is the sum of the valuations, where addition is component-wise: \(v(xy) = v(x) + v(y) = (\alpha_x+\alpha_y, \beta_x+\beta_y)\).
\end{itemize}
In the Newton polygon constructed with these two-component valuations, the ``slope" of a segment connecting points \(\left(l, v\right)\) and \(\left(l', v'\right)\) with \(l'>l\) is the vector \(\frac{v' - v}{l'-l}\). The negative of this vector slope gives the valuation of the corresponding roots.

 The valuations of the characteristic polynomial coefficients are then given by
\begin{align}
v(a_l) = \begin{cases}
    \left(\sum_{i=1}^{N-l} (-m+1+2n_i), 0\right) & \text{for } l \ge N-D+1, \\
    \left(\sum_{i=1}^{D-1} (-m+1+2n_i), 1\right) & \text{for } l = N-D.
\end{cases}
\end{align}
The valuation of \(a_0(\omega)\) depends on \(\det[\bm{J}^\perp]\):
\begin{equation}
v(a_0) = \begin{cases}
    \left(\sum_{i=1}^{D-1} (-m+1+2n_i), 1\right) & \text{if } \det[\bm{J}^\perp]\neq 0, \\
    > \left(\sum_{i=1}^{D-1} (-m+1+2n_i), 1\right) & \text{if } \det[\bm{J}^\perp] = 0.
\end{cases}
\end{equation}

With these valuations, we construct the Newton polygon, as illustrated in Fig.~\ref{fig:NewtonPoly_Antisym}(d). For \(l \ge N-D\), the points form a strictly convex sequence, creating segments with positive slopes. This corresponds to \(D\) divergent eigenvalues, a behavior common to both critical and non-critical systems.

The determination of criticality depends on the shape of the polygon for \(0 \le l \le N-D\).
\begin{itemize}
    \item When \(\det[\bm{J}^\perp] \neq 0\):
    Here, \(v(a_0) = v(a_{N-D})\). Since it can be shown that all intermediate points \(\left(l, v(a_l)\right)\) for \(0 < l < N-D\) lie on or above the line connecting the endpoints, the lower boundary of the convex hull is the horizontal line segment itself (the blue line in Fig.~\ref{fig:NewtonPoly_sym}(d)). The zero slope implies that the corresponding \(N-D\) eigenvalues have valuations \((0, \beta)\) with \(\beta \le 0\), meaning they do not vanish. The system is not critical.

    \item When \(\det[\bm{J}^\perp] = 0\):
    Here, \(v(a_0) > v(a_{N-D})\). The point \(\left(0, v(a_0)\right)\) is raised relative to the other points in the range \(1 \le l \le N-D\), as shown by the red point in Fig.~\ref{fig:NewtonPoly_sym}(d). This forces the lower boundary of the convex hull to contain a segment with a negative slope. The negative of this slope is a positive valuation, proving that at least one eigenvalue vanishes as \(\omega \to 0\). The system is critical.
\end{itemize}
This demonstrates that the criticality condition \(\det[\bm{J}^\perp]=0\) holds for borderline interactions as well, completing the proof of Lemma \ref{lemmaCriticalSym}.
\hfill \(\Box\)

\section{Proof of Levinson's Theorem}

In this section, we provide the proof for the generalized Levinson's theorem presented in the main text [Eq.~(E1)]. The theorem relates the total phase accumulated by the S-matrix determinant across the spectrum to the number of bound states $N_B$ and the low-energy scaling of the interaction. Specifically, we will prove 
 \begin{equation}
 \Delta \delta = \pi \Delta N - \pi \sum_{n_i\in \mathcal{N}} \frac{-m+2n_i+1}{m}, \label{eqLevS}
\end{equation}
where \(\Delta N\) is \(N-N_B\) for emitter scattering and \(-N_B\) for separable potential scattering.

Our goal is to evaluate the winding phase of the S-matrix determinant, which is defined by the principal value integral 
\begin{equation}
 2\Delta \delta = \mathcal{P}\int_{E_{\rm min}}^{E_{\rm max}} dE\ \frac{d}{dE}\ln\left(\det[\bm{S}(E)]\right). \label{eqDeltaPhi}
\end{equation}
Our strategy is to evaluate this integral using complex analysis. By relating the S-matrix determinant to the Jost function via \(\det[\bm{S}(E)] =J(E-i0) / J(E+i0)\), we transform the real-axis integral into a contour integral of \(\frac{\partial_{\omega}J(\omega)}{J(\omega)}\) in the complex \(\omega\)-plane. This integral can then be solved by closing the contour and applying the Cauchy Residue Theorem. The contributions to the integral arise from the residues at the poles of the integrand (which correspond to the bound states, or zeros of \(J(\omega)\)) and from the integrals along the closing arcs at low and high energies. The shape of the integration contour depends on the dispersion relation, as shown in Fig.~\ref{fig:levinson}.

\begin{figure}[t]
    \centering
    \includegraphics[width=0.6\linewidth]{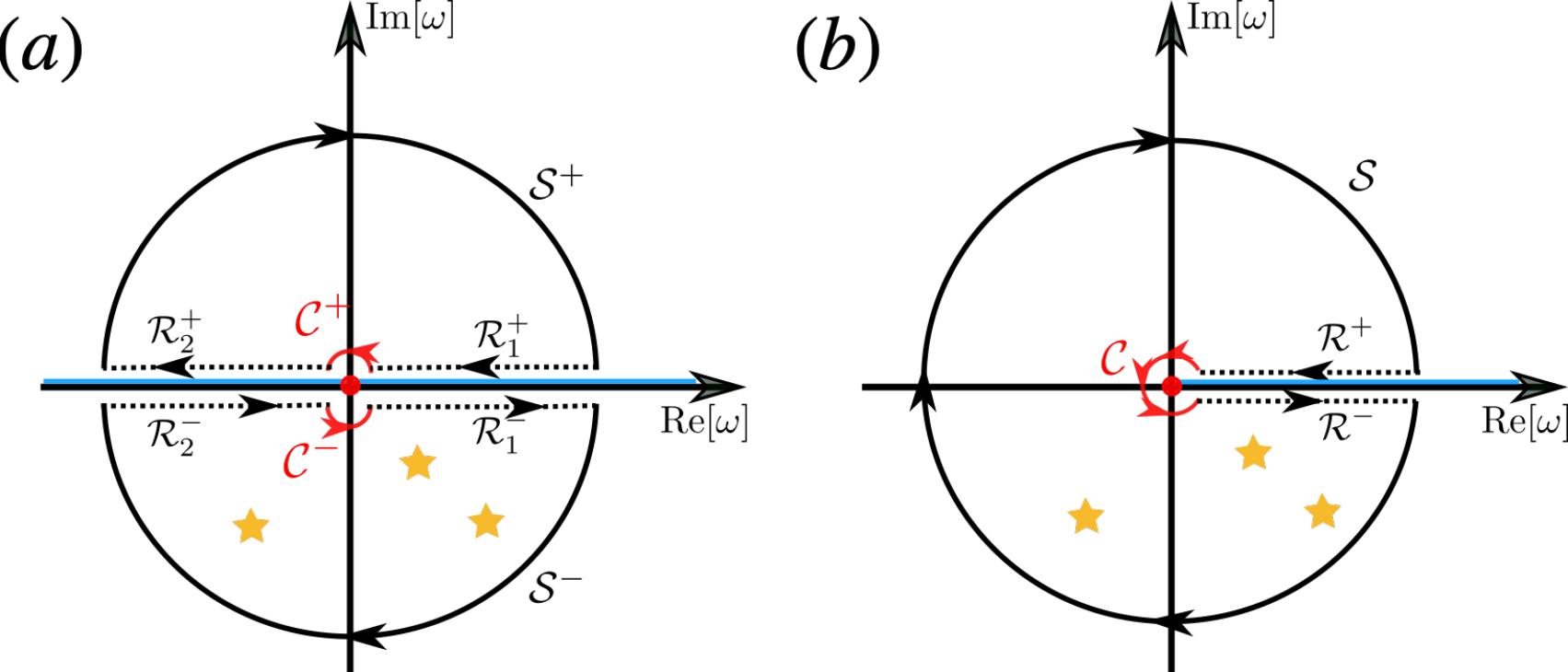}
    \caption{Integration contours for calculating the winding phase of \(\det[\bm{S}(E)]\). The calculation is performed by deforming the original integration path along the continuum spectrum (blue line) into equivalent paths that trace the branch cuts of the Jost function, \(J(\omega)\) (dashed lines, \(\mathcal{R}\)). To apply the residue theorem, these open paths are then closed by adding large arcs at infinity (\(\mathcal{S}\)) and small arcs around the origin (\(\mathcal{C}\)). The total winding phase is thus determined by the sum of residues at the enclosed bound states (yellow stars) and the integral contributions from the closing arcs.
The specific geometry of the resulting closed contour depends on the dispersion:
\textbf{(a)}~For antisymmetric dispersion, the method requires two separate closed contours, one in the upper and one in the lower half-plane, using paths \(\mathcal{R}_1^{\pm}, \mathcal{R}_2^{\pm}\), \(\mathcal{S}^{\pm}\), and \(\mathcal{C}^{\pm}\).
\textbf{(b)}~For symmetric dispersion, a single closed contour suffices. This contour fully encloses the branch cut along the real axis, using paths \(\mathcal{R}^{\pm}\), \(\mathcal{S}\), and \(\mathcal{C}\).
}
    
    \label{fig:levinson}
\end{figure}

\subsection{Antisymmetric Dispersion}

For antisymmetric dispersion, the continuum covers the entire real axis. The integral in Eq.~\eqref{eqDeltaPhi} can be expressed as an integral along contours just above and below the real axis, as shown in Fig.~\ref{fig:levinson}(a):
\begin{equation}
 2\Delta \delta = \left(\int_{\mathcal{R}_1^-} + \int_{\mathcal{R}_2^-}\right) d\omega \frac{\partial_{\omega}J(\omega)}{J(\omega)} - \left(\int_{\mathcal{R}_1^+} + \int_{\mathcal{R}_2^+}\right) d\omega \frac{\partial_{\omega}J(\omega)}{J(\omega)}.
\end{equation}
We close these paths with large semicircles at infinity (\(\mathcal{S}^\pm\)) and small semicircles around the origin (\(\mathcal{C}^\pm\)). The integral can then be evaluated using the argument principle:
\begin{equation}
2 \Delta \delta = \left(\oint_{\text{upper}} + \oint_{\text{lower}}\right) d\omega\frac{\partial_{\omega}J(\omega)}{J(\omega)} - \left(\int_{\mathcal{S}^+ \cup \mathcal{S}^-} + \int_{\mathcal{C}^+ \cup \mathcal{C}^-}\right)  d\omega\frac{\partial_{\omega}J(\omega)}{J(\omega)}.\label{eqWinding}
\end{equation}
The contributions are as follows:
\begin{itemize}
    \item \textbf{Residues:} The closed contour integrals count the zeros of \(J(\omega)\). Since bound states physically exist for \(\mathrm{Re}[E]<0\), they correspond to zeros in the lower half-plane. The argument principle gives a contribution of \(-2\pi i \cdot N_B\) from the lower contour, so the winding is \(-2\pi N_B\).
    \item \textbf{Low-Energy Asymptotics (\(\omega \to 0\)):} The integral around the small semicircles \(\mathcal{C}^\pm\) measures the phase winding from the low-energy behavior \(J(\omega) \sim q_\omega^{\sum_{n_i\in \mathcal{N}} (-m+2n_i+1)}\). This gives a contribution of \(-2\pi \frac{\sum_{n_i\in \mathcal{N}} (-m+2n_i+1)}{m}\).
    \item \textbf{High-Energy Asymptotics (\(|\omega| \to \infty\)):} For emitter scattering, \(J(\omega) \sim (-\omega)^N\), and the integral over \(\mathcal{S}^\pm\) gives \(2\pi N\). For separable potentials, \(J(\omega)\) approaches a constant, and this integral is zero.
\end{itemize}
Combining these terms yields Eq.~\eqref{eqLevS} for the antisymmetric case.

\subsection{Symmetric Dispersion}
For symmetric dispersion, the continuum lies only on the positive real axis. The winding phase is therefore evaluated along a single contour that wraps around this branch cut, as shown in Fig.~\ref{fig:levinson}(b):
\begin{equation}
2 \Delta \delta =\oint d\omega\frac{\partial_{\omega}J(\omega)}{J(\omega)} - \int_{\mathcal{S}} d\omega\frac{\partial_{\omega}J(\omega)}{J(\omega)} -\int_{\mathcal{C}}  d\omega\frac{\partial_{\omega}J(\omega)}{J(\omega)}.
\end{equation}
The evaluation proceeds similarly:
\begin{itemize}
    \item \textbf{Residues:} The closed contour encloses all bound states on the physical sheet, giving a contribution of \(-2\pi N_B\).
    \item \textbf{Low-Energy Asymptotics (\(\omega \to 0\)):} The integral around the small circle \(\mathcal{C}\) is determined by the behavior \(J(\omega) \sim \prod_{i\in \mathcal{N}} \breve{L}_{m,2n_i}(\omega)\). The logarithmic divergence present in the borderline case contributes no net phase winding. The power-law terms give a contribution of \(-2\pi \frac{\sum_{n_i\in \mathcal{N}-\{n_c\}} (-m+2n_i+1)}{m}\). Notably, the term \(n_i=n_c\) contributes \(-2\pi\frac{-m+2n_c+1}{m}=0\), so the sum can be taken over all of \(\mathcal{N}\).
    
    \item \textbf{High-Energy Asymptotics (\(|\omega| \to \infty\)):} The integral over the large circle \(\mathcal{S}\) gives \(2\pi N\) for emitter scattering and \(0\) for separable potentials, identical to the antisymmetric case.
\end{itemize}
Again, combining these three contributions recovers Eq.~\eqref{eqLevS}, completing the proof for symmetric dispersion.

\begin{table*}[ht]
\centering
\caption{Complete set of values for \( n_{\mathrm{e}} \bmod m \) and \( n_{\mathrm{o}} \bmod m \) in symmetric dispersion systems. The table lists the values corresponding to all nontrivial universal S-matrix limiting behaviors in the even and odd channels for integer dispersion powers \(m\). The notation $Z_m$ refers to the set of integers $\{0, 1, \dots, m-1\}$.
\label{tableAppen}}

\begin{tabular}{r@{\hspace{1em}}l l !{\color{gray}\vrule} r@{\hspace{1em}}l l}
\toprule
$m$ & \( n_{\mathrm{e}} \bmod m \) & \( n_{\mathrm{o}} \bmod m \) & $m$ & \( n_{\mathrm{e}} \bmod m \) & \( n_{\mathrm{o}} \bmod m \) \\ 
\midrule
2   & $\{1\}$                         & N/A                                 & 16         & $Z_{16}-\{0, 4, 8\}$            & $Z_{16}-\{0, 8, 12\}$ \\
3   & $\{1\}$                         & $\{0\}$                             & 17         & $Z_{17}-\{3, 4, 7, 8, 12, 16\}$ & $Z_{17}-\{0, 6, 13\}$ \\
4   & $\{1\}$                         & $\{3\}$                             & 18         & $Z_{18}-\{2, 7, 11, 16\}$       & $\{0, 3, 4, 7, 8, 10, 11, 14, 15\}$ \\
5   & $\{0, 1\}$                      & $\{3\}$                             & 19         & $Z_{19}$                        & $Z_{19}-\{1, 4, 5, 8, 9, 12, 13, 16\}$ \\
6   & $\{0, 1, 5\}$                   & $\{3\}$                             & 21         & $Z_{21}-\{4, 8, 12, 16, 20\}$   & $Z_{21}-\{17\}$ \\
7   & $\{1, 5, 6\}$                   & $\{0, 3\}$                          & 22         & $Z_{22}-\{2, 11, 20\}$          & $Z_{22}-\{2, 5, 6, 9, 13, 16, 17, 20\}$ \\
8   & $\{1, 5, 6\}$                   & $\{2, 3, 7\}$                       & 23         & $Z_{23}$                        & $Z_{23}-\{1, 4, 5, 8, 12, 16, 20\}$ \\
9   & $\{0, 1, 5, 6\}$                & $\{1, 3, 7\}$                       & 25         & $Z_{25}-\{4, 8, 12\}$           & $Z_{25}$ \\
10  & $Z_{10}-\{2, 3, 7, 8\}$         & $\{0, 3, 7\}$                       & 26         & $Z_{26}$                        & $Z_{26}-\{2, 6, 9, 13, 17, 20, 24\}$ \\
11  & $Z_{11}-\{0, 2, 7, 8\}$         & $\{0, 3, 7, 10\}$                   & 27         & $Z_{27}$                        & $Z_{27}-\{4, 8, 12, 16, 20\}$ \\
12  & $Z_{12}-\{0, 4, 7, 8, 11\}$     & $Z_{12}-\{0, 1, 4, 5, 8\}$          & 30         & $Z_{30}$                        & $Z_{30}-\{2, 13, 17, 28\}$ \\
13  & $Z_{13}-\{3, 4, 7, 8, 11, 12\}$ & $Z_{13}-\{0, 2, 4, 6, 9, 12\}$      & 31         & $Z_{31}$                        & $Z_{31}-\{4, 8\}$ \\
14  & $Z_{14}-\{2, 3, 7, 11, 12\}$    & $\{0, 3, 4, 7, 10, 11\}$            & 34         & $Z_{34}$                        & $Z_{34}-\{17\}$ \\
15  & $Z_{15}-\{2, 11\}$              & $\{0, 3, 6, 7, 10, 11, 14\}$        &  $m\geq 35$          &  $Z_m$                               & $Z_m$ \\

\bottomrule
\end{tabular}
\end{table*}

\section{Complete Classification of Zero-Energy S-matrix
\label{AppendixTable}}

This section provides the complete classification of the integer numerators \( n_{\mathrm{e}} \pmod{m} \) and \( n_{\mathrm{o}} \pmod{m} \) that correspond to nontrivial universal S-matrix phases for symmetric dispersions with integer powers \(m \ge 2\). The definitive list of values is presented in Table~\ref{tableAppen}.

\section{Notation\label{secNotation}}

This section summarizes the mathematical notations used throughout the supplemental material.

\textbf{Asymptotic Behavior} ($\omega \to 0$)
\begin{itemize}
    \item \(f(\omega) \sim g(\omega)\) (asymptotically equivalent): The leading term of \(f\) is \(g\).
    \begin{equation*}
        \lim_{\omega\to 0} \frac{f(\omega)}{g(\omega)} = 1.
    \end{equation*}
    \item \(f(\omega) = \Theta(g(\omega))\) (same order): \(f\) is bounded above and below by constant multiples of \(g\).
    \begin{equation*}
        \lim_{\omega\to 0} \frac{f(\omega)}{g(\omega)} = \text{non-zero constant}.
    \end{equation*}
    \item \(f(\omega) = \mathrm{o}(g(\omega))\) (lower order): \(f\) vanishes faster than \(g\).
    \begin{equation*}
         \lim_{\omega\to 0} \frac{f(\omega)}{g(\omega)} = 0.
    \end{equation*}
     \item \(f(\omega) = \mathrm{O}(g(\omega))\) (at most same order): \(f\) is bounded above by a constant multiple of \(g\).
    \begin{equation*}
        \limsup_{\omega\to 0} \left| \frac{f(\omega)}{g(\omega)} \right| < \infty.
    \end{equation*}
\end{itemize}

\textbf{Matrix Notation}
\begin{itemize}
    \item \(\bm{J}^{(D)}\): The \(D\times D\) submatrix of \(\bm{J}\) consisting of the first \(D\) rows and columns.
    \item \(\bm{J}^{\perp}\): The \((N-D)\times (N-D)\) submatrix of \(\bm{J}\) consisting of the last \(N-D\) rows and columns.
    \item \(\bm{J}^{\mathrm{e}}, \bm{J}^{\mathrm{o}}\): The sub-matrices of \(\bm{J}\) where row and column indices correspond to basis vectors of even or odd parity, respectively.
    \item \(\bm{J}_{\cancel{ji}}\): The minor of \(\bm{J}\) formed by removing row \(j\) and column \(i\).
    \item \([J(\omega)^{-1}]_{ij}\): the $i,j$ matrix element of $\bm{J}(\omega)^{-1}$.
\end{itemize}

\textbf{Symbols and Parameters}
\begin{itemize}
    \item \(n_c \equiv (m-1)/2\): The critical integer order for the interaction.
    \item \(N_{\mathrm{e}}, N_{\mathrm{o}}\): The number of even and odd integers in the set \(\mathcal{N}\), respectively.
    \item \(q_{\omega} \equiv |\omega|^{1/m}e^{i\theta/m}\): The complex momentum associated with complex energy \(\omega = |\omega|e^{i\theta}\).
    \item \(k_{\omega} \equiv |\omega|^{1/m}\): The real-valued magnitude of the momentum.
\end{itemize}














\end{document}